\documentclass[%
 reprint,
 raggedbottom,
 amsmath,amssymb,
 aps,prx,
 longbibliography,
 floatfix,
 superscriptaddress,
]{revtex4-2}

\usepackage{graphicx}
\usepackage{dcolumn}
\usepackage{amsmath,amssymb,array}
\usepackage{bm}
\usepackage{quantikz}
\usepackage{amsthm}
\usepackage{mathtools}
\usepackage{braket}
\usepackage{placeins}
\usepackage{float}
\usepackage{mathrsfs}
\usepackage{enumitem}
\usepackage{tikz}
\usetikzlibrary{arrows.meta,positioning,calc,fit,backgrounds}

\usepackage[colorlinks=true,citecolor=blue,linkcolor=blue,urlcolor=blue]{hyperref}

\DeclareUnicodeCharacter{FF1C}{<}

\newtheorem{theorem}{Theorem}
\newtheorem{lemma}{Lemma}
\newtheorem{proposition}{Proposition}
\newtheorem{corollary}{Corollary}
\newtheorem{definition}{Definition}

\theoremstyle{remark}

\newcommand{\tr}{\mathrm{tr}}
\newcommand{\Tr}{\mathrm{Tr}}
\newcommand{\rank}{\mathrm{rank}}

\providecommand{\PSL}{\operatorname{PSL}}

\providecommand{\SL}{\operatorname{SL}}
\providecommand{\Sp}{\operatorname{Sp}}

\newcommand{\Loc}{L}

\newcommand{\Ksc}{K_{\mathrm{sc}}}

\begin{document}

\preprint{APS/PRXQ-DRAFT}

\title{Local Universality and Structural Certificates for Minimal Fixed-Depth Two-Qutrit Gate Decomposition}

\author{Yurui Liu}
\thanks{These authors contributed equally to this work.}
\affiliation{National Laboratory of Solid State Microstructures and School of Physics, Collaborative Innovation Center of Advanced Microstructures, Nanjing University, Nanjing 210093, China}
\affiliation{Jiangsu Key Laboratory of Quantum Information Science and Technology, Nanjing University, Suzhou 215163, China}
\author{Ruoting Dou}
\thanks{These authors contributed equally to this work.}
\affiliation{National Laboratory of Solid State Microstructures and School of Physics, Collaborative Innovation Center of Advanced Microstructures, Nanjing University, Nanjing 210093, China}
\affiliation{Jiangsu Key Laboratory of Quantum Information Science and Technology, Nanjing University, Suzhou 215163, China}
\author{Peng Xu}
\affiliation{National Laboratory of Solid State Microstructures and School of Physics, Collaborative Innovation Center of Advanced Microstructures, Nanjing University, Nanjing 210093, China}
\affiliation{Institute of Quantum Information and Technology, Nanjing University of Posts and Telecommunications, Nanjing 210003, China}
\affiliation{State Key Laboratory of Quantum Optics and Devices, Shanxi University, Taiyuan 030006, China}
\author{Xinsheng Tan}
\affiliation{National Laboratory of Solid State Microstructures and School of Physics, Collaborative Innovation Center of Advanced Microstructures, Nanjing University, Nanjing 210093, China}
\affiliation{Jiangsu Key Laboratory of Quantum Information Science and Technology, Nanjing University, Suzhou 215163, China}
\author{Shengjun Wu}
\email{sjwu@nju.edu.cn}
\affiliation{National Laboratory of Solid State Microstructures and School of Physics, Collaborative Innovation Center of Advanced Microstructures, Nanjing University, Nanjing 210093, China}
\affiliation{Jiangsu Key Laboratory of Quantum Information Science and Technology, Nanjing University, Suzhou 215163, China}
\author{Yang Yu}
\email{yuyang@nju.edu.cn}
\affiliation{National Laboratory of Solid State Microstructures and School of Physics, Collaborative Innovation Center of Advanced Microstructures, Nanjing University, Nanjing 210093, China}
\affiliation{Jiangsu Key Laboratory of Quantum Information Science and Technology, Nanjing University, Suzhou 215163, China}
\author{Zeng-Bing Chen}
\email{zbchen@nju.edu.cn}
\affiliation{National Laboratory of Solid State Microstructures and School of Physics, Collaborative Innovation Center of Advanced Microstructures, Nanjing University, Nanjing 210093, China}

\date{\today}
\begin{abstract}
We study a dimension-saturating fixed-core ansatz in which four copies of a fixed, non-tunable two-qutrit core \(K\in SU(9)\) are interleaved with five adjustable local layers from \(L=SU(3)\otimes SU(3)\).
Since \(\dim SU(9)=80\) and \(5\dim L=80\), this is the shortest fixed-core architecture not excluded by parameter counting.
We formulate the smooth map \(\Phi_K:L^5\to SU(9)\) and use its right-trivialized differential to give verifiable certificates for local universality.
We construct an explicit Clifford-word core whose Pauli-label splitting makes the identity-point differential an exact isometry, and we classify all 2304 symplectic actions satisfying the same splitting criterion.
We also prove a structural obstruction for an important symmetry class: every complex-symmetric core \(K=K^{T}\), including every core generated by a time-independent real-symmetric Hamiltonian in the chosen computational basis, has identity-point differential rank at most 78; hence any full-rank certificate for such a core must occur away from that point.
We then assess a hardware-motivated superconducting core generated by a noncommuting, temporally asymmetric drive. 
Direct calculation verifies \(K_{\rm sc}\neq K_{\rm sc}^{\mathsf T}\), and the core achieves \(F_{\rm avg}\ge0.999\) for all 1000 Haar-random targets tested under the stated restart protocol. 
We also report favorable sampled Jacobian-rank, structured-target, and robustness diagnostics.
These results establish local universality at the parameter-counting-minimal, dimension-saturating depth, with an exact Clifford certificate complemented by a hardware-motivated numerical case study.
Throughout, we separate exact local certificates from numerical evidence for broader synthesis performance.
\end{abstract}

\maketitle

\section{Introduction}
\label{sec:introduction}

Quantum gate decomposition expresses a target unitary as a finite sequence of implementable elementary operations.
For qubits, universality, Solovay--Kitaev approximation, Cartan/KAK decompositions, local invariants, and optimal CNOT-based two-qubit constructions give a mature synthesis theory~\cite{Barenco1995,NielsenChuang2010,DawsonNielsen2006,Khaneja2001,Makhlin2002,Zhang2003,VidalDawson2004,VatanWilliams2004,Mottonen2004,Shende2006}.
Three-level qutrits can reduce circuit complexity~\cite{WangHuSandersKais2020}.
For qutrits and higher-dimensional qudits, general universality and synthesis results are known~\cite{Brylinski2002,MuthukrishnanStroud2000,Bullock2005,Brennen2006,DiWei2013,NikolaevaKiktenkoFedorov2024,KiktenkoNikolaevaFedorov2025}, but explicit, short, and hardware-aware decompositions applicable to Haar-almost every two-qutrit gate remain less developed~\cite{DiZhangWei2008,Fischer2023,JiangLiuWei2024}.
This gap is increasingly relevant because superconducting platforms~\cite{Krantz2019,Kjaergaard2020,Blais2021}, particularly transmon circuits~\cite{Koch2007}, provide experimentally accessible higher levels that can encode qutrits, and recent experiments have demonstrated single-qutrit gates~\cite{Yu2026SingleQutrit}, qutrit randomized benchmarking, programmable two-qutrit processing, and two-qutrit entangling gates~\cite{Yurtalan2020,Morvan2021,Goss2022,Luo2023,Roy2023,Goss2024QutritProcessor,SubramanianLupascu2023}.

The present work is motivated by fixed-entangling-gate synthesis in superconducting qutrit circuits~\cite{Goss2022,Fischer2023,SubramanianLupascu2023,Roy2023}.
In particular, high-fidelity qutrit entangling gates have been used as fixed nonlocal elements interleaved with arbitrary local qutrit operations, and a six-layer fixed-entangling-gate network was shown numerically to synthesize \(1000\) Haar-random two-qutrit targets~\cite{Goss2022}.
This raises the minimal-depth question of whether the nonlocal depth can be reduced to the dimension-saturating four-core architecture, in which four identical two-qutrit cores are interleaved with five adjustable local layers.
Because \(SU(9)\) has dimension \(80\) and each local layer \(SU(3)\otimes SU(3)\) has dimension \(16\), the shallowest fixed-core architecture not ruled out by parameter counting contains four cores.
At this depth, however, parameter counting is only a necessary condition.
The five local layers provide exactly the right number of real parameters, but these parameters may still fail to open all directions in \(SU(9)\) if the fixed core has hidden symmetries or transports the local tangent directions into overlapping subspaces.
Thus the central issue is not simply how many local parameters are present, but whether a given core converts those parameters into independent nonlocal directions.
This viewpoint leads naturally to a differential certificate: evaluate the rank of the fixed-core synthesis map after trivializing the tangent space of \(SU(9)\).
If the rank is full at one point, the image contains an open neighborhood of the corresponding gate.
If the rank is deficient everywhere, the architecture cannot be locally universal.

Our main contributions are fourfold.
First, we formulate the four-core, five-local-layer ansatz as a smooth map \(\Phi_K:L^5\to SU(9)\) and derive a full-rank differential certificate for local universality.
Second, we construct an explicit Clifford-word core \(K_{\rm Cl}\) whose Pauli-label dynamics gives an exact full-rank certificate at the identity local point; with respect to the product Hilbert--Schmidt metric, the certified differential is an exact isometry and has condition number one.
Third, we prove that every complex-symmetric core \(K=K^{T}\)---including every core generated by a time-independent real Hamiltonian and every core locally conjugate to a complex-symmetric core---has differential rank at most \(78\) at the identity local point. 
Thus no core in this class admits an identity-local-point full-rank certificate, although the theorem does not constrain other local points.
Fourth, we test a hardware-motivated superconducting Hamiltonian core \(K_{\rm sc}\) generated by a temporally asymmetric, time-dependent drive. 
It is not complex symmetric, so the rank-\(78\) theorem does not apply to it. 
In the reported numerical tests, no sampled Jacobian falls below the stated rank tolerance, and the same fixed core achieves \(F_{\rm avg}\ge0.999\) for all 1000 Haar-random targets under the prescribed restart protocol; we also evaluate structured targets and shared static coefficient offsets.
We therefore pair an exact structural certificate for a minimal tight architecture with a hardware-motivated numerical core that probes how much of this tangent-spreading behavior survives in a compact physical model.
Our results have three distinct logical statuses.
The Clifford Pauli-label splitting, the identity-point isometry, and the symmetric-core rank bound are exact statements.
The pointwise, synthesis, and regular-pair computations are finite-precision diagnostics.
Global surjectivity of the four-core map remains open.
From a practical perspective, the framework provides geometric criteria for selecting fixed entangling cores and designing compact, hardware-motivated two-qutrit synthesis protocols.

\section{FIXED-CORE MAP AND LOCAL CERTIFICATE}
\label{sec:geometry}

\subsection{Dimension-saturating ansatz}
\label{subsec:minimal-ansatz}

Let
\begin{equation}
    G:=SU(9),\qquad \mathfrak g:=\mathfrak{su}(9),
\end{equation}
and let
\begin{equation}
    L:=SU(3)\otimes SU(3),\quad
    \mathfrak l:=\mathrm{Lie}(L)\cong \mathfrak{su}(3)\oplus\mathfrak{su}(3).
\end{equation}
The fixed-core map studied here is
\begin{equation}
\begin{split}
  \Phi_K:L^5&\longrightarrow SU(9),\\
  (L_1,\ldots,L_5)&\longmapsto L_5KL_4KL_3KL_2KL_1 .
\end{split}
\label{eq:fixed-core-map}
\end{equation}
It contains four applications of a parameterless nonlocal core \(K\in SU(9)\) and five adjustable local layers \(L_i\in L\); see Fig.~\ref{fig:four-core-ansatz}.
Four cores constitute the smallest fixed-core depth not excluded by dimension counting.
Since \(\dim SU(9)=80\) and \(\dim L=16\), a depth-\(\ell\) fixed-core architecture has \(16(\ell+1)\) tunable local parameters, and \(\ell=4\) is the first depth for which this number can reach \(80\).
Thus Eq.~\eqref{eq:fixed-core-map} is the minimal dimension-saturating fixed-core ansatz for two-qutrit synthesis.

\begin{figure*}[t]
\centering
\begin{quantikz}[column sep=0.45cm]
\lstick{$\ket{q_1}$} & \gate{u_{1,1}} & \gate[2,style={fill=blue!12}]{K} & \gate{u_{2,1}} & \gate[2,style={fill=blue!12}]{K} & \gate{u_{3,1}} & \gate[2,style={fill=blue!12}]{K} & \gate{u_{4,1}} & \gate[2,style={fill=blue!12}]{K} & \gate{u_{5,1}} & \qw \\
\lstick{$\ket{q_2}$} & \gate{u_{1,2}} &                                      & \gate{u_{2,2}} &                                      & \gate{u_{3,2}} &                                      & \gate{u_{4,2}} &                                      & \gate{u_{5,2}} & \qw
\end{quantikz}
\caption{Dimension-saturating four-core ansatz. Each local layer is \(L_i=u_{i,1}\otimes u_{i,2}\in SU(3)\otimes SU(3)\), and the nonlocal core \(K\in SU(9)\) is fixed.}
\label{fig:four-core-ansatz}
\end{figure*}

\subsection{Differential certificate}
\label{subsec:right-trivialized-differential}

For \(U\in SU(9)\), we use right trivialization of the target tangent space,
\begin{equation}
    \tau_U^R(\dot U)=\dot U U^{-1}.
    \label{eq:right-trivialization-target}
\end{equation}
For the local layers, write \(X_i=L_i^{-1}\dot L_i\in\mathfrak l\).
Let the prefix products be
\begin{align}
    P_5(\mathbf L)&:=L_5,\nonumber\\
    P_4(\mathbf L)&:=L_5KL_4,\nonumber\\
    P_3(\mathbf L)&:=L_5KL_4KL_3,\nonumber\\
    P_2(\mathbf L)&:=L_5KL_4KL_3KL_2,\nonumber\\
    P_1(\mathbf L)&:=L_5KL_4KL_3KL_2KL_1=\Phi_K(\mathbf L).
    \label{eq:prefix-products}
\end{align}
The right-trivialized differential is the linear map
\begin{equation}
    A_K(\mathbf L)(X_1,\ldots,X_5)
    =
    \sum_{i=1}^{5}
    \operatorname{Ad}_{P_i(\mathbf L)}(X_i),
    \label{eq:AK-definition}
\end{equation}
and \(\operatorname{rank}D\Phi_K(\mathbf L)=\operatorname{rank}A_K(\mathbf L)\).
At the identity local point \(\boldsymbol I=(I,I,I,I,I)\), this reduces to
\begin{equation}
    A_K(\boldsymbol I)(X_1,\ldots,X_5)
    =
    \sum_{i=1}^{5} \operatorname{Ad}_{K^{5-i}}(X_i).
    \label{eq:AK-identity}
\end{equation}
Thus the fixed-core circuit is locally expressive exactly when five transported copies of the local Lie algebra span \(\mathfrak{su}(9)\).
Equivalently,
\begin{equation}
    \operatorname{im}A_K(\mathbf L)
    =
    \sum_{i=1}^{5}
    \operatorname{Ad}_{P_i(\mathbf L)}(\mathfrak l)
    \subseteq \mathfrak g .
    \label{eq:transported-local-sum}
\end{equation}
The role of \(K\) is therefore to rotate the five local tangent spaces into sufficiently independent directions.
This is the geometric object tested in all later examples.

\begin{proposition}[Right-trivialized differential]
\label{prop:right-trivialized-differential}
For \(U=\Phi_K(\mathbf L)\) and \(X_i=L_i^{-1}\dot L_i\in\mathfrak l\),
\[
    \tau_U^R\!\left(D\Phi_K(\mathbf L)[\dot L_1,\ldots,\dot L_5]\right)
    =
    \sum_{i=1}^{5}\operatorname{Ad}_{P_i(\mathbf L)}(X_i).
\]
Hence \(D\Phi_K(\mathbf L)\) has full rank if and only if \(A_K(\mathbf L)\) has full rank.
\end{proposition}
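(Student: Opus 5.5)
The plan is to differentiate the product $\Phi_K(\mathbf L)=L_5KL_4KL_3KL_2KL_1$ with the Leibniz rule along a smooth curve $t\mapsto\mathbf L(t)$ through $\mathbf L$ with velocity $(\dot L_1,\ldots,\dot L_5)$. This gives a sum of five terms. In the $i$-th term only $L_i$ is replaced by $\dot L_i$.

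The key step is to rewrite each term so that it visibly factors through $P_i$. Substituting $\dot L_i=L_iX_i$, the $i$-th term equals
\[
P_i(\mathbf L)\,X_i\,\bigl(P_i(\mathbf L)^{-1}U\bigr),
\]
where $U=\Phi_K(\mathbf L)$. This works because the factors to the left of and including $L_i$ form exactly the prefix product $P_i$ of Eq.~\eqref{eq:prefix-products}, and the remaining right factor is $KL_{i-1}\cdots KL_1=P_i^{-1}U$. The latter is well defined since every factor is invertible.

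Right-multiplying by $U^{-1}$ turns each term into $P_iX_iP_i^{-1}=\operatorname{Ad}_{P_i}(X_i)$. Summing over $i$ gives the displayed formula, which is exactly $A_K(\mathbf L)$ in Eq.~\eqref{eq:AK-definition}.

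For the rank statement, I would observe two facts.
\begin{itemize}
\item The map $\dot L_i\mapsto X_i=L_i^{-1}\dot L_i$ is the left trivialization $T_{L_i}L\to\mathfrak l$, so it is a linear isomorphism. Taking the product over $i$ gives an isomorphism $T_{\mathbf L}L^5\cong\mathfrak l^5$.
\item The map $\tau_U^R$ is a linear isomorphism $T_USU(9)\to\mathfrak g$, with inverse $Y\mapsto YU$.
\end{itemize}
Hence $A_K(\mathbf L)=\tau_U^R\circ D\Phi_K(\mathbf L)\circ(\text{left trivialization})^{-1}$. Composing with isomorphisms on both sides preserves rank. In particular, one has full rank (rank $80$, i.e.\ surjective onto $\mathfrak g$) if and only if the other does.

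The only step needing care is the index bookkeeping in the middle step. One must check that the left factor of the $i$-th term coincides exactly with $P_i$ as defined, including the edge cases $i=5$ (where $P_5=L_5$) and $i=1$ (where the right factor is trivial). I would also note that $\operatorname{Ad}_{P_i}$ preserves $\mathfrak g$, so the image indeed lies in $\mathfrak{su}(9)$. The identity-point formula Eq.~\eqref{eq:AK-identity} then follows by setting all $L_i=I$, which gives $P_i=K^{5-i}$.
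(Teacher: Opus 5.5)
Your proof is correct and follows the same route as the paper. You differentiate the product by the Leibniz rule, rewrite the $i$-th term as $P_i X_i P_i^{-1}U$ so that right multiplication by $U^{-1}$ leaves $\operatorname{Ad}_{P_i}(X_i)$, and conclude by rank preservation under the trivialization isomorphisms. You also account explicitly for the left trivialization on the domain side, whereas the paper's one-line rank argument cites only $\tau_U^R$ on the target.
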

The proof is a direct product differentiation and is given in Appendix~\ref{app:differential-proofs}.

Choose orthonormal bases of \(\mathfrak l^5\) and \(\mathfrak g\) with respect to the real Hilbert--Schmidt form \(\langle X,Y\rangle=-\operatorname{Tr}(XY)\), and let \(M_K(\mathbf L)\in\mathbb R^{80\times80}\) represent \(A_K(\mathbf L)\).
The local certificate is
\begin{equation}
    \sigma_{\min}\!\left(M_K(\mathbf L_\ast)\right)>0
    \Rightarrow
    \Phi_K \text{ is locally surjective near } \mathbf L_\ast .
    \label{eq:smin-certificate}
\end{equation}
At any fixed point \(\mathbf L_\ast\), an exact positive value \(\sigma_{\min}(M_K(\mathbf L_\ast))>0\) is equivalent to full rank of \(D\Phi_K(\mathbf L_\ast)\).
The inverse function theorem then implies that the image contains an open neighborhood of \(\Phi_K(\mathbf L_\ast)\).
We record this as a theorem because it is the rigorous local-universality statement used below.
\begin{theorem}[Full rank implies local universality]
\label{thm:full-rank-local-universality}
If there exists \(\mathbf L_\ast\in L^5\) such that
\(\sigma_{\min}(M_K(\mathbf L_\ast))>0\), then \(\Phi_K\) is locally
surjective near \(\mathbf L_\ast\).  In particular,
\(\operatorname{Im}\Phi_K\) contains an open subset of \(SU(9)\) and has
positive Haar measure.
\end{theorem}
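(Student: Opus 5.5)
The plan is to reduce the statement to the inverse function theorem for a smooth map between manifolds of equal dimension, using Proposition~\ref{prop:right-trivialized-differential} as the bridge. First I will note that \(L^5\) and \(SU(9)\) are both smooth real manifolds of dimension \(80\) and that \(\Phi_K\) is smooth (real-analytic, in fact), since it is a composition of group multiplications with the fixed element \(K\). The matrix \(M_K(\mathbf L_\ast)\) represents \(A_K(\mathbf L_\ast)\) in orthonormal bases, so \(\sigma_{\min}(M_K(\mathbf L_\ast))>0\) is equivalent to \(M_K(\mathbf L_\ast)\) being invertible, and hence to \(A_K(\mathbf L_\ast)\) being a linear isomorphism \(\mathfrak l^5\to\mathfrak g\).

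Next I will transport this back to \(D\Phi_K(\mathbf L_\ast)\). The map \(\dot L_i\mapsto X_i=L_i^{-1}\dot L_i\) is a linear isomorphism \(T_{L_i}L\to\mathfrak l\), and \(\tau^R_U\) is a linear isomorphism \(T_USU(9)\to\mathfrak g\). By Proposition~\ref{prop:right-trivialized-differential}, \(D\Phi_K(\mathbf L_\ast)\) is the composition of these isomorphisms with \(A_K(\mathbf L_\ast)\). It is therefore a bijection \(T_{\mathbf L_\ast}L^5\to T_{\Phi_K(\mathbf L_\ast)}SU(9)\).

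Now I apply the inverse function theorem in local charts around \(\mathbf L_\ast\) and \(U_\ast=\Phi_K(\mathbf L_\ast)\). It gives open neighborhoods \(\mathcal V\ni\mathbf L_\ast\) and \(\mathcal W\ni U_\ast\) such that \(\Phi_K|_{\mathcal V}:\mathcal V\to\mathcal W\) is a diffeomorphism. This is exactly local surjectivity near \(\mathbf L_\ast\), and it shows that \(\operatorname{Im}\Phi_K\supseteq\mathcal W\) contains a nonempty open set. For the final claim, I will use that Haar measure on a compact Lie group is given by a smooth positive volume form. Equivalently, it charges every nonempty open set, because finitely many translates of \(\mathcal W\) cover \(SU(9)\) by compactness, and invariance then forces \(\mu(\mathcal W)>0\).

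There is no real obstacle here. The only point needing care is checking that the trivializations are isomorphisms, so that rank is preserved; Proposition~\ref{prop:right-trivialized-differential} already supplies this. I will also remark that the hypothesis must be an exact positivity statement: a finite-precision estimate of \(\sigma_{\min}\) only certifies the theorem if it is backed by a rigorous error bound.
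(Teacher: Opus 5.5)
Your proposal is correct and follows essentially the same route as the paper's proof in Appendix~\ref{app:differential-proofs}: both use the fact that the trivializations are linear isomorphisms (Proposition~\ref{prop:right-trivialized-differential}), so $\sigma_{\min}(M_K(\mathbf L_\ast))>0$ means $D\Phi_K(\mathbf L_\ast)$ has full rank, and then both apply the inverse function theorem between the two $80$-dimensional manifolds. Your translate-covering argument for positive Haar measure spells out a step the paper leaves implicit; note that the image is compact, hence Borel, so its measure is defined.
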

The theorem follows from the inverse function theorem; see Appendix~\ref{app:differential-proofs}.

\paragraph*{Regularity at the identity local point for Haar-almost every core.}
The full-rank condition at the identity local point holds for Haar-almost every core once it is satisfied by one core.
Indeed, after choosing bases, the determinant of the differential matrix at the identity local point, \(M_K(\boldsymbol I)\), is a real-analytic function of \(K\in SU(9)\).
Since the Clifford construction below gives one core with \(\det M_K(\boldsymbol I)\ne0\), this analytic function is not identically zero.
Hence the set
\[
\{K\in SU(9):\operatorname{rank}M_K(\boldsymbol I)<80\}
\]
has Haar measure zero.
Thus Haar-almost every core is locally regular at the identity local point.
The Clifford construction is valuable not because identity-point regularity is rare, but because it provides an exact finite-field certificate and a classifiable reference family. 
The measure-zero exceptional set nevertheless contains structurally significant core families.
As shown in Appendix~\ref{app:differential-proofs} (Proposition~\ref{prop:symmetric-core-identity-critical}), it contains every complex-symmetric core \(K=K^{T}\): for such cores \(\operatorname{rank}M_K(\boldsymbol I)\le78\).
Cores generated by real time-independent Hamiltonians are complex symmetric, and the bound extends to every core that is conjugate to a complex-symmetric core by a local unitary.
The hardware-motivated core of Sec.~\ref{sec:superconducting} is designed to leave this class: its evolution is noncommuting and temporally asymmetric, and direct calculation verifies \(K_{\rm sc}\ne K_{\rm sc}^{T}\). Consequently, Proposition~\ref{prop:symmetric-core-identity-critical} does not determine its identity-local-point rank, which is instead evaluated directly as a numerical diagnostic.

We also use two pointwise diagnostics in the numerical sections: the smallest singular value \(\sigma_{\min}(A_{\rm pt})\) of the sampled pointwise differential matrix and the effective tangent-volume indicator \(v_{5,\mathrm{eff}}\), defined below as the product of all \(80\) singular values of \(A_{\rm pt}\).
These diagnostics measure the weakest opened tangent direction and the local tangent-volume scale, respectively.
The numerical implementation uses the pointwise convention
\begin{equation}
    \Phi^{\rm pt}_{5,K}(L_1,\ldots,L_5)
    =
    L_5KL_4KL_3KL_2KL_1K ,
    \label{eq:pointwise-map}
\end{equation}
which differs from Eq.~\eqref{eq:fixed-core-map} by a fixed right multiplication by \(K\).
This does not change the intrinsic rank interpretation because right multiplication by a fixed unitary is a diffeomorphism of \(SU(9)\).
At a sampled point, the corresponding right-trivialized differential is represented by
\[
    A_{\rm pt}(K,\mathbf L)\in\mathbb R^{80\times80}.
\]
The Gram matrix
\[
    \gamma_{\rm pt}(K,\mathbf L)=A_{\rm pt}(K,\mathbf L)A_{\rm pt}(K,\mathbf L)^T
\]
has eigenvalues equal to the squared singular values of \(A_{\rm pt}\).
We report \(\sigma_{\min}(A_{\rm pt})\) as the weakest opened direction and
\[
    v_{5,\mathrm{eff}}(K,\mathbf L)
    :=
    \prod_{r=1}^{80}\sigma_r(A_{\rm pt}(K,\mathbf L))
\]
as the local tangent-volume indicator.
This product vanishes if any singular value vanishes.
It is not a new mathematical certificate beyond full rank, but it is useful for comparing sampled tangent maps that show no rank collapse yet are differently conditioned.

\subsection{Local universality versus global surjectivity}
\label{subsec:local-global-critical}

The certificate is local.
Since \(L^5\) is compact, \(\operatorname{Im}\Phi_K\) is closed, but a full-rank point only proves that the image has nonempty interior.
A global-surjectivity proof requires additional control of the global image and its critical fibers; one sufficient route is to show that every reachable value has at least one regular preimage.
This distinction is used below: the Clifford-word construction gives an exact local certificate, whereas the superconducting core is supported by pointwise numerical and synthesis diagnostics.
We use the notation
\begin{align}
    \Sigma_K
    &:=
    \{\mathbf L\in L^5:\operatorname{rank}D\Phi_K(\mathbf L)<80\},
    \label{eq:critical-set}\\
    C_K
    &:=
    \Phi_K(\Sigma_K)
    \label{eq:critical-values}
\end{align}
for the critical set and critical value set.
Two elementary facts organize the global question.
First, \(\operatorname{Im}\Phi_K\) is compact and therefore closed in \(SU(9)\).
Second, if every reachable value has at least one regular preimage, then the image is also open; since \(SU(9)\) is connected, this would imply \(\operatorname{Im}\Phi_K=SU(9)\).
Thus boundary points, if any, must be critical values whose entire fiber is critical.
This is the reason regular partners matter: a critical value that also has a regular preimage cannot be a local image wall.
The closed-image and regular-preimage criteria, together with the reason a nonzero-degree argument does not directly settle the global question for this family of maps, are recorded in Appendix~\ref{app:differential-proofs}.

\section{CLIFFORD-WORD CORE WITH AN EXACT LOCAL CERTIFICATE}
\label{sec:clifford}

\subsection{Pauli-label splitting certificate}
\label{subsec:pauli-label-splitting}

The Clifford construction reduces the differential-rank problem to finite arithmetic.
Let \(V=\mathbb F_3^4\), with labels \(v=(a_1,a_2\mid b_1,b_2)\) for two-qutrit Pauli directions.
Define the one-qutrit label planes
\begin{equation}
    \begin{aligned}
    V_A &=\{(a,0\mid b,0):a,b\in\mathbb F_3\}, \\
    V_B &=\{(0,a\mid 0,b):a,b\in\mathbb F_3\},
    \end{aligned}
\end{equation}
and the local label block
\begin{equation}
    \Lambda_{\rm loc}:=(V_A\setminus\{0\})\cup(V_B\setminus\{0\}).
    \label{eq:local-label-block}
\end{equation}
For a Clifford core \(K\), conjugation induces a symplectic action \(S_K\in Sp(4,\mathbb F_3)\) on Pauli labels~\cite{Gottesman1999,HostensDehaeneDeMoor2005}.
If
\begin{equation}
    V\setminus\{0\}=\bigsqcup_{t=0}^4S_K^t(\Lambda_{\rm loc}),
    \label{eq:five-block-splitting-condition}
\end{equation}
then the five transported local Pauli-label blocks form an orthogonal decomposition of the \(80\)-dimensional traceless Pauli label space.
Consequently,
\begin{equation}
    \operatorname{rank}A_K(\boldsymbol I)=80.
    \label{eq:identity-local-point-full-rank}
\end{equation}
This Pauli-label splitting criterion is an exact finite certificate for local universality at the identity local point.
It is stronger than a numerical singular-value calculation: once the finite partition in Eq.~\eqref{eq:five-block-splitting-condition} is verified, the transported Pauli subspaces are disjoint by construction and their dimensions add to \(5\cdot16=80\).
The only continuous input is the fact that a determinant-one Clifford lift implements the corresponding symplectic label action by conjugation on Pauli directions.
Thus the Jacobian-rank statement is reduced to a finite-field partition problem over \(V\setminus\{0\}\).
The sign-pair real-basis convention and the full five-block label table are given in Appendix~\ref{app:explicit-label-splitting}.

\subsection{Explicit short-word representative}
\label{subsec:short-clifford-word}

Let \(\omega=e^{2\pi i/3}\), let \(F\) be the one-qutrit Fourier transform, let \(P=\operatorname{diag}(1,1,\omega)\), and let
\(\operatorname{SUM}|x,y\rangle=|x,x+y\rangle\) over \(\mathbb F_3\).
With the rightmost factor acting first, define
\begin{equation}
    K_{\rm Cl}
    =
    e^{5i\pi/6}
    \mathrm{SUM}^{\dagger}P_2\mathrm{SUM}F_1\mathrm{SUM}.
    \label{eq:KCl-definition}
\end{equation}
The phase fixes the determinant-one representative and does not affect the adjoint action.
In the label order \((a_1,a_2\mid b_1,b_2)\), its symplectic matrix is
\begin{equation}
    S_{K_{\rm Cl}}
    =
    \begin{pmatrix}
        0&0&2&1\\
        1&1&0&0\\
        2&1&2&1\\
        1&1&2&2
    \end{pmatrix}
    \in Sp(4,\mathbb F_3),
    \label{eq:SKCl-matrix}
\end{equation}
with \(S_{K_{\rm Cl}}^5=I_4\).
A finite-field check gives the disjoint splitting
\begin{equation}
    V\setminus\{0\}
    =
    \Lambda_0\sqcup\Lambda_1\sqcup\Lambda_2\sqcup\Lambda_3\sqcup\Lambda_4,
    \quad
    \Lambda_t=S_{K_{\rm Cl}}^t(\Lambda_{\rm loc}).
    \label{eq:KCl-five-block-splitting}
\end{equation}

\begin{theorem}[Short Clifford-word local certificate]
\label{thm:short-clifford-word-local-certificate}
The four-core map generated by \(K_{\rm Cl}\),
\[
    \Phi_{K_{\rm Cl}}(L_1,\ldots,L_5)
    =
    L_5K_{\rm Cl}L_4K_{\rm Cl}L_3K_{\rm Cl}L_2K_{\rm Cl}L_1,
\]
has full-rank differential at the identity local point:
\begin{equation}
    \operatorname{rank}A_{K_{\rm Cl}}(\boldsymbol I)=80.
\end{equation}
Therefore \(\Phi_{K_{\rm Cl}}\) is locally surjective near \(\boldsymbol I\).
\end{theorem}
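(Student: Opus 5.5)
The plan is to reduce the rank statement to the finite-field splitting~\eqref{eq:KCl-five-block-splitting} via the Weyl--Heisenberg basis of \(\mathfrak{su}(9)\), and then invoke Theorem~\ref{thm:full-rank-local-universality}.

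\emph{Step 1: basis.} For \(v=(a_1,a_2\mid b_1,b_2)\in V\), let \(W_v=X^{a_1}Z^{b_1}\otimes X^{a_2}Z^{b_2}\), up to a fixed phase convention. The eighty matrices \(\{W_v\}_{v\neq0}\) form an orthogonal basis of the traceless \(9\times9\) matrices under the Hilbert--Schmidt form, because \(\operatorname{Tr}(W_u^\dagger W_v)=9\,\delta_{uv}\). Since \(W_{-v}\propto W_v^\dagger\), the real span of \(iW_v\) and \(iW_v^\dagger\), taken over each pair \(\{v,-v\}\), gives a real 2-dimensional subspace \(E_{\{v,-v\}}\subset\mathfrak{su}(9)\). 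These forty subspaces are mutually orthogonal and sum to \(\mathfrak{su}(9)\). The local algebra is
\[
\mathfrak l=\mathfrak{su}(3)\otimes I+I\otimes\mathfrak{su}(3)=\bigoplus E_{\{v,-v\}},
\]
where the direct sum runs over \(v\in\Lambda_{\rm loc}\) modulo sign. This holds because the nonzero labels in \(V_A\) give \(8\) traceless one-qutrit Paulis on the first factor, and those in \(V_B\) give \(8\) on the second, for a total of \(16\) real dimensions. The set \(\Lambda_{\rm loc}\) is closed under \(v\mapsto -v\), so this description is consistent.

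\emph{Step 2: transport.} \(K_{\rm Cl}\) is a product of Clifford gates \(\mathrm{SUM}\), \(F_1\), \(P_2\) and their inverses. Each of these maps Paulis to phase multiples of Paulis via a symplectic label map, so \(K_{\rm Cl}W_vK_{\rm Cl}^\dagger=\chi(v)W_{S v}\) with \(|\chi(v)|=1\) and \(S=S_{K_{\rm Cl}}\). I will confirm the matrix~\eqref{eq:SKCl-matrix} by composing the label actions of the individual gates in the stated order, acting on \(X_j,Z_j\) generators. The global phase \(e^{5i\pi/6}\) does not affect \(\operatorname{Ad}\). It follows that \(\operatorname{Ad}_{K^{t}}(E_{\{v,-v\}})=E_{\{S^tv,-S^tv\}}\): the phases only rotate within the real 2-plane, and \(S^t(-v)=-S^tv\). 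Therefore \(\operatorname{Ad}_{K^{t}}(\mathfrak l)=\bigoplus_{v\in S^t\Lambda_{\rm loc}}E_{\{v,-v\}}\).

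\emph{Step 3: splitting and conclusion.} By~\eqref{eq:AK-identity}, \(\operatorname{im}A_{K_{\rm Cl}}(\boldsymbol I)=\sum_{t=0}^4\operatorname{Ad}_{K^t}(\mathfrak l)\), with \(t=5-i\). By~\eqref{eq:KCl-five-block-splitting}, the label sets \(S^t\Lambda_{\rm loc}\) for \(t=0,\dots,4\) partition \(V\setminus\{0\}\). The summands are therefore mutually orthogonal and together exhaust all forty planes, so the image is all of \(\mathfrak{su}(9)\) and the rank is \(80\). Because each \(\operatorname{Ad}\) is an isometry and the summands are orthogonal, \(A_{K_{\rm Cl}}(\boldsymbol I)\) is in fact an isometry. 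Hence \(\sigma_{\min}(M_{K_{\rm Cl}}(\boldsymbol I))=1>0\), and Theorem~\ref{thm:full-rank-local-universality} gives local surjectivity near \(\boldsymbol I\).

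The main obstacle is the finite verification. This consists of computing \(S\), checking that it lies in \(Sp(4,\mathbb F_3)\), and confirming that the five orbits of the \(16\)-element set \(\Lambda_{\rm loc}\) are pairwise disjoint, which by counting (\(5\cdot16=80\)) suffices. I would do this by first computing \(S^t\) for \(t\le4\) and applying each to the four spanning vectors \(e_{a_1},e_{b_1}\) of \(V_A\) and \(e_{a_2},e_{b_2}\) of \(V_B\). Then I would check, for \(s\neq t\), that \(S^{t-s}\) maps no nonzero vector of \(V_A\cup V_B\) back into \(V_A\cup V_B\). This amounts to an exhaustive check over \(16\) vectors for each of the four exponent differences \(1,\dots,4\).
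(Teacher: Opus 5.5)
Your proposal is correct and is essentially the paper's argument. It uses the same ingredients: Hilbert--Schmidt-orthogonal sign-pair Pauli planes, Clifford transport of $\mathfrak l$ onto the planes labelled by $S^t\Lambda_{\rm loc}$, and a finite disjointness-plus-counting verification of the five-block splitting. Your check of $\Lambda_{\rm loc}\cap S^d\Lambda_{\rm loc}=\varnothing$ for $d=1,\dots,4$ is a more economical form of the paper's explicit block listing, and your isometry upgrade matches Corollary~\ref{cor:identity-point-isometry}.

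One wording slip: qutrit Weyl operators are not Hermitian, so $iW_v\notin\mathfrak{su}(9)$ in general. The real two-plane attached to $\{v,-v\}$ should be $\mathfrak{su}(9)\cap\operatorname{span}_{\mathbb C}\{W_v,W_v^\dagger\}$, which is spanned by $i(W_v+W_v^\dagger)$ and $W_v-W_v^\dagger$. Nothing else in the argument changes.
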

The theorem follows immediately from the splitting in Eq.~\eqref{eq:KCl-five-block-splitting} and the Pauli-label criterion above; the full finite verification is in Appendix~\ref{app:explicit-label-splitting}.
The splitting in fact yields strictly more than full rank.
Distinct Pauli labels index Hilbert--Schmidt-orthogonal directions, so the five transported blocks span mutually orthogonal subspaces, and each block is reached from \(\mathfrak l\) by the isometry \(\operatorname{Ad}_{K_{\rm Cl}^t}\).
Consequently \(A_{K_{\rm Cl}}(\boldsymbol I)\) is an exact isometry of \(\mathfrak l^5\) onto \(\mathfrak{su}(9)\): all \(80\) singular values of \(M_{K_{\rm Cl}}(\boldsymbol I)\) equal \(1\), and \(\sigma_{\min}(M_{K_{\rm Cl}}(\boldsymbol I))=1\) exactly (Corollary~\ref{cor:identity-point-isometry} in Appendix~\ref{app:explicit-label-splitting}).
The identity local point is therefore not merely regular but perfectly conditioned, which makes the broad conditioning tails observed at sampled configurations (Appendix~\ref{app:clifford-diagnostics}) a genuinely pointwise phenomenon rather than a weakness of the certificate.
Because \(K_{\rm Cl}^5=I_9\), the identity local point maps to \(K_{\rm Cl}^{-1}\).
The theorem therefore proves that the four-core reachable set contains an open neighborhood of this explicit two-qutrit Clifford gate.
It does not prove that every element of \(SU(9)\) is reachable; the obstruction, if present, would have to arise from global fiber structure rather than from the local tangent rank at the certified point.
The finite classification yields \(2304\) good symplectic actions, divided equally between elements of orders \(5\) and \(10\).
After choosing compatible determinant-one representatives, all good Clifford actions admit lifts in a single two-sided local double coset and therefore define the same reachable set after reparameterization of the local layers.
Arbitrary determinant-one lifts may additionally differ by a central ninth root of unity.
Concretely, if two good lifts satisfy \(K_j=AK_iB\) with \(A,B\in L\), then
\[
\begin{aligned}
L_5K_jL_4K_jL_3K_jL_2K_jL_1
&=
(L_5A)K_i(BL_4A)K_i(BL_3A)\\
&\quad \times K_i(BL_2A)K_i(BL_1),
\end{aligned}
\]
and every parenthesized factor is again local.
The inverse equivalence gives the reverse inclusion of reachable sets.
Thus the short word \(K_{\rm Cl}\) can be used as a canonical representative of the full good Clifford-word class.
The classification, order split, and local-equivalence proof are given in Appendix~\ref{app:clifford-family}.

The same inversion relation also gives a small exact global reachability statement.
As shown in Appendix~\ref{subsec:inversion-symmetry-reachable-set}, one has
\[
L\subseteq \operatorname{Im}\Phi_{K_{\rm Cl}}.
\]
Thus all purely local two-qutrit gates are exactly reachable by the same four-core Clifford architecture.
This does not imply global surjectivity, since \(L\) is a lower-dimensional subgroup of \(SU(9)\), but it provides an exact global consistency check complementary to the local full-rank certificate.

\subsection{Numerical Clifford-core diagnostics}
\label{subsec:clifford-pointwise-diagnostics}

We also tested the numerical conditioning of the same core away from the identity local point.
For each of the \(2304\) certifying Clifford cores, we evaluated a shared pool of \(100\) independently Haar-sampled configurations in \(L^5\), giving \(230{,}400\) core--configuration pairs in total.
No sampled rank collapse is observed in this data set; the smallest recorded value of \(\sigma_{\min}(A_{\rm pt})\) is \(4.08\times10^{-9}\).
This finite-precision pointwise diagnostic should be interpreted as sampled conditioning evidence rather than a uniform analytic lower bound over \(L^5\).
For the synthesis benchmark, gate quality is reported using the standard average gate fidelity~\cite{PedersenMollerMolmer2007}
\begin{equation}
F_{\rm avg}(U,V)=
\frac{|\operatorname{Tr}(U^\dagger V)|^2+d}{d(d+1)},
\label{eq:favg-definition}
\end{equation}
with \(d=9\) for the two-qutrit case.
In a Haar-random synthesis benchmark, one randomly selected good Clifford-word core reached \(F_{\rm avg}\ge1-10^{-10}\) for all tested targets under the stopping protocol.
Because the optimizer terminates at the threshold, the histogram in Fig.~\ref{fig:clifford-haar-1em10} reports threshold-reaching behavior rather than maximum attainable fidelity.

\begin{figure*}[!tbp]
    \centering
    \begin{minipage}{0.48\textwidth}
        \centering
        \includegraphics[width=\linewidth]{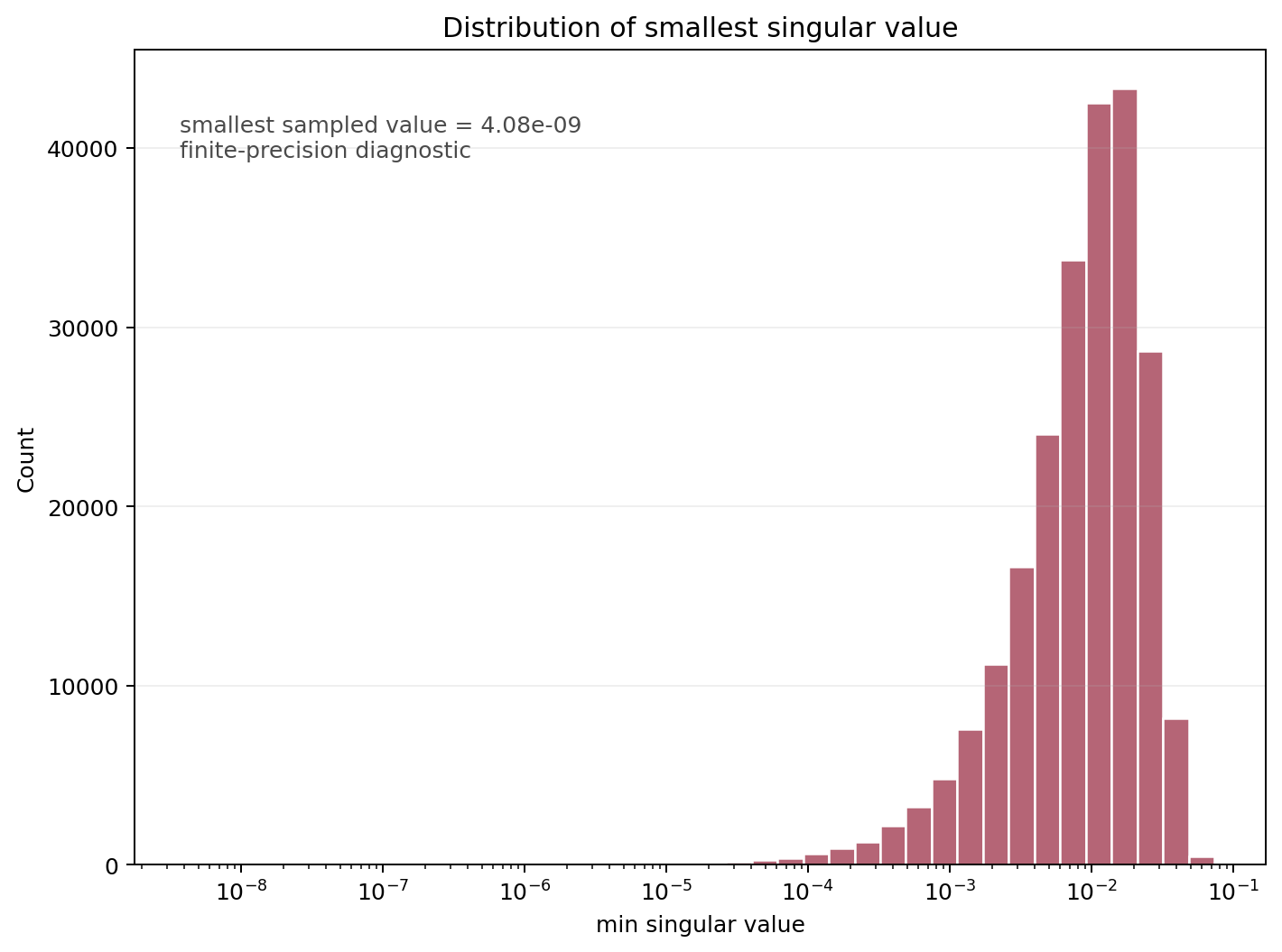}
        \vspace{-0.5em}
        \centerline{\small (a) Pointwise \(\sigma_{\min}(A_{\rm pt})\)}
    \end{minipage}
    \hfill
    \begin{minipage}{0.48\textwidth}
        \centering
        \includegraphics[width=\linewidth]{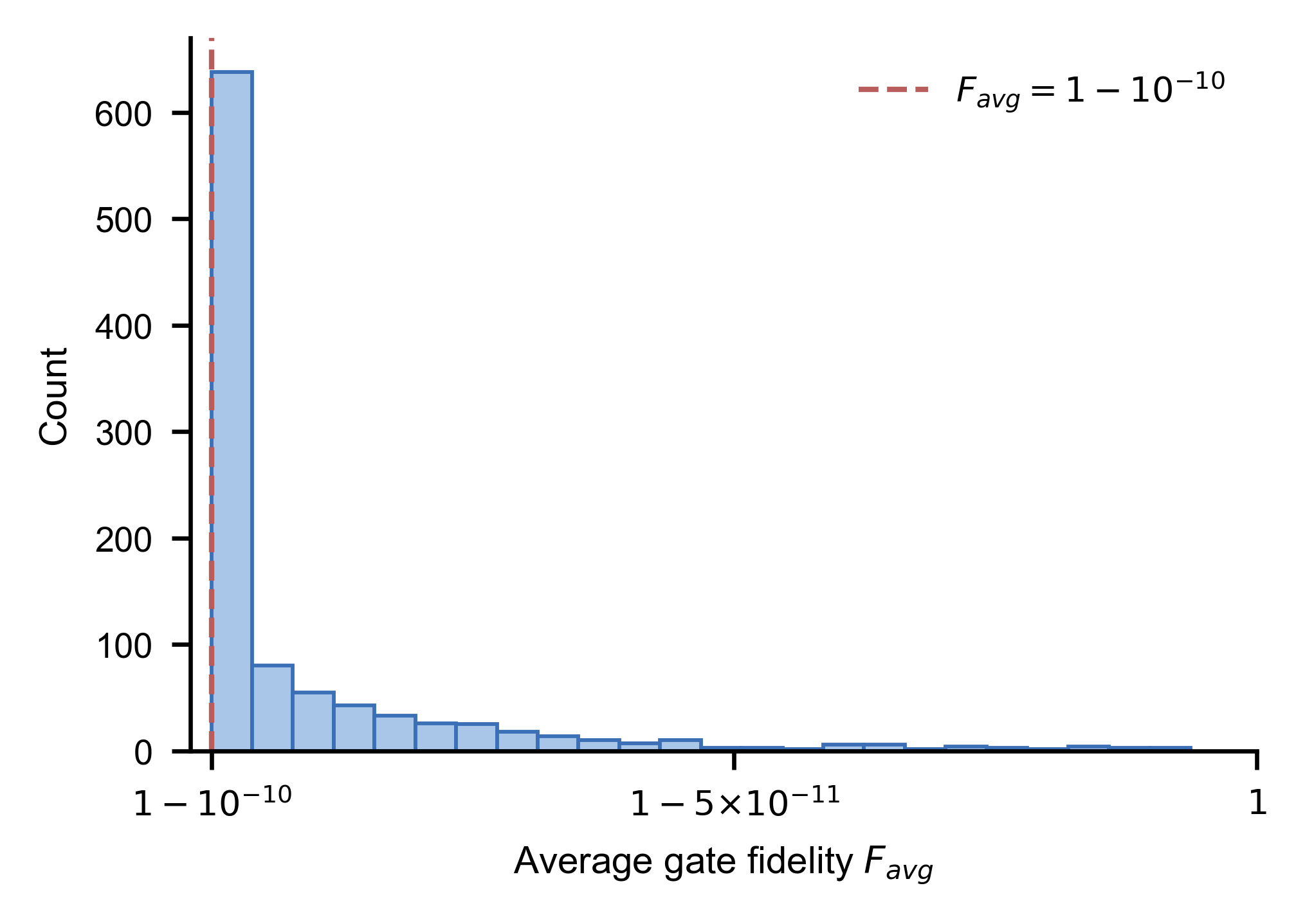}
        \vspace{-0.5em}
        \centerline{\small (b) Haar-random synthesis}
    \end{minipage}
    \caption{
    Compact Clifford-core diagnostics.
    Panel (a) shows the distribution of the sampled pointwise smallest singular value \(\sigma_{\min}(A_{\rm pt})\); the smallest recorded value, \(4.08\times10^{-9}\), lies well above the numerical rank floor, so no sampled rank collapse occurs.
    Panel (b) shows high-precision Haar-random synthesis for one randomly selected good Clifford-word core.
    The dashed line marks \(F_{\rm avg}=1-10^{-10}\).
    }
    \label{fig:clifford-haar-1em10}
\end{figure*}

Additional pointwise Jacobian diagnostics are reported in Appendix~\ref{app:clifford-diagnostics}.
The high-precision threshold \(1-10^{-10}\) is intentionally much stricter than the \(0.999\) threshold used later for the superconducting core.
It is used here because the Clifford-word construction is an algebraic reference case: the exact certificate at the identity local point and the finite-label structure make it a useful benchmark for asking how well the optimizer can exploit a particularly clean core.
The benchmark should not be interpreted as measuring the intrinsic maximum fidelity of \(K_{\rm Cl}\), since the stopping rule truncates optimization once the prescribed threshold has been reached.
Instead, it records whether the fixed four-core architecture reliably finds local layers that meet a very demanding target fidelity for gates sampled from Haar measure on \(SU(9)\).
The two numerical tests play different roles.
The pointwise Jacobian-rank diagnostics ask whether the transported local tangent spaces remain independent away from the identity local point.
The Haar-random synthesis benchmark asks whether this local differential regularity is useful for numerical compilation of the sampled Haar-random targets.
These diagnostics, together with the regular-pair computation below, have the finite-precision status summarized at the end of Sec.~\ref{sec:introduction}.
This combined evidence supports the use of \(K_{\rm Cl}\) as an algebraically ideal reference core for the minimal architecture.
The smallest singular values in the sampled pointwise test span several orders of magnitude, so the tail of the distribution is more informative than its visual bulk.
The finite-precision data do not imply uniform conditioning over all of \(L^5\), but they indicate that the exact regularity at the identity local point is not isolated in the sampled configurations.
The effective-volume diagnostic in Appendix~\ref{app:clifford-diagnostics} gives the complementary picture: even among full-rank sampled points, the local tangent-volume scale varies substantially.
This variation is expected in a nonlinear map at the dimension-saturating threshold, where there are no redundant local parameters to absorb poorly conditioned directions.

Appendix~\ref{app:regular-pairs} reports a \(100\)-candidate regular-pair verification for \(K_{\rm Cl}\).
For each sampled fold-like candidate, the numerical search finds a distinct regular partner whose output agrees with the candidate output within the stated Frobenius-residual tolerance.
The regular-pair data support the interpretation that the observed corank-one critical values are redundantly covered by regular sheets.
The same appendix records the operator-Schmidt and entangling-power diagnostics for \(K_{\rm Cl}\).
In particular, \(K_{\rm Cl}\) reaches the maximal two-qutrit entangling power \(e_p=1/2\) in the unnormalized linear-entropy convention.
This reinforces its role as a mathematically clean benchmark. 
It also motivates the hardware-oriented question in the next section: whether a core generated by a simpler superconducting Hamiltonian can retain enough of the same tangent-spreading behavior without being an exact Clifford splitting core.
The distinction between these two roles is important.
The Clifford core is not introduced as an immediately hardware-native pulse, but as a structural certificate that the dimension-saturating four-core depth is genuinely attainable at the local level.
It gives a concrete answer to the mathematical question left open by parameter counting.
The superconducting core then asks a different question: how much of this geometry survives when the core is constrained to arise from a compact Hamiltonian model with drift, drive, and exchange-type terms.
In particular, the absence of an exact splitting certificate for \(K_{\rm sc}\) does not preclude strong sampled synthesis performance.

\section{HARDWARE-MOTIVATED SUPERCONDUCTING CORE}
\label{sec:superconducting}

\subsection{Effective Hamiltonian core}
\label{subsec:sc-hamiltonian-model}

We next study a fixed core generated by a phenomenological, transition-resolved two-qutrit Hamiltonian motivated by superconducting-transmon control.
The model is restricted to the truncated computational subspace
\begin{equation}
    \mathcal H_{\rm qutrit}
    =
    \operatorname{span}\{|ij\rangle:i,j=0,1,2\}
    \cong \mathbb C^3\otimes\mathbb C^3 .
    \label{eq:sc-truncated-space}
\end{equation}
In this subspace, the coherent effective Hamiltonian is
\begin{equation}
H_{\rm sc}(t)=2\pi\sum_{\mu=1}^{4}
\left[
\Delta_\mu(t)H^{\rm drift}_\mu+
\Omega_\mu(t)H^{\rm drive}_\mu+
g_\mu(t)H^{\rm coup}_\mu
\right].
\label{eq:sc-hamiltonian}
\end{equation}
The explicit drift, drive, and coupling generators are listed in Appendix~\ref{app:sc-generators}.
It is not a full device-level simulation: leakage outside the three-level truncation, decoherence, waveform-bandwidth constraints, readout effects, and time-dependent control noise are not included; static Hamiltonian-coefficient offsets are considered separately in Appendix~\ref{app:robustness}.
While this boundary is important for practical purposes, the point we focus on here is not whether a complete experimental pulse has been optimized, but whether a physically structured fixed nonlocal core can pass the same differential and synthesis tests used for the Clifford benchmark.
The raw propagator and determinant-normalized \(SU(9)\) core are
\begin{align}
    K_{\rm sc}^{\rm raw}(T)
    &=
    \mathcal T\exp\left[-i\int_0^T H_{\rm sc}(t)\,dt\right],
    \label{eq:sc-raw-propagator}\\
    K_{\rm sc}
    &=
    K_{\rm sc}^{\rm raw}(T)
    \det\!\left(K_{\rm sc}^{\rm raw}(T)\right)^{-1/9}.
    \label{eq:sc-su9-core}
\end{align}
The representative time-dependent-control parameter set is given in Table~\ref{tab:sc-representative-parameters}.
All coefficients are constant except the first drive, for which we use the normalized skewed sine-squared envelope
\begin{equation*}
\Omega_1(t)=\frac{4~\mathrm{MHz}}{C_\eta}
\sin^2\!\left(\frac{\pi t}{T}\right)
\left[1+\eta\sin\!\left(\frac{2\pi t}{T}\right)\right].
\end{equation*}
Here \(0\le t\le T\), \(T=0.07~\mu\mathrm{s}\), \(\eta=0.35\), and \(C_\eta\) is the normalization constant chosen so that the peak of \(\Omega_1(t)\) is \(4~\mathrm{MHz}\).
The envelope vanishes at both endpoints and reaches its peak at approximately \(29.3~\mathrm{ns}\), before the pulse midpoint.
The time-ordered propagator in Eq.~\eqref{eq:sc-raw-propagator} is evaluated with QuTiP~\cite{qutip5}, using \texttt{sesolve} with the \texttt{vern9} integrator.
The branch of \(\det(\cdot)^{-1/9}\) fixes the \(SU(9)\) representative only up to a central ninth root of unity. 
The Jacobian, average-fidelity, operator-Schmidt, and entangling-power diagnostics are invariant under this phase, whereas matrix-level visualizations use the explicitly specified representative and, where appropriate, global-phase alignment.
The parameter set is used only to generate the fixed core; all subsequent synthesis experiments hold \(K_{\rm sc}\) fixed and optimize only the five local layers.
The full matrix representation and consistency checks are reported in Appendix~\ref{app:sc-generators}.
For this representative choice, the determinant-normalized core has operator-Schmidt rank \(9\), so it is not reducible to a low-rank bipartite operator class.
This nonlocality check is necessary but not sufficient: the decisive expressivity test remains the differential rank of the four-core map.
One structural choice in this parameter set deserves emphasis.
Although \(H_{\rm sc}(t)\) is real symmetric at every instant, its static and driven components satisfy \([H_0,H_1]\neq0\); consequently, \(H_{\rm sc}(t)\) and \(H_{\rm sc}(t')\) fail to commute whenever the time-dependent drive coefficients differ. 
The additional temporal asymmetry removes the palindromic time-reversal structure that could otherwise preserve a symmetric total propagator.
The pulse asymmetry is introduced specifically to break the complex-symmetry constraint of the constant real-Hamiltonian construction: the time-ordered propagator is not constrained to satisfy \(K_{\rm sc}=K_{\rm sc}^{T}\).
For the parameter set in Table~\ref{tab:sc-representative-parameters}, the direct numerical check gives \(\lVert K_{\rm sc}-K_{\rm sc}^{T}\rVert_{F}=0.4699801\), confirming that the generated core is not complex symmetric.
The parameter values in Table~\ref{tab:sc-representative-parameters} should therefore be viewed as a representative fixed-core construction rather than as a calibrated device pulse.
They define a single nonlocal gate \(K_{\rm sc}\), and this same gate is reused in every decomposition attempt.
The local layers before, between, and after the four copies of \(K_{\rm sc}\) absorb all target dependence.
This separation is what makes the test comparable to the Clifford-word benchmark: the nonlocal resource is fixed once and for all, and only local coordinates are adjusted during synthesis.

\begin{table}[!htbp]
    \centering
    \caption{Representative time-dependent superconducting-Hamiltonian parameter set used to generate \(K_{\rm sc}\). All entries except \(\Omega_1(t)\) are constant over the \(70~\mathrm{ns}\) core interval. The lower panel shows the four drive coefficients.}
    \label{tab:sc-representative-parameters}
    \begin{ruledtabular}
    \begin{tabular}{lc}
        Parameter group & Values \\
        \hline
        Drift \((\Delta_1,\Delta_2,\Delta_3,\Delta_4)\) & \((8,8,8,8)\) MHz \\
        Drive \((\Omega_1,\Omega_2,\Omega_3,\Omega_4)\) & \((\Omega_1(t),0,2,2)\) MHz \\
        Coupling \((g_1,g_2,g_3,g_4)\) & \((0,4,0,0)\) MHz \\
        First-drive peak & \(4\) MHz \\
        Evolution time \(T\) & \(0.07~\mu{\rm s}\) \\
    \end{tabular}
    \end{ruledtabular}
    \vspace{0.6em}

    \includegraphics[width=\columnwidth]{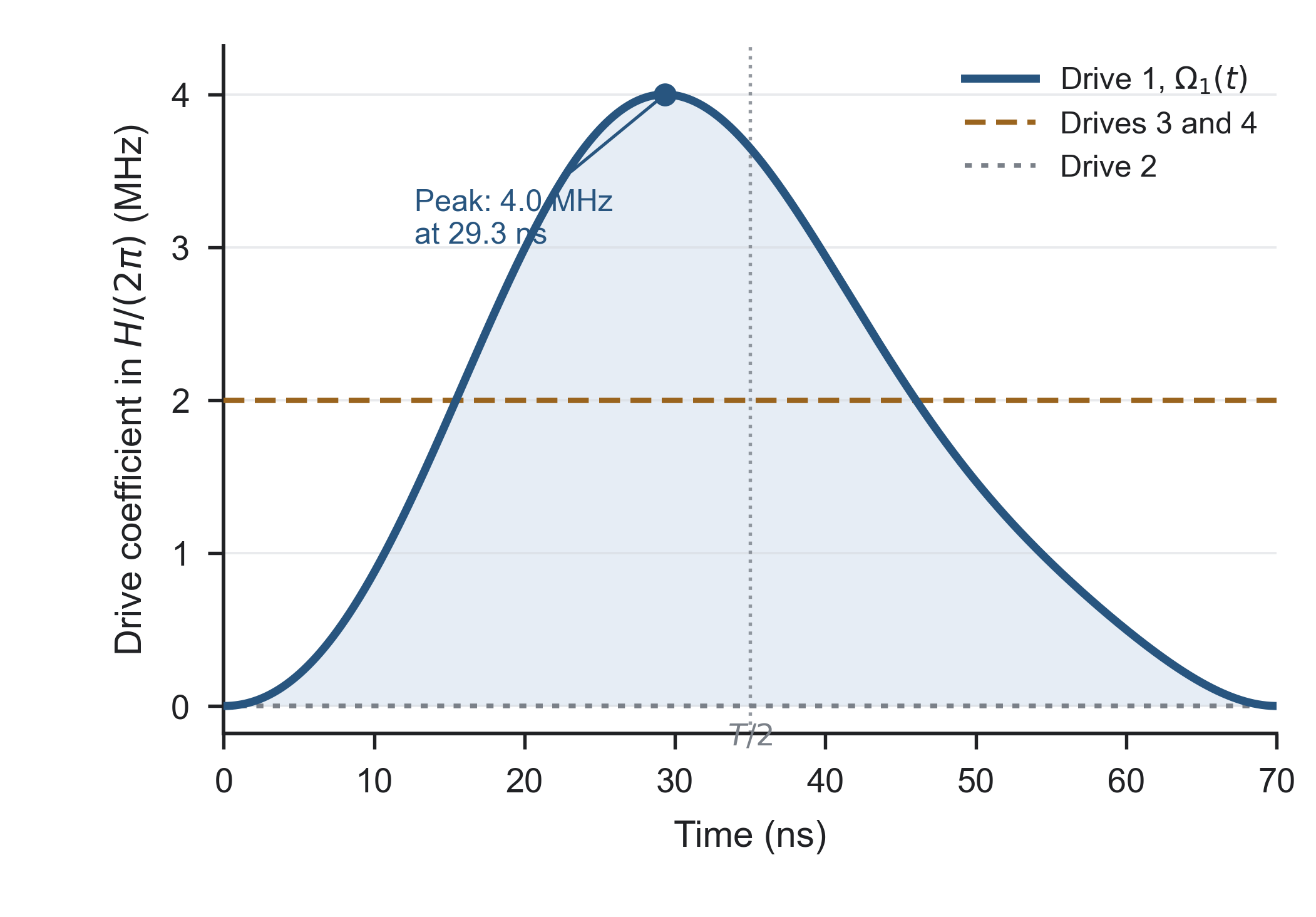}
\end{table}

\subsection{Jacobian-rank and Haar-random synthesis diagnostics}
\label{subsec:sc-diagnostics-haar}

At the identity local point, the directly computed Jacobian has numerical rank \(80\). 
Its smallest singular value is \(\sigma_{\min}(M_{K_{\rm sc}}(\boldsymbol I))=1.41\times10^{-3}\).
Across \(230{,}400\) sampled local configurations, no sampled rank collapse is observed in the tested data set; the smallest recorded value of \(\sigma_{\min}(A_{\rm pt})\) is \(1.36\times10^{-8}\).
These are finite-precision pointwise diagnostics and not proofs of a uniform analytic lower bound over the full local-layer manifold. In particular, the identity-local-point result is a direct numerical calculation for this nonsymmetric core, not a consequence of Proposition~\ref{prop:symmetric-core-identity-critical}.
For \(1000\) Haar-random targets, the same fixed core reaches \(F_{\rm avg}\ge0.999\) in all cases under the stated restart budget.
Detailed restart counts and convergence behavior are reported in Appendix~\ref{app:haar-restarts}, and the pointwise diagnostic distributions in Appendix~\ref{app:sc-diagnostics}.
Figure~\ref{fig:ksc-robustness-protocols} compares perturbed-core recompilation with frozen-compilation sensitivity.
The restart budget is part of the numerical protocol.
Each restart changes only the initialization of the local-layer optimization; it does not change the Hamiltonian parameters or the fixed core.
Thus the \(1000/1000\) result should be read as fixed-core synthesis success under the stated nonconvex optimization protocol, not as target-dependent pulse redesign.

\begin{figure*}[!t]
    \centering
    \includegraphics[width=\textwidth]{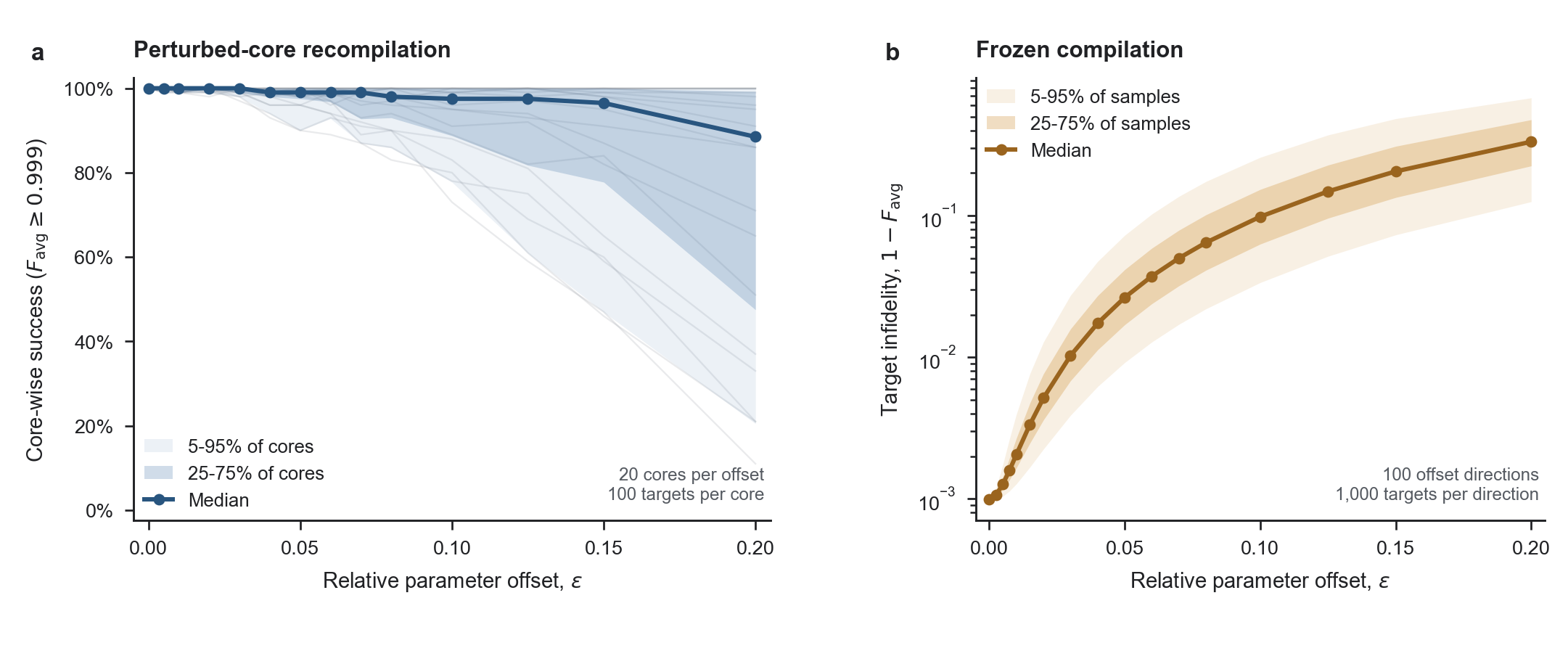}
    \caption{Hamiltonian-parameter robustness under shared static coefficient offsets. (a) Perturbed-core recompilation. The nonzero Hamiltonian coefficients are varied multiplicatively along \(20\) fixed offset directions, while nominally zero coefficients are held at zero. The complete dense scan reuses these directions at each of \(14\) sampled amplitudes through \(\epsilon=0.20\). Each perturbed core is evaluated on the same \(100\)-target subset, and success means \(F_{\rm avg}\ge0.999\). The line is the median core-wise success fraction, the dark band is the interquartile range, the light band is the 5th--95th percentile range, and faint lines show individual core directions. (b) Frozen-compilation target infidelity under the same perturbation rule. The nominal local layers remain fixed while the same perturbed core is used in all four core positions. The line is the pooled median across \(100\) offset directions and \(1000\) fixed targets per direction; the bands show the pooled 25th--75th and 5th--95th sample percentiles. No tolerance threshold or analytic robustness bound is inferred.}
    \label{fig:ksc-robustness-protocols}
\end{figure*}

As in the Clifford benchmark, the final-fidelity histogram is shaped by the stopping rule: once \(F_{\rm avg}\ge0.999\) is reached, the optimizer terminates.
The histogram therefore reports threshold-reaching behavior rather than the maximum fidelities attainable by the ansatz.
As in the Clifford case, the pointwise Jacobian-rank and Haar-random synthesis diagnostics ask different questions: the former samples local Jacobians, whereas the latter tests whether optimization finds target preimages.
For \(K_{\rm sc}\), both tests are favorable in the reported data.
The effective-volume distribution, reported in Appendix~\ref{app:sc-diagnostics}, shows that the map is anisotropic: some tangent directions are opened much more weakly than others.
This anisotropy is not surprising for a Hamiltonian-generated core with a sparse physical structure.
The key observation is that this anisotropy does not prevent the fixed core from succeeding across the reported Haar-random target ensemble under the allowed restart budget.

The \(1000/1000\) Haar-random result is an average-case synthesis benchmark, not a statement about every structured gate.
The target ensemble is sampled from Haar measure on \(SU(9)\), so it almost surely avoids any prescribed lower-dimensional family defined by special algebraic constraints.
A fixed core that consistently fails on Haar-random targets would be a poor candidate for compiling such targets, even if it solved selected structured gates.
Conversely, success on Haar-random targets does not guarantee success on gates with special exchange or permutation structure.
This is why the structured-target table is reported separately rather than folded into the Haar-random benchmark.

\subsection{Robustness and structured targets}
\label{subsec:sc-structured-targets}

The robustness study separates two operational questions, as shown in Fig.~\ref{fig:ksc-robustness-protocols} and detailed in Appendix~\ref{app:robustness}; like all numerical results in this paper, the outcomes are finite-precision sample statistics.

In both protocols, the twelve Hamiltonian coefficients are perturbed before core generation, while the normalized asymmetric envelope of the first drive is retained, and the resulting core \(K_\epsilon\) is used identically in all four core positions.
Each sampled perturbation direction is reused across all values of \(\epsilon\).
In the common cross-protocol comparison, every nonzero coefficient is varied multiplicatively, whereas coefficients that vanish in the nominal model are held fixed at zero.
Thus varying \(\epsilon\) changes only the magnitude of a fixed perturbation direction, representing a shared calibration offset held fixed throughout the pulse and across all four core uses rather than independently resampled noise during a circuit.

Protocol A regenerates the core and reoptimizes all five local layers for every target.
For \(20\) perturbed-core directions and a fixed subset of \(100\) Haar-random targets per core, the median core-wise success fractions are \(0.990\), \(0.975\), \(0.965\), and \(0.885\) at \(\epsilon=0.05\), \(0.10\), \(0.15\), and \(0.20\), respectively, with success defined by \(F_{\rm avg}\ge0.999\).
At \(\epsilon=0.20\), the mean core-wise success fraction is \(0.7305\), the interquartile range is \(0.475\)--\(0.9925\), and the sampled range is \(0.11\)--\(1.00\).
The broad direction-to-direction spread shows that recompilability is strongly anisotropic in the sampled parameter neighborhood: some perturbed cores retain high synthesis performance, whereas others do not under the same optimization budget.

Protocol B instead freezes the nominal local layers for all \(1000\) fixed targets and invokes no optimizer after the nominal compilation.
Across \(100\) offset directions, the pooled median target fidelities are \(0.973560\), \(0.901700\), \(0.794632\), and \(0.667396\) at the same four perturbation strengths.
At \(\epsilon=0.20\), the 5th--95th percentile interval is \(0.321210\)--\(0.874557\).
Frozen compilation is therefore substantially more sensitive to shared core-parameter mismatch than the ability to recompile around the perturbed core.
Under the stated protocols, these results neither define a worst-case tolerance nor model time-dependent noise, leakage, or decoherence.

Separately, we tested selected structured targets under a fixed-budget protocol.
All four targets use the same representative \(K_{\rm sc}\), the same optimizer, \(100\) random restarts, and \(5000\) epochs per restart, with success-threshold early stopping disabled for reporting.
Under this fixed-budget protocol, the best values found are listed in Table~\ref{tab:ksc-structured-targets}.
The qutrit CZ and qutrit CNOT targets reach the \(0.999\) threshold in this run, whereas qutrit iSWAP and \(\sqrt{\mathrm{SWAP}}\) remain below it.
These are fixed-budget numerical outcomes rather than no-go statements for the below-threshold targets.
Matrix-level visualizations are given in Appendix~\ref{app:structured-targets}.
For \(a,b\in\{0,1,2\}\) and \(\omega=e^{2\pi i/3}\), the qutrit CZ target is the generalized controlled-phase gate
\[
    \mathrm{qutrit\ CZ}\,|a,b\rangle=\omega^{ab}|a,b\rangle.
\]
The qutrit CNOT target is the controlled modular-addition gate
\[
    \mathrm{qutrit\ CNOT}\,|a,b\rangle=|a,a+b\!\!\!\pmod 3\rangle,
\]
while the qutrit iSWAP target leaves \(|a,a\rangle\) fixed and maps
\[
    |a,b\rangle\longmapsto i|b,a\rangle,\qquad a\ne b .
\]
Together, these targets probe controlled-phase, controlled-addition, and exchange-like structures.
The contrast between the Haar-random and structured-target tests is useful: Haar-random synthesis probes average-case expressivity over targets sampled from Haar measure on \(SU(9)\), whereas the four structured gates probe particular algebraic patterns.
The fact that qutrit CZ and qutrit CNOT cross the threshold while qutrit iSWAP and \(\sqrt{\mathrm{SWAP}}\) do not under the same fixed budget indicates nonuniform target difficulty for this fixed core.
Accordingly, the Haar-random success rate and the Jacobian-rank diagnostics should be complemented by targeted probes of gates that are important for algorithms or hardware calibration.
Future compilers could treat such structured gates with target-adaptive initialization, continuation from easier exchange gates, or a Hamiltonian-core search objective that includes those targets explicitly.
The present work does not pursue those refinements, because the aim is to test whether a single hardware-motivated fixed core can already support broad synthesis at the minimal dimension-saturating depth.

\begin{table}[t]
\caption{Fixed-budget structured-target synthesis with the
representative superconducting core \(K_{\rm sc}\). All targets
are optimized under the same restart and epoch budget. The
reported value is the best \(F_{\rm avg}\) found within that
budget.}
\label{tab:ksc-structured-targets}
\begin{ruledtabular}
\begin{tabular}{lccc}
Target & Best \(F_{\rm avg}\) & Reaches \(0.999\)? & Restarts used \\
\hline
qutrit CZ & \(0.999990445\) & yes & \(100\) \\
qutrit CNOT & \(0.999990511\) & yes & \(100\) \\
qutrit iSWAP & \(0.998129188\) & no & \(100\) \\
\(\sqrt{\mathrm{SWAP}}\) & \(0.998106396\) & no & \(100\)
\end{tabular}
\end{ruledtabular}
\end{table}

\section{Discussion and Outlook}
\label{sec:discussion}

This work shows that the local expressivity and conditioning of a fixed two-qutrit core are governed by the geometry of the transported local tangent spaces.
In the four-core ansatz \(L_5KL_4KL_3KL_2KL_1\), parameter counting is exactly saturated: \(5\dim(SU(3)\otimes SU(3))=\dim SU(9)=80\).
Expressivity is therefore controlled by whether the five transported local tangent spaces span \(\mathfrak{su}(9)\).
The singular-value certificate developed here gives a direct test of this condition.
The study therefore combines an exact structural certificate for a minimal tight architecture with a hardware-motivated numerical test case; their distinct logical statuses are summarized at the end of Sec.~\ref{sec:introduction}.

The Clifford-word core \(K_{\rm Cl}\) gives an exact example.
Its finite symplectic action splits the \(80\) nonzero two-qutrit Pauli labels into five transported local blocks, which proves full rank of \(D\Phi_{K_{\rm Cl}}\) at the identity local point; the differential there is in fact an exact isometry.
This establishes local universality at the minimal dimension-saturating depth.
The finite-precision regular-pair tests reported in the appendices suggest that some observed fold-like critical values are not genuine image walls.
The transpose-involution bound of Appendix~\ref{app:differential-proofs} complements this exact example with an exact obstruction: complex-symmetric cores, which include all cores generated by real time-independent Hamiltonians, can never be certified at the identity local point, since their differential rank there is at most \(78\).

The superconducting Hamiltonian core \(K_{\rm sc}\) addresses the hardware-oriented version of the same question.
Its noncommuting, temporally asymmetric evolution is designed to break the transposition symmetry, and the direct check \(K_{\rm sc}\neq K_{\rm sc}^{\mathsf T}\) confirms that the complex-symmetric rank bound does not apply.
It does not possess an exact Pauli-label splitting certificate, but no sampled rank collapse is observed in the pointwise diagnostics, and it synthesizes all \(1000\) Haar-random targets at \(F_{\rm avg}\ge0.999\) under the nominal-core protocol.
Under shared static Hamiltonian-coefficient offsets, perturbed-core recompilation and frozen execution exhibit distinct behavior: at \(\epsilon=0.20\), the median core-wise recompilation success is \(0.885\), whereas the frozen-compilation median target fidelity is \(0.667396\).
Structured targets show more nonuniform fixed-budget behavior: qutrit CZ and qutrit CNOT reach the threshold in the fixed-budget run, whereas qutrit iSWAP and \(\sqrt{\mathrm{SWAP}}\) remain below it for the present core and optimizer.
These superconducting results should therefore be read as a controlled numerical study of a fixed physically structured core, not as an optimal-control claim for a particular device.
They show that exact Clifford structure is not necessary for strong sampled performance in the four-core architecture, while also demonstrating that target geometry and optimization protocol still matter.
This is the main practical lesson of the comparison.
The Clifford construction gives a clean existence proof for local universality at the minimal depth.
The superconducting construction shows that a much less algebraically rigid core can still pass the same sampled rank and synthesis tests.
The remaining gap between these two cases is precisely where future hardware-aware core design can act: one can optimize Hamiltonian parameters not only for entangling power or gate fidelity to a single target, but for the singular-value geometry of the whole fixed-core synthesis map.
The symmetric-core bound adds a concrete design constraint to this program: a core that retains the transposition symmetry \(K=K^{T}\) cannot be regular at the identity local point. Breaking that symmetry, as the temporally asymmetric drive does here, removes this obstruction but does not by itself guarantee full rank; the resulting core must still be checked independently.

Several issues remain open.
The most important mathematical problem is to prove or disprove global surjectivity for the Clifford-word cores.
On the hardware side, one should optimize Hamiltonian parameters directly against the differential diagnostics, include leakage and decoherence, and develop target-adaptive initialization for difficult structured gates.
The same framework can also be extended to higher-dimensional qudits and multipartite systems.
For larger systems, the same parameter-count logic will identify candidate minimal depths, but the tangent-space certificate and finite-label constructions will become more demanding.
The two-qutrit case studied here provides a compact setting in which exact finite geometry, Lie-algebraic certificates, and hardware-motivated numerical synthesis can be compared within the same fixed-core framework.
The limitations of the present superconducting model should also be kept explicit.
The Hamiltonian is truncated to the qutrit subspace and does not include leakage, decoherence, pulse-shape smoothness, bandwidth constraints, crosstalk, or measurement effects.
The synthesis optimizer is likewise a numerical procedure over local coordinates with a finite restart budget.
Consequently, these numerical conclusions do not provide device-level calibration guarantees.

\begin{acknowledgments}
This work is supported by the National Natural Science Foundation of China (Grant No. 92565111), Quantum Science and Technology-National Science and Technology Major Project (2021ZD0301701), the National Key Research and Development Program of China (2023YFC2205802) and the NSF of Shandong Province (Grant No. ZR2023LZH002).
\end{acknowledgments}

\vskip 2cm

\centerline{\bf Appendices}

\bigskip

\appendix

\section{Explicit five-block Pauli-label splitting}
\label{app:explicit-label-splitting}
\label{app:pauli-splitting}

This appendix makes the finite Pauli-label splitting certificate completely
explicit, using the finite-field Pauli-label conventions recalled in the main
text~\cite{Gottesman1999,HostensDehaeneDeMoor2005,Hirschfeld1998}.  In the main text, the local full-rank result for \(K_{\rm Cl}\)
was reduced to the statement that the five transported local label blocks
\[
    \Lambda_t=S_{K_{\rm Cl}}^t(\Lambda_{\rm loc}),
    \qquad
    t=0,1,2,3,4,
\]
form a disjoint partition of \(V\setminus\{0\}\).  Here we list all five
blocks explicitly and verify the partition at the finite-field level.  Thus
this appendix supplies the concrete combinatorial data behind
the Pauli-label splitting criterion in Sec.~\ref{subsec:pauli-label-splitting} and
Theorem~\ref{thm:short-clifford-word-local-certificate}.

We work over
\[
    V=\mathbb F_3^4,
    \qquad
    v=(a_1,a_2\mid b_1,b_2),
\]
where \(v\) labels the two-qutrit Pauli direction
\(X^{a_1}Z^{b_1}\otimes X^{a_2}Z^{b_2}\), up to the chosen Weyl phase
convention. Define the two one-qutrit label planes
\begin{equation*}
    \begin{aligned}
    V_A &=\{(a,0\mid b,0):a,b\in\mathbb F_3\}, \\
    V_B &=\{(0,a\mid 0,b):a,b\in\mathbb F_3\},
    \end{aligned}
\end{equation*}
The local Pauli-label block is
\[
    \Lambda_{\rm loc}
    =
    (V_A\setminus\{0\})\sqcup (V_B\setminus\{0\}).
\]
For the short Clifford word \(K_{\rm Cl}\), the induced symplectic matrix is
\[
S_{K_{\rm Cl}}
=
\begin{pmatrix}
0 & 0 & 2 & 1\\
1 & 1 & 0 & 0\\
2 & 1 & 2 & 1\\
1 & 1 & 2 & 2
\end{pmatrix}
\in\mathrm{Sp}(4,\mathbb F_3).
\]
We set
\[
    \Lambda_t:=S_{K_{\rm Cl}}^t(\Lambda_{\rm loc}),
    \qquad
    t=0,1,2,3,4.
\]
The five blocks are listed explicitly below.  All entries are understood as
elements of \(\mathbb F_3\).

The local block \(\Lambda_0=\Lambda_{\rm loc}\) is
\begin{equation}
\begingroup
\setlength{\arraycolsep}{4pt}
\renewcommand{\arraystretch}{1.12}
\Lambda_0=
\left\{
\begin{array}{@{}llll@{}}
(0,0\mid 0,1) & (0,0\mid 0,2) & (0,0\mid 1,0) & (0,0\mid 2,0)\\
(0,1\mid 0,0) & (0,1\mid 0,1) & (0,1\mid 0,2) & (0,2\mid 0,0)\\
(0,2\mid 0,1) & (0,2\mid 0,2) & (1,0\mid 0,0) & (1,0\mid 1,0)\\
(1,0\mid 2,0) & (2,0\mid 0,0) & (2,0\mid 1,0) & (2,0\mid 2,0)
\end{array}
\right\}.
\endgroup
\label{eq:explicit-Lambda0}
\end{equation}

The transported blocks are as follows:
\begin{equation}
\begingroup
\setlength{\arraycolsep}{4pt}
\renewcommand{\arraystretch}{1.12}
\Lambda_1=
\left\{
\begin{array}{@{}llll@{}}
(0,1\mid 1,1) & (0,1\mid 2,1) & (0,2\mid 1,2) & (0,2\mid 2,2)\\
(1,0\mid 1,1) & (1,0\mid 1,2) & (1,1\mid 0,2) & (1,1\mid 2,0)\\
(1,2\mid 0,1) & (1,2\mid 2,0) & (2,0\mid 2,1) & (2,0\mid 2,2)\\
(2,1\mid 0,2) & (2,1\mid 1,0) & (2,2\mid 0,1) & (2,2\mid 1,0)
\end{array}
\right\}.
\endgroup
\label{eq:explicit-Lambda1}
\end{equation}

\begin{equation}
\begingroup
\setlength{\arraycolsep}{4pt}
\renewcommand{\arraystretch}{1.12}
\Lambda_2=
\left\{
\begin{array}{@{}llll@{}}
(0,1\mid 1,2) & (0,1\mid 2,2) & (0,2\mid 1,1) & (0,2\mid 2,1)\\
(1,0\mid 2,1) & (1,0\mid 2,2) & (1,1\mid 0,1) & (1,1\mid 1,0)\\
(1,2\mid 0,2) & (1,2\mid 1,0) & (2,0\mid 1,1) & (2,0\mid 1,2)\\
(2,1\mid 0,1) & (2,1\mid 2,0) & (2,2\mid 0,2) & (2,2\mid 2,0)
\end{array}
\right\}.
\endgroup
\label{eq:explicit-Lambda2}
\end{equation}

\begin{equation}
\begingroup
\setlength{\arraycolsep}{4pt}
\renewcommand{\arraystretch}{1.12}
\Lambda_3=
\left\{
\begin{array}{@{}llll@{}}
(0,1\mid 1,0) & (0,1\mid 2,0) & (0,2\mid 1,0) & (0,2\mid 2,0)\\
(1,0\mid 0,1) & (1,0\mid 0,2) & (1,1\mid 1,2) & (1,1\mid 2,1)\\
(1,2\mid 1,1) & (1,2\mid 2,2) & (2,0\mid 0,1) & (2,0\mid 0,2)\\
(2,1\mid 1,1) & (2,1\mid 2,2) & (2,2\mid 1,2) & (2,2\mid 2,1)
\end{array}
\right\}.
\endgroup
\label{eq:explicit-Lambda3}
\end{equation}

\begin{equation}
\begingroup
\setlength{\arraycolsep}{4pt}
\renewcommand{\arraystretch}{1.12}
\Lambda_4=
\left\{
\begin{array}{@{}llll@{}}
(0,0\mid 1,1) & (0,0\mid 1,2) & (0,0\mid 2,1) & (0,0\mid 2,2)\\
(1,1\mid 0,0) & (1,1\mid 1,1) & (1,1\mid 2,2) & (1,2\mid 0,0)\\
(1,2\mid 1,2) & (1,2\mid 2,1) & (2,1\mid 0,0) & (2,1\mid 1,2)\\
(2,1\mid 2,1) & (2,2\mid 0,0) & (2,2\mid 1,1) & (2,2\mid 2,2)
\end{array}
\right\}.
\endgroup
\label{eq:explicit-Lambda4}
\end{equation}

\begin{lemma}[Explicit five-block splitting]
\label{lem:explicit-five-block-splitting}
The five transported local label blocks
\[
    \Lambda_t=S_{K_{\rm Cl}}^t(\Lambda_{\rm loc}),
    \qquad
    t=0,1,2,3,4,
\]
satisfy
\[
    V\setminus\{0\}
    =
    \Lambda_0\sqcup\Lambda_1\sqcup\Lambda_2\sqcup\Lambda_3\sqcup\Lambda_4.
\]
\end{lemma}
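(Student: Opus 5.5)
The splitting is a finite statement about sets of labels, so the plan is a structured finite verification rather than a conceptual argument. Three things must be checked:
\begin{enumerate}[label=(\roman*)]
\item the displayed sets \(\Lambda_1,\dots,\Lambda_4\) really are the images \(S_{K_{\rm Cl}}^t(\Lambda_{\rm loc})\);
\item each has \(16\) distinct elements;
\item the five blocks are pairwise disjoint.
\end{enumerate}
Since \(|V\setminus\{0\}|=3^4-1=80=5\cdot16\), disjointness together with the cardinality count forces the union to be all of \(V\setminus\{0\}\).

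First, cardinality. \(S_{K_{\rm Cl}}\) is invertible (it lies in \(Sp(4,\mathbb F_3)\)), so every power \(S^t\) is a bijection of \(V\) fixing \(0\). Hence \(|\Lambda_t|=|\Lambda_{\rm loc}|=8+8=16\), and no \(\Lambda_t\) contains \(0\).

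Second, disjointness. I would reduce this to a statement about \(\Lambda_0\) alone. Because \(S^t\) is a bijection, \(\Lambda_s\cap\Lambda_t=S^s(\Lambda_0\cap\Lambda_{t-s})\). It therefore suffices to show \(\Lambda_0\cap\Lambda_k=\emptyset\) for \(k=1,2,3,4\), using \(S^5=I_4\) to read indices mod \(5\).

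Linearity cuts the work further. \(\Lambda_{\rm loc}\) is a union of the punctured planes \(V_A\) and \(V_B\), and \(\Lambda_0\cap S^k\Lambda_0=\emptyset\) says exactly that \(S^k V_X\cap V_Y=\{0\}\) for all \(X,Y\in\{A,B\}\). So for each \(k\) and each of the four pairs \((X,Y)\), I only need the \(4\times4\) matrix formed by a basis of \(S^kV_X\) and a basis of \(V_Y\) to be nonsingular over \(\mathbb F_3\). The basis of \(S^kV_X\) consists of two columns of \(S^k\): columns \(1,3\) for \(A\), columns \(2,4\) for \(B\). Equivalently, the \(2\times2\) block of \(S^k\) in the rows complementary to \(Y\) must be invertible.

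Moreover, \(S^{-k}=S^{5-k}\) gives \(\Lambda_0\cap\Lambda_k=\emptyset\iff\Lambda_0\cap\Lambda_{5-k}=\emptyset\), so only \(k=1,2\) are needed. That leaves computing \(S^2\) and checking eight \(2\times2\) determinants mod \(3\).

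Third, as a consistency check, I would compute \(S^2\), \(S^3\), \(S^4\) explicitly, confirm \(S^5=I_4\), and verify that applying \(S^t\) to the \(16\) listed elements of \(\Lambda_0\) reproduces the displayed tables \eqref{eq:explicit-Lambda1}--\eqref{eq:explicit-Lambda4}. This ties the explicit lists to the abstract definition.

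I expect no conceptual obstacle. The only risk is arithmetic: getting \(S^2\) right mod \(3\), and making sure the row/column convention (labels as column vectors acted on by \(S\)) matches the one used to produce the tables. I would cross-check by confirming that \(S\) sends the label \((1,0\mid0,0)\) to its first column \((0,1\mid2,1)\), which indeed appears in \(\Lambda_1\).
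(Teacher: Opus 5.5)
Your proposal is correct. Its skeleton matches the paper's: invertibility of \(S_{K_{\rm Cl}}^t\) gives \(|\Lambda_t|=16\) and \(0\notin\Lambda_t\), and pairwise disjointness together with \(80=5\cdot16\) yields the union.

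The difference is in how disjointness is established. The paper inspects the five displayed tables and observes that no label repeats. Its proof therefore takes as an input that those tables are the correct images \(S_{K_{\rm Cl}}^t(\Lambda_{\rm loc})\), which is asserted rather than re-derived there.

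You instead argue structurally:
\begin{itemize}
\item The identity \(\Lambda_s\cap\Lambda_t=S^s(\Lambda_0\cap\Lambda_{t-s})\) together with \(S^{-k}=S^{5-k}\) reduces everything to \(\Lambda_0\cap\Lambda_k\) for \(k=1,2\).
\item Linearity reduces each of those cases to the invertibility of four \(2\times2\) minors of \(S^k\), with rows complementary to \(Y\) and columns indexed by \(X\).
\end{itemize}
This goes through. With
\[
S_{K_{\rm Cl}}^2=\begin{pmatrix}2&0&0&1\\1&1&2&1\\0&1&1&0\\1&2&1&1\end{pmatrix},
\]
all eight minors (four from \(S\), four from \(S^2\)) equal \(2\bmod 3\), and \(S^5=I_4\) checks directly. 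Your column-vector convention also reproduces the tabulated blocks; for example, \(Se_1=(0,1\mid2,1)\in\Lambda_1\).

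What your route buys:
\begin{itemize}
\item It proves the lemma from the matrix \(S_{K_{\rm Cl}}\) alone, with eight hand-checkable determinants, so the explicit tables become a consistency check rather than a premise.
\item The minor condition is exactly the statement that each transported plane is a graph over both \(V_A\) and \(V_B\). That is the same mechanism as in Lemma~\ref{lem:blocks-disjoint-from-local}.
\item It needs only that \(S^5\) preserve the sign-closed block \(\Lambda_0\), so \(S^5=\pm I\) would suffice. The same argument therefore also covers order-\(10\) good elements.
\end{itemize}

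What the paper's route buys: the enumeration needs no argument beyond comparison, and it produces the explicit label lists reused later. One example is the commutator-fusion check of Appendix~\ref{app:refined-commutator-geometry}.
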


\begin{proof}
Equations~\eqref{eq:explicit-Lambda0}--\eqref{eq:explicit-Lambda4}
list the five blocks explicitly.  Each block contains \(16\) labels.  This
also follows from
\[
    |\Lambda_t|
    =
    |S_{K_{\rm Cl}}^t(\Lambda_{\rm loc})|
    =
    |\Lambda_{\rm loc}|
    =
    16,
\]
because \(S_{K_{\rm Cl}}^t\) is an invertible linear transformation of
\(V=\mathbb F_3^4\).

A direct finite-field check of the displayed lists shows that no nonzero label
appears in two different blocks.  Hence
\(\Lambda_0,\Lambda_1,\Lambda_2,\Lambda_3,\Lambda_4\) are pairwise disjoint.
Finally,
\[
    |V\setminus\{0\}|=3^4-1=80,
    \qquad
    \sum_{t=0}^{4}|\Lambda_t|=5\cdot16=80.
\]
Therefore the pairwise-disjoint union of the five blocks has the same
cardinality as \(V\setminus\{0\}\), and hence
\[
    V\setminus\{0\}
    =
    \Lambda_0\sqcup\Lambda_1\sqcup\Lambda_2\sqcup\Lambda_3\sqcup\Lambda_4.
\]
\end{proof}

The splitting immediately upgrades the full-rank statement at the identity
local point to an isometry statement.

\begin{corollary}[Identity-point isometry]
\label{cor:identity-point-isometry}
Let \(\mathfrak s_t\subset\mathfrak{su}(9)\) denote the real span of the
sign-pair Pauli directions with labels in \(\Lambda_t\).  The five subspaces
\(\mathfrak s_0,\ldots,\mathfrak s_4\) are mutually orthogonal with respect to
the Hilbert--Schmidt inner product, and
\(\operatorname{Ad}_{K_{\rm Cl}^t}\) maps \(\mathfrak l=\mathfrak s_0\)
isometrically onto \(\mathfrak s_t\).  Consequently
\(A_{K_{\rm Cl}}(\boldsymbol I):\mathfrak l^5\to\mathfrak{su}(9)\) is an
isometry: every singular value of \(M_{K_{\rm Cl}}(\boldsymbol I)\) equals
\(1\), and in particular
\(\sigma_{\min}(M_{K_{\rm Cl}}(\boldsymbol I))=1\) exactly.
\end{corollary}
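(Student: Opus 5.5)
The proof has four steps: set up a real orthogonal basis of $\mathfrak{su}(9)$ indexed by Pauli labels, show that $\operatorname{Ad}_{K_{\rm Cl}^t}$ carries it to itself up to signs, apply the splitting of Lemma~\ref{lem:explicit-five-block-splitting}, and combine the pieces through Eq.~\eqref{eq:AK-identity}.

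\emph{Step 1: the basis.} Let $W_v$ be the Weyl operators for $v\in V\setminus\{0\}$. They satisfy $\operatorname{Tr}(W_u^\dagger W_v)=9\,\delta_{uv}$, and $W_v^\dagger$ is proportional to $W_{-v}$. Since $3$ is odd, $v\neq -v$, so the nonzero labels fall into $40$ pairs $\{v,-v\}$.

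For each pair we take the two anti-Hermitian combinations
\[
i(W_v+W_v^\dagger),\qquad W_v-W_v^\dagger ,
\]
after fixing phases so that $W_{-v}=W_v^\dagger$. These are the ``sign-pair'' directions. Let $\mathfrak s(\Lambda)$ be the real span of the directions attached to labels in a set $\Lambda$ that is closed under negation.

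Hilbert--Schmidt orthogonality of the $W_v$ gives two facts.
\begin{itemize}
\item The $80$ sign-pair directions, after normalization, form an orthonormal basis of $\mathfrak{su}(9)$ for $\langle X,Y\rangle=-\operatorname{Tr}(XY)$.
\item If $\Lambda\cap\Lambda'=\emptyset$, then $\mathfrak s(\Lambda)\perp\mathfrak s(\Lambda')$.
\end{itemize}
Every $\Lambda_t$ is closed under negation: $\Lambda_{\rm loc}$ is, and $S_{K_{\rm Cl}}^t$ is linear. We also need $\mathfrak s_0=\mathfrak l$. Labels in $V_A\setminus\{0\}$ give $w\otimes I$ with $w$ running over the $8$ traceless one-qutrit Weyl combinations, which span $\mathfrak{su}(3)\otimes I$. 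The same holds for $V_B$, so $\mathfrak s_0$ has dimension $16$ and equals $\mathfrak l$.

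\emph{Step 2: the Clifford action.} By the defining property of the Clifford lift,
\[
K_{\rm Cl}\,W_v\,K_{\rm Cl}^\dagger=\chi(v)\,W_{S_{K_{\rm Cl}}v}
\]
for some phase $\chi(v)$. Taking adjoints shows $\chi(-v)=\overline{\chi(v)}$, provided the phase conventions are compatible.

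The sign-pair directions are real and imaginary parts of $W_v$, so $\operatorname{Ad}_{K_{\rm Cl}}$ maps $\mathfrak s(\{\pm v\})$ onto $\mathfrak s(\{\pm S_{K_{\rm Cl}}v\})$. On this two-dimensional plane it acts as a rotation by $\arg\chi(v)$. Iterating, $\operatorname{Ad}_{K_{\rm Cl}^t}$ maps $\mathfrak s_0$ onto $\mathfrak s(S_{K_{\rm Cl}}^t\Lambda_0)=\mathfrak s_t$.

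Conjugation by a unitary preserves $\operatorname{Tr}(XY)$, so this map is an isometry. This holds whatever the phases $\chi$ are.

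\emph{Step 3: orthogonality.} Lemma~\ref{lem:explicit-five-block-splitting} says the five blocks $\Lambda_t$ are pairwise disjoint. By Step~1, the subspaces $\mathfrak s_0,\dots,\mathfrak s_4$ are mutually orthogonal.

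\emph{Step 4: the isometry.} By Eq.~\eqref{eq:AK-identity},
\[
A_{K_{\rm Cl}}(\boldsymbol I)(X_1,\dots,X_5)=\sum_i \operatorname{Ad}_{K_{\rm Cl}^{5-i}}(X_i).
\]
The $i$-th term lies in $\mathfrak s_{5-i}$. These subspaces are mutually orthogonal and each map $\operatorname{Ad}_{K_{\rm Cl}^{5-i}}$ is an isometry, so
\[
\Bigl\|\sum_i \operatorname{Ad}_{K_{\rm Cl}^{5-i}}X_i\Bigr\|^2=\sum_i\|X_i\|^2 .
\]
Thus $A_{K_{\rm Cl}}(\boldsymbol I)$ is an isometry from $\mathfrak l^5$, with the product metric, into $\mathfrak{su}(9)$. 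Both spaces have dimension $80$, so it is onto.

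Represented in orthonormal bases, $M_{K_{\rm Cl}}(\boldsymbol I)$ is therefore orthogonal. All its singular values equal $1$, and in particular $\sigma_{\min}=1$.

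\emph{Main obstacle.} The delicate point is Step~2 together with the phase conventions of Step~1. For qutrits the Weyl operators carry $\omega$-phases, and $W_{-v}$ equals $W_v^\dagger$ only up to such a phase.

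I would first fix a symmetric Weyl convention, for example $W_v=\omega^{2a\cdot b}X^aZ^b$, under which $W_{-v}=W_v^\dagger$ holds exactly. I would then check that the Clifford phases respect $\chi(-v)=\overline{\chi(v)}$; this follows from $\operatorname{Ad}$ commuting with $\dagger$.

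Even without that check, the argument goes through. It is enough that $\operatorname{Ad}_{K_{\rm Cl}}$ maps the complex span of $\{W_v,W_{-v}\}$ onto the complex span for $\pm S_{K_{\rm Cl}}v$. Intersecting with $\mathfrak{su}(9)$ gives the real planes $\mathfrak s(\{\pm v\})$, and that suffices for Steps~3 and~4.
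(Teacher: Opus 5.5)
Your proposal is correct and follows the same route as the paper's proof. Both arguments use three facts: Hilbert--Schmidt orthogonality of distinct sign-pair Weyl directions together with the disjoint splitting of Lemma~\ref{lem:explicit-five-block-splitting}, unitary conjugation carrying $\mathfrak s_0$ isometrically onto $\mathfrak s_t$, and additivity of squared norms across the five orthogonal terms of Eq.~\eqref{eq:AK-identity}. Your Steps 1--2 (the symmetric Weyl convention, the Clifford phase acting as a rotation within each sign-pair plane, and $\mathfrak s_0=\mathfrak l$ with the induced $\mathfrak{su}(9)$ metric) make explicit details that the paper's short proof leaves implicit.
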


\begin{proof}
Weyl--Pauli directions with distinct label sign pairs are orthogonal in the
Hilbert--Schmidt inner product, and by
Lemma~\ref{lem:explicit-five-block-splitting} the five blocks contain disjoint
label sets, so the subspaces \(\mathfrak s_t\) are mutually orthogonal.
Conjugation by a unitary preserves the Hilbert--Schmidt inner product, and by
the label splitting
\(\operatorname{Ad}_{K_{\rm Cl}^t}(\mathfrak s_0)=\mathfrak s_t\).
For \((X_1,\ldots,X_5)\in\mathfrak l^5\), the five terms of
\(A_{K_{\rm Cl}}(\boldsymbol I)(X_1,\ldots,X_5)
=\sum_{i}\operatorname{Ad}_{K_{\rm Cl}^{5-i}}(X_i)\)
therefore lie in mutually orthogonal subspaces, whence
\(\|A_{K_{\rm Cl}}(\boldsymbol I)(X_1,\ldots,X_5)\|^2=\sum_i\|X_i\|^2\).
Thus \(A_{K_{\rm Cl}}(\boldsymbol I)\) is an isometry between
\(80\)-dimensional inner-product spaces, and all its singular values equal
\(1\).
\end{proof}

We have confirmed numerically that all \(80\) computed singular values of
\(M_{K_{\rm Cl}}(\boldsymbol I)\) equal \(1\) to machine precision.

\section{Finite Clifford-word families and symmetry structure}
\label{app:clifford-family-symmetries}
\label{app:clifford-family}

This appendix provides the finite Clifford-word classification behind the
short-word representative used in the main text.  The finite-geometric input is
the symplectic geometry of \(\mathbb F_3^4\) and the associated
nonsingular-pair spreads studied in Ref.~\cite{HoffmanWeintraubSymplecticF3};
we also use standard finite-projective-geometry terminology~\cite{Hirschfeld1998}.  Appendix~\ref{app:explicit-label-splitting}
verified one explicit five-block splitting for \(K_{\rm Cl}\).  Here we ask
the complementary finite-geometric questions: how many Clifford symplectic
actions realize such a splitting, how these actions are organized, and
whether they lead to genuinely different fixed-core ansatz images.

We keep the Pauli-label notation from
Appendix~\ref{app:explicit-label-splitting}: \(V=\mathbb F_3^4\) with the
standard symplectic form, the local symplectic planes \(V_A,V_B\), and
\[
    \Delta_0:=\{V_A,V_B\},
    \qquad
    B_0:=\Lambda_{\rm loc}
    =(V_A\setminus\{0\})\sqcup(V_B\setminus\{0\}).
\]
For \(S\in\Sp(4,\mathbb F_3)\), write
\[
    \Lambda_t(S):=S^tB_0,
    \qquad
    t=0,1,2,3,4.
\]
We call \(S\) good if
\[
    V\setminus\{0\}
    =
    \bigsqcup_{t=0}^{4}\Lambda_t(S),
\]
and denote the good set by
\[
    \mathcal S_{\rm good}
    :=
    \left\{
    S\in\Sp(4,\mathbb F_3):
    V\setminus\{0\}
    =
    \bigsqcup_{t=0}^{4}S^tB_0
    \right\}.
\]
The main results are that \(|\mathcal S_{\rm good}|=2304\), that these
representatives split equally into order-\(5\) and order-\(10\) symplectic
actions, and that all good Clifford cores are two-sided locally equivalent.
Consequently, the many finite Clifford-word representatives do not correspond
to distinct reachable sets; instead, they justify using \(K_{\rm Cl}\) as a
canonical representative.

\paragraph*{Clifford-group conventions.}
We write \(\operatorname{Cliff}_{n,3}\) for the \(n\)-qutrit Clifford group,
namely the normalizer of the \(n\)-qutrit Pauli group in \(U(3^n)\)~\cite{Gottesman1999,HostensDehaeneDeMoor2005}.  After
quotienting out Pauli displacements and scalar phases, the Clifford action on
Pauli labels gives the finite symplectic group \(\Sp(2n,\mathbb F_3)\).  For
one qutrit, this symplectic quotient is generated by the Fourier gate \(F\)
and the phase gate \(P\).  For two qutrits, it is generated by the local gates
\(F_1,P_1,F_2,P_2\) together with the entangling SUM gate.  Including the
Pauli displacements \(X_j,Z_j\) and the central phases gives the full
two-qutrit Clifford group.  Since this paper only uses the induced action on
Pauli labels, the finite calculations below are performed in
\(\Sp(4,\mathbb F_3)\).

\subsection{Blocks disjoint from the local block}

We first classify the \(16\)-element sign-closed blocks disjoint from
\(B_0=\Lambda_{\rm loc}\).  For \(A\in\SL(2,\mathbb F_3)\), define the graph
plane
\[
    \Gamma(A):=\{(x,Ax):x\in V_A\}\subset V_A\oplus V_B.
\]
Since in dimension two \(\SL(2,\mathbb F_3)=\Sp(2,\mathbb F_3)\), one has
\[
    \Gamma(A)^\perp=\Gamma(-A).
\]
Hence
\[
    B_{[A]}:=
    \bigl(\Gamma(A)\setminus\{0\}\bigr)
    \sqcup
    \bigl(\Gamma(-A)\setminus\{0\}\bigr)
\]
is a \(16\)-element block disjoint from \(B_0\).  Here
\([A]\) denotes the class of \(A\) in
\[
    \PSL(2,\mathbb F_3)=\SL(2,\mathbb F_3)/\{\pm I\}.
\]
The block \(B_{[A]}\) is independent of the representative \(A\), since
\(\Gamma(A)\) and \(\Gamma(-A)\) are already paired.

\begin{lemma}[Classification of blocks disjoint from the local block]
\label{lem:blocks-disjoint-from-local}
The \(16\)-element blocks disjoint from \(B_0\) are precisely
\[
    \{B_{[A]}:[A]\in\PSL(2,\mathbb F_3)\}.
\]
In particular, there are
\[
    |\PSL(2,\mathbb F_3)|=12
\]
such blocks.
\end{lemma}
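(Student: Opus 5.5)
The plan is to make precise what a ``$16$-element block'' means and then reduce the classification to linear algebra over $\mathbb F_3$. Following the convention implicit in $B_0=(V_A\setminus\{0\})\sqcup(V_B\setminus\{0\})$ and the nonsingular-pair language of Ref.~\cite{HoffmanWeintraubSymplecticF3}, I take a block to be a set of the form $(P\setminus\{0\})\sqcup(P^\perp\setminus\{0\})$. Here $P\subset V$ is a two-dimensional subspace with $V=P\oplus P^\perp$.

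\emph{Nondegeneracy of $P$.} On a two-dimensional space an alternating form is either zero or nondegenerate. If it were zero, then $P^\perp=P$ and the set would have only $8$ elements. So the $16$-element condition is equivalent to $P$ being a symplectic (nondegenerate) plane. Disjointness from $B_0$ then says that both $P$ and $P^\perp$ meet $V_A$ and $V_B$ only in $0$.

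\emph{Reduction to graphs.} Suppose $P\cap V_B=0$. Then the projection $V=V_A\oplus V_B\to V_A$ restricts to an injective map on $P$, hence an isomorphism since both spaces have dimension two. Therefore $P=\{(x,Ax):x\in V_A\}$ for a unique linear map $A:V_A\to V_B$. The condition $P\cap V_A=0$ forces $\ker A=0$, so $A\in GL(2,\mathbb F_3)$ after identifying $V_A\cong V_B\cong\mathbb F_3^2$ through the coordinate labels. The symplectic form is the orthogonal sum of the forms on $V_A$ and $V_B$. Using the two-dimensional identity $\omega(Ax,Ay)=\det(A)\,\omega(x,y)$, I get
\[
\omega\bigl((x,Ax),(y,Ay)\bigr)=(1+\det A)\,\omega(x,y).
\]
Thus $P$ is nondegenerate iff $\det A\neq -1=2$, that is, iff $\det A=1$ and $A\in\SL(2,\mathbb F_3)$. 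For such $A$ the paper's observation $\Gamma(A)^\perp=\Gamma(-A)$ can be checked directly. Indeed $\omega(x,y)+\omega(Ax,-Ay)=0$, and $\Gamma(-A)$ has the right dimension. Since $-A\in\SL(2,\mathbb F_3)$ as well, $P^\perp$ is automatically also a graph disjoint from $V_A$ and $V_B$. Hence every block disjoint from $B_0$ equals $B_{[A]}$ for some $A\in\SL(2,\mathbb F_3)$. Conversely, each $B_{[A]}$ is such a block by the same computation.

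\emph{Counting.} A graph plane determines its map, so $\Gamma(A)=\Gamma(A')$ iff $A=A'$. The unordered pair $\{\Gamma(A),\Gamma(-A)\}$ therefore determines $\{A,-A\}$. Moreover $B_{[A]}$ determines this pair, since the block's planes are recovered as the spans of its elements (or by noting that distinct planes of $V$ meet each other's nonzero sets only where they intersect). Hence $[A]\mapsto B_{[A]}$ is a bijection from $\PSL(2,\mathbb F_3)$ onto the set of blocks disjoint from $B_0$. Since $|\SL(2,\mathbb F_3)|=24$ and $-I\neq I$, there are $12$ such blocks.

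\emph{Main obstacle.} The main delicate point is the recovery of the pair of planes from the $16$-element set, i.e.\ that a block uniquely determines its unordered pair of planes. I would handle this by noting that two distinct nondegenerate planes of a nonsingular pair are complementary. Any two-dimensional subspace contained in $(P\cup P^\perp)\cup\{0\}$ must then equal $P$ or $P^\perp$: a subspace containing $p\in P\setminus\{0\}$ and $q\in P^\perp\setminus\{0\}$ also contains $p+q\notin P\cup P^\perp$. The rest is routine.
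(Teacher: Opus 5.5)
Your proposal is correct and follows essentially the same route as the paper: you parametrize $P=\Gamma(A)$ via the projection to $V_A$, show the restricted form is $(1+\det A)\,\omega$ so that $\det A=1$, and use $\Gamma(A)^\perp=\Gamma(-A)$. You also check that a block determines its unordered pair of planes, which makes $[A]\mapsto B_{[A]}$ injective; the count of $12$ needs this, and it is a step the paper's proof leaves implicit.
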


\begin{proof}
Let \(B\) be a sign-closed \(16\)-element block disjoint from \(B_0\).  Then
\[
    B=(W\setminus\{0\})\sqcup(W^\perp\setminus\{0\})
\]
for some two-dimensional nonsingular symplectic plane \(W\subset V\).
Since \(B\cap(V_A\setminus\{0\})=\varnothing\) and
\(B\cap(V_B\setminus\{0\})=\varnothing\), the projections
\[
    W\to V_A,
    \qquad
    W\to V_B
\]
are injective, hence isomorphisms.  Therefore \(W=\Gamma(A)\) for an
invertible map \(A:V_A\to V_B\).

For vectors \((x,Ax),(y,Ay)\in\Gamma(A)\), the restricted symplectic form is
\[
    \omega((x,Ax),(y,Ay))
    =
    \omega_A(x,y)+\omega_B(Ax,Ay).
\]
In dimension two,
\[
    \omega_B(Ax,Ay)=\det(A)\,\omega_A(x,y).
\]
Thus the restriction to \(\Gamma(A)\) is
\[
    (1+\det A)\omega_A(x,y).
\]
Since \(W\) is nonsingular, \(1+\det A\neq0\) in \(\mathbb F_3\).  As
\(\det A\in\mathbb F_3^\times=\{1,2\}\), this forces \(\det A=1\).  Hence
\(A\in\SL(2,\mathbb F_3)\).  Finally, for such \(A\),
\(\Gamma(A)^\perp=\Gamma(-A)\), so \(B=B_{[A]}\).  Conversely, every
\(A\in\SL(2,\mathbb F_3)\) gives such a block.  Passing from \(A\) to
\(-A\) does not change the block, so the parameter set is
\(\PSL(2,\mathbb F_3)\).
\end{proof}

\subsection{Compatibility graph and the three five-block splittings}

We next determine when two nonlocal blocks \(B_{[A]}\) and \(B_{[B]}\) are
disjoint.

\begin{lemma}[Block-disjointness criterion]
\label{lem:block-disjointness-criterion}
For \(A,B\in\SL(2,\mathbb F_3)\), the following are equivalent:
\[
    B_{[A]}\cap B_{[B]}=\varnothing;
\]
\[
    A-B \ \text{and}\ A+B \ \text{are both invertible};
\]
\[
    [A]^{-1}[B]\in\PSL(2,\mathbb F_3)
    \ \text{is a nontrivial element of order two}.
\]
\end{lemma}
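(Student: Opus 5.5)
We argue directly from the graph-plane description of Lemma~\ref{lem:blocks-disjoint-from-local}. The block \(B_{[A]}\) is the union of the nonzero vectors of the two planes \(\Gamma(A)\) and \(\Gamma(-A)\), and likewise for \(B_{[B]}\). The two blocks therefore intersect if and only if one of the four planes \(\Gamma(\pm A)\) meets one of \(\Gamma(\pm B)\) nontrivially.

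For graph planes this is a kernel condition. A vector \((x,Ax)\) lies in \(\Gamma(C)\) if and only if \(Ax=Cx\), so \(\Gamma(A)\cap\Gamma(C)\neq 0\) if and only if \(A-C\) is singular. Running \(C\) over \(\pm B\) and using the sign symmetry (\(-A\) against \(\mp B\) gives the same two matrices up to sign), we get
\[
B_{[A]}\cap B_{[B]}=\varnothing \iff A-B \text{ and } A+B \text{ are both invertible}.
\]
This gives the first equivalence.

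For the second equivalence we factor out \(A\). Since \(A\pm B=A(I\pm A^{-1}B)\), invertibility of \(A\pm B\) is equivalent to invertibility of \(I\pm M\), where \(M:=A^{-1}B\in\SL(2,\mathbb F_3)\). The condition is invariant under \(M\mapsto -M\), so it is a condition on \([M]=[A]^{-1}[B]\in\PSL(2,\mathbb F_3)\).

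For \(M\in\SL(2,\mathbb F_3)\) the characteristic polynomial is \(\lambda^2-t\lambda+1\) with \(t=\operatorname{tr}M\). Hence
\[
\det(I-M)=2-t,\qquad \det(I+M)=2+t.
\]
In \(\mathbb F_3\), both are nonzero exactly when \(t\neq 2\) and \(t\neq 1\), that is, when \(t=0\).

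It remains to show that \(\operatorname{tr}M=0\) is equivalent to \([M]\) being a nontrivial involution in \(\PSL(2,\mathbb F_3)\).

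\begin{itemize}
\item If \(t=0\), Cayley--Hamilton gives \(M^2=-I\). Then \([M]^2=1\), and \([M]\neq 1\) because \(M\neq\pm I\) (the matrices \(\pm I\) have traces \(2\) and \(1\)).
\item Conversely, suppose \([M]\) has order two, so \(M^2=\pm I\) with \(M\neq\pm I\). If \(M^2=I\), the minimal polynomial of \(M\) divides \((\lambda-1)(\lambda+1)\). Since \(\det M=1\), the eigenvalues are \(1,1\) or \(-1,-1\), which forces \(M=\pm I\), a contradiction. Hence \(M^2=-I\). As \(M\) is not scalar, \(\lambda^2+1\) is its minimal polynomial and therefore equals its characteristic polynomial, giving \(t=0\).
\end{itemize}

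The main point needing care is the sign bookkeeping in the first equivalence: checking that the four plane pairs reduce exactly to the two conditions on \(A\pm B\). The rest is a short computation in characteristic three. As a sanity check, \(\PSL(2,\mathbb F_3)\cong A_4\) has exactly three involutions, which matches the three trace-zero classes.
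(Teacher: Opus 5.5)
Your proof is correct and follows the paper's argument essentially step for step. You reduce block disjointness to the four graph-plane intersections, hence to invertibility of \(A\pm B\). You then factor out \(A\) and characterize the condition on \(g=A^{-1}B\) through \(\operatorname{tr} g=0\), equivalently \(g^2=-I\) by Cayley--Hamilton. Your explicit formulas \(\det(I\mp M)=2\mp t\) and your exclusion of non-scalar \(M\) with \(M^2=I\) make the converse direction (order two \(\Rightarrow\) trace zero) more complete than the paper's brief treatment.
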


\begin{proof}
By definition,
\[
    B_{[A]}
    =
    (\Gamma(A)\setminus\{0\})\sqcup(\Gamma(-A)\setminus\{0\}),
\]
and similarly for \(B_{[B]}\).  Thus
\(B_{[A]}\cap B_{[B]}=\varnothing\) is equivalent to
\[
    \Gamma(\varepsilon A)\cap\Gamma(\delta B)=\{0\},
    \qquad
    \varepsilon,\delta\in\{\pm1\}.
\]
This holds if and only if both \(A-B\) and \(A+B\) are invertible.

Set \(g=A^{-1}B\in\SL(2,\mathbb F_3)\).  Then
\[
    A-B=A(I-g),
    \qquad
    A+B=A(I+g).
\]
Hence \(A-B\) and \(A+B\) are invertible if and only if \(g\) has neither
\(1\) nor \(-1\) as an eigenvalue.  For \(g\in\SL(2,\mathbb F_3)\), this is
equivalent to \(\tr g=0\).  By Cayley--Hamilton,
\[
    g^2-(\tr g)g+I=0,
\]
so \(\tr g=0\) is equivalent to
\[
    g^2=-I.
\]
Therefore \([g]\in\PSL(2,\mathbb F_3)\) is a nontrivial element of order two.
This proves the criterion.
\end{proof}

The group \(\PSL(2,\mathbb F_3)\) is isomorphic to \(A_4\).  Under this
isomorphism, the three nontrivial order-two elements form the normal Klein
four subgroup
\[
    V_4=\{e,\alpha,\beta,\gamma\}\triangleleft A_4.
\]

\begin{proposition}[Compatibility graph]
\label{prop:compatibility-graph}
Let \(\mathscr G\) be the graph whose vertices are the \(12\) blocks
\(B_{[A]}\), with an edge between two vertices when the corresponding blocks
are disjoint.  Then
\[
    \mathscr G
    \cong
    \operatorname{Cay}(A_4,V_4\setminus\{e\})
    =
    K_4\sqcup K_4\sqcup K_4 .
\]
Consequently, there are exactly three five-block splittings containing
\(B_0=\Lambda_{\rm loc}\).
\end{proposition}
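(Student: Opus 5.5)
Lemma~\ref{lem:block-disjointness-criterion} turns the graph question into a question about the group \(\PSL(2,\mathbb F_3)\cong A_4\). The vertex \(B_{[A]}\) is labeled by \([A]\in\PSL(2,\mathbb F_3)\), and Lemma~\ref{lem:blocks-disjoint-from-local} shows this labeling is a bijection between the \(12\) nonlocal blocks and the group elements. Two vertices \([A],[B]\) are adjacent exactly when \([A]^{-1}[B]\) is one of the three nontrivial involutions, i.e.\ \([A]^{-1}[B]\in V_4\setminus\{e\}\). This is precisely the definition of the (left-invariant) Cayley graph \(\operatorname{Cay}(A_4,V_4\setminus\{e\})\). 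The connection set is inverse-closed, since involutions are self-inverse, so the graph is undirected and simple. I will also note that the identification \(\PSL(2,\mathbb F_3)\cong A_4\) carries the order-two elements onto the Klein subgroup. This holds because \(A_4\) has exactly three involutions, the double transpositions, and they together with \(e\) form \(V_4\).

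To identify the graph, I will use the general fact that the Cayley graph of a group \(G\) with respect to \(H\setminus\{e\}\), for a subgroup \(H\), is the disjoint union of complete graphs on the left cosets \(gH\). Indeed, \(g^{-1}h\in H\setminus\{e\}\) holds iff \(g\ne h\) and \(gH=hH\). Since \([A_4:V_4]=3\) and \(|V_4|=4\), this gives \(K_4\sqcup K_4\sqcup K_4\). Normality of \(V_4\) is not actually needed here, but it makes left and right cosets agree, so the description does not depend on the convention \([A]^{-1}[B]\) versus \([B][A]^{-1}\).

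For the consequence, a five-block splitting containing \(B_0\) is a set of four further \(16\)-element sign-closed blocks that are pairwise disjoint, disjoint from \(B_0\), and together cover the remaining \(64\) labels. By Lemma~\ref{lem:blocks-disjoint-from-local}, each of these blocks is some \(B_{[A]}\). Pairwise disjointness means they form a \(4\)-clique in \(\mathscr G\), and the cardinality count \(16+4\cdot16=80\) shows that any such clique automatically covers \(V\setminus\{0\}\). The \(4\)-cliques of \(K_4\sqcup K_4\sqcup K_4\) are exactly its three components, because no edge joins different components. Hence there are exactly three splittings, one for each coset of \(V_4\) in \(A_4\).

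The only step needing care is the translation in the first paragraph. I must check that the isomorphism \(\PSL(2,\mathbb F_3)\cong A_4\) is used only to identify the three order-two elements. I also need the implicit assumption, established in the proof of Lemma~\ref{lem:blocks-disjoint-from-local}, that every block in a splitting has the form \((W\setminus\{0\})\sqcup(W^\perp\setminus\{0\})\) for a nonsingular plane \(W\), so that it is indeed a vertex of \(\mathscr G\).
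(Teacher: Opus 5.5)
Your proposal is correct and follows the paper's proof essentially step for step. Lemma~\ref{lem:block-disjointness-criterion} identifies $\mathscr G$ with $\operatorname{Cay}(A_4,V_4\setminus\{e\})$; the cosets of $V_4$ give the three $K_4$ components; the $4$-cliques are exactly those components; and each one, together with $B_0$, is a splitting by the count $16+4\cdot16=80$. Your two added remarks are also accurate. Normality of $V_4$ is not actually needed. And a splitting block must have the form $(W\setminus\{0\})\sqcup(W^\perp\setminus\{0\})$ to be a vertex of $\mathscr G$, an assumption the paper's proof also uses tacitly.
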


\begin{proof}
By Lemma~\ref{lem:block-disjointness-criterion},
\[
    B_{[A]}\cap B_{[B]}=\varnothing
    \quad\Longleftrightarrow\quad
    [A]^{-1}[B]\in V_4\setminus\{e\}.
\]
Thus \(\mathscr G\) is the Cayley graph of \(A_4\) with generating set
\(V_4\setminus\{e\}\).  Since \(V_4\triangleleft A_4\), the left cosets of
\(V_4\) form three four-element subsets.  Within each coset any two distinct
vertices are adjacent, and vertices in different cosets are not adjacent.
Hence
\[
    \mathscr G=K_4\sqcup K_4\sqcup K_4.
\]

A five-block splitting containing \(B_0\) consists of \(B_0\) and four
pairwise-disjoint nonlocal blocks.  Such a quadruple is exactly one of the
three \(K_4\) components.  Conversely, each \(K_4\) component gives four
pairwise-disjoint nonlocal blocks; together with \(B_0\), their total size is
\[
    16+4\cdot16=80=|V\setminus\{0\}|,
\]
so they form a five-block splitting.
\end{proof}

Thus the \(2304\) good Clifford-label actions do not produce \(2304\)
different unordered five-block systems.  They produce only three unordered
systems; the remaining multiplicity comes from the possible five-cycle
actions and their stabilizer fibers.

\subsection{Counting good symplectic matrices}

Fix one of the three five-block systems and write it as
\[
    \Sigma=\{\Lambda_0,\Lambda_1,\Lambda_2,\Lambda_3,\Lambda_4\},
    \qquad
    \Lambda_0=B_0.
\]
Let
\[
    \operatorname{Stab}(\Sigma):=\{M\in\Sp(4,\mathbb F_3):M(\Sigma)=\Sigma\}
\]
be its setwise stabilizer.  There is a natural permutation representation
\[
    \rho:\operatorname{Stab}(\Sigma)\longrightarrow S_5.
\]
The finite-geometry input used here is the stabilizer theorem for
nonsingular-pair spreads in \(\operatorname{PSp}(4,3)\): projectively,
the stabilizer acts as \(A_5\) on the five nonsingular pairs and has an
elementary \(2\)-group kernel of order \(16\)~\cite[Lemma~2.12]{HoffmanWeintraubSymplecticF3}.  Pulling back to
\(\Sp(4,\mathbb F_3)\) gives the following exact sequence.

\begin{proposition}[Five-block stabilizer]
\label{prop:five-block-stabilizer}
For every five-block system \(\Sigma\) above, there is a short exact sequence
\[
    1
    \longrightarrow
    E
    \longrightarrow
    \operatorname{Stab}(\Sigma)
    \xrightarrow{\ \rho\ }
    A_5
    \longrightarrow
    1,
    \qquad
    |E|=32.
\]
Equivalently,
\[
    \operatorname{Stab}(\Sigma)/E\cong A_5 .
\]
\end{proposition}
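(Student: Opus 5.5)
The plan is to pull back the projective statement of \cite[Lemma~2.12]{HoffmanWeintraubSymplecticF3} along the central extension \(1\to\{\pm I\}\to\Sp(4,\mathbb F_3)\to\operatorname{PSp}(4,3)\to1\), and then identify the kernel and image of \(\rho\) directly.

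First, I will check that \(-I\) lies in \(\operatorname{Stab}(\Sigma)\) and acts trivially on the five blocks. Each \(\Lambda_t\) is sign-closed, being a union \((W\setminus\{0\})\sqcup(W^\perp\setminus\{0\})\) of nonzero vectors of a nonsingular pair, so \(-I\in\ker\rho\). Next, I will note that an element \(M\in\Sp(4,\mathbb F_3)\) stabilizes \(\Sigma\) if and only if its image \(\bar M\in\operatorname{PSp}(4,3)\) stabilizes the corresponding spread of nonsingular pairs \(\{W_t,W_t^\perp\}\). This holds because \(M\) and \(-M\) act identically on the blocks, and \(M\Lambda_t=\Lambda_s\) is equivalent to \(M\) mapping the unordered pair \(\{W_t,W_t^\perp\}\) to \(\{W_s,W_s^\perp\}\). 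Hence \(\operatorname{Stab}(\Sigma)\) is the full preimage of the projective stabilizer \(\overline{\operatorname{Stab}}\), and \(\rho\) factors as \(\bar\rho\circ\pi\), where \(\pi\) is the projection.

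By the cited lemma, \(\bar\rho:\overline{\operatorname{Stab}}\to S_5\) has image \(A_5\) and kernel \(\bar E\), an elementary abelian \(2\)-group of order \(16\). Since \(\pi\) is surjective onto \(\overline{\operatorname{Stab}}\), the map \(\rho\) is onto \(A_5\), and \(\ker\rho=E:=\pi^{-1}(\bar E)\) has order \(2\cdot16=32\). This gives the exact sequence in the statement.

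The only point needing care is the dictionary between the paper's \(16\)-element blocks and the ``nonsingular pairs'' of Ref.~\cite{HoffmanWeintraubSymplecticF3}. The blocks must be exactly those associated with a nonsingular-pair spread, and the three systems from Proposition~\ref{prop:compatibility-graph} must each be such a spread. This follows from the description \(B=(W\setminus\{0\})\sqcup(W^\perp\setminus\{0\})\) used in Lemma~\ref{lem:blocks-disjoint-from-local}: five pairwise disjoint such blocks give five pairwise-complementary nonsingular pairs, that is, a spread of nonsingular pairs. I also need to confirm that the convention in Ref.~\cite{HoffmanWeintraubSymplecticF3} counts the stabilizer projectively, which I expect is the main obstacle.

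As an independent sanity check, I would verify the order by computing \(|\operatorname{Stab}(\Sigma)|=32\cdot60=1920\) in two ways. The first uses the orbit–stabilizer theorem: \(|\Sp(4,\mathbb F_3)|=51840\), and the number of spreads is \(51840/1920=27\). This should be consistent with the local block lying in exactly three of them, by Proposition~\ref{prop:compatibility-graph}. The second is a direct finite enumeration.
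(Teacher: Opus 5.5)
Your proposal is correct and is essentially the paper's own argument: pull back the projective sequence \(1\to E_{\rm proj}\to K_{\rm proj}(\Sigma)\to A_5\to1\) of Ref.~\cite[Lemma~2.12]{HoffmanWeintraubSymplecticF3} along \(\pi:\Sp(4,\mathbb F_3)\to\operatorname{PSp}(4,\mathbb F_3)\), identify \(\operatorname{Stab}(\Sigma)=\pi^{-1}(K_{\rm proj}(\Sigma))\), and read off \(|E|=2\cdot16=32\). Two of your additions go beyond the paper. Your explicit checks (that \(-I\) fixes every sign-closed block, and the block/nonsingular-pair dictionary) spell out what the paper only asserts. Your orbit--stabilizer count also removes the convention worry: there are \(45\) nonsingular pairs, so there are \(45\cdot3/5=27\) spreads, and \(\Sp(4,\mathbb F_3)\) acts transitively on them, which gives \(|\operatorname{Stab}(\Sigma)|=51840/27=1920\); since the image is \(A_5\), this forces \(|E|=32\) whichever convention Ref.~\cite{HoffmanWeintraubSymplecticF3} uses.
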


\begin{proof}
Let
\[
    \pi:\Sp(4,\mathbb F_3)\longrightarrow\operatorname{PSp}(4,\mathbb F_3)
\]
be the quotient by the central subgroup \(\{\pm I\}\).  Projectively, by the nsp-spread stabilizer sequence of
Ref.~\cite[Lemma~2.12]{HoffmanWeintraubSymplecticF3}, the five-block stabilizer has exact sequence
\[
    1
    \longrightarrow
    E_{\rm proj}
    \longrightarrow
    K_{\rm proj}(\Sigma)
    \longrightarrow
    A_5
    \longrightarrow
    1,
    \qquad
    |E_{\rm proj}|=16.
\]
The full stabilizer in \(\Sp(4,\mathbb F_3)\) is the preimage
\[
    \operatorname{Stab}(\Sigma)=\pi^{-1}(K_{\rm proj}(\Sigma)).
\]
The kernel \(E\) of the induced map \(\operatorname{Stab}(\Sigma)\to A_5\) fits into
\[
    1
    \longrightarrow
    \{\pm I\}
    \longrightarrow
    E
    \longrightarrow
    E_{\rm proj}
    \longrightarrow
    1.
\]
Therefore \(|E|=2|E_{\rm proj}|=32\), and exactness follows by pullback.
\end{proof}

\begin{lemma}[Good elements are lifts of five-cycles]
\label{lem:good-iff-five-cycle}
For a fixed five-block system \(\Sigma\), the good elements \(S\in \operatorname{Stab}(\Sigma)\)
with \(\Sigma(S)=\Sigma\) are exactly
\[
    \rho^{-1}(\mathcal C_5),
\]
where \(\mathcal C_5\subset A_5\) is the set of five-cycles.
\end{lemma}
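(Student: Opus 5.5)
The plan is to reduce goodness to the cycle structure of the permutation \(\rho(S)\) on the five blocks of \(\Sigma\). I will use only three facts. First, \(S\in\Sp(4,\mathbb F_3)\) is a linear bijection of \(V\), so it maps \(16\)-element blocks to \(16\)-element blocks and preserves disjointness. Second, Proposition~\ref{prop:five-block-stabilizer} gives \(\rho(\operatorname{Stab}(\Sigma))=A_5\). Third, the only cycle types occurring in \(A_5\) are the identity, \(3\)-cycles, products of two disjoint transpositions, and \(5\)-cycles. Here ``\(\Sigma(S)=\Sigma\)'' is read as saying that the unordered system \(\{S^tB_0\}_{t=0}^4\) equals \(\Sigma\).

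\emph{Five-cycles are good.} Let \(S\in\operatorname{Stab}(\Sigma)\) with \(\rho(S)\in\mathcal C_5\).
\begin{enumerate}[label=(\roman*)]
\item Since \(S\) permutes \(\Sigma\) and \(B_0\in\Sigma\), every \(S^tB_0\) lies in \(\Sigma\).
\item The orbit of the point \(B_0\) under \(\langle\rho(S)\rangle\) is a cycle of \(\rho(S)\). Because \(\rho(S)\) is a \(5\)-cycle, this orbit has length \(5\).
\item Hence \(S^0B_0,\dots,S^4B_0\) are exactly the five distinct members of \(\Sigma\).
\item Since \(\Sigma\) is a five-block splitting, these blocks are pairwise disjoint and cover \(V\setminus\{0\}\).
\end{enumerate}
So \(S\) is good, and its system \(\Sigma(S)\) is \(\Sigma\).

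\emph{Good elements are five-cycles.} Let \(S\in\operatorname{Stab}(\Sigma)\) be good with \(\{S^tB_0\}_{t=0}^4=\Sigma\).
\begin{enumerate}[label=(\roman*)]
\item The \(\langle\rho(S)\rangle\)-orbit of \(B_0\) contains all five blocks, so \(\rho(S)\) has a cycle of length \(5\).
\item Among the cycle types of \(A_5\), only a \(5\)-cycle contains a cycle of length \(5\). Hence \(\rho(S)\in\mathcal C_5\).
\end{enumerate}

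I would also record that the stabilizer hypothesis is automatic for good \(S\) with system \(\Sigma\), so the lemma loses nothing by restricting to \(\operatorname{Stab}(\Sigma)\). If \(S\) is good with \(\Lambda_t=S^tB_0\), then \(S\Lambda_t=\Lambda_{t+1}\) for \(t\le 3\). The remaining image \(S\Lambda_4=S^5B_0\) is a \(16\)-element block. It is disjoint from \(S\Lambda_0,\dots,S\Lambda_3=\Lambda_1,\dots,\Lambda_4\), because \(S\) is injective and \(\Lambda_4\) is disjoint from \(\Lambda_0,\dots,\Lambda_3\). By the partition property it must therefore equal the complement \(\Lambda_0\). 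Hence \(S\) permutes \(\Sigma\), so \(S\in\operatorname{Stab}(\Sigma)\), and \(\rho(S)\) is the cyclic shift \(t\mapsto t+1\).

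The main obstacle is this last bookkeeping step, namely showing that \(S^5B_0\) returns to \(B_0\). It is handled by the counting and disjointness argument above. The rest is the elementary cycle-type fact about \(A_5\).
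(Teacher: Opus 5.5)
Your proof is correct and follows the paper's argument: the orbit \(B_0,SB_0,\dots,S^4B_0\) exhausts \(\Sigma\) exactly when \(\rho(S)\) is a five-cycle, and you spell out the same steps in more detail. The appeal to \(A_5\) cycle types is unnecessary, since any permutation of five objects with an orbit of length five is already a five-cycle. Your extra remark is also correct: a good \(S\) with \(\Sigma(S)=\Sigma\) lies automatically in \(\operatorname{Stab}(\Sigma)\), because injectivity and the partition force \(S^5B_0=B_0\). The paper uses this fact only tacitly when it counts \(768\) good elements per system.
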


\begin{proof}
If \(S\) is good and \(\Sigma(S)=\Sigma\), then the sequence
\[
    B_0,\ SB_0,\ S^2B_0,\ S^3B_0,\ S^4B_0
\]
runs through all five blocks in \(\Sigma\).  Hence \(\rho(S)\) is a
five-cycle.  Conversely, if \(\rho(S)\) is a five-cycle, then the five blocks
\(S^tB_0\), \(t=0,\ldots,4\), are all distinct and exhaust \(\Sigma\).  Since
\(\Sigma\) is a five-block splitting of \(V\setminus\{0\}\), \(S\) is good.
\end{proof}

Since \(A_5\) contains \(24\) five-cycles, each fixed five-block system
contributes
\[
    24\cdot |E|=24\cdot32=768
\]
good symplectic matrices.

\begin{theorem}[Good Clifford-word symplectic count]
\label{thm:good-clifford-word-count-app}
The good set satisfies
\[
    |\mathcal S_{\rm good}|=2304.
\]
Moreover, exactly \(1152\) elements have order \(5\), and exactly \(1152\)
elements have order \(10\).
\end{theorem}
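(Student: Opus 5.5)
The plan is to reduce the count to Lemma~\ref{lem:good-iff-five-cycle} by showing that each good \(S\) lies in the stabilizer of exactly one of the three five-block systems of Proposition~\ref{prop:compatibility-graph}. The order statement will then come from a sign-pairing argument.

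\emph{Step 1: each good \(S\) stabilizes its own system.} Let \(S\) be good and set \(\Lambda_t=S^tB_0\) for \(t=0,\dots,4\). Since \(S\) is a bijection of \(V\setminus\{0\}\), the set \(S^5B_0=S(\Lambda_4)\) is disjoint from \(S(\Lambda_{t})=\Lambda_{t+1}\) for \(t=0,\dots,3\). It has \(16\) elements, and the only \(16\) labels of \(V\setminus\{0\}\) outside \(\Lambda_1\sqcup\cdots\sqcup\Lambda_4\) are those of \(\Lambda_0\). Hence \(S^5B_0=B_0\), so \(S\) permutes \(\Sigma(S):=\{\Lambda_0,\dots,\Lambda_4\}\) cyclically.

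\emph{Step 2: the count.} By Step~1, \(\Sigma(S)\) is a five-block splitting containing \(B_0\). Its four nonlocal blocks are pairwise disjoint and disjoint from \(B_0\), so by Lemma~\ref{lem:blocks-disjoint-from-local} and Proposition~\ref{prop:compatibility-graph} they form one of the three \(K_4\) components. Each good \(S\) therefore lies in \(\operatorname{Stab}(\Sigma)\) for exactly one of the three systems \(\Sigma\), and \(\Sigma(S)=\Sigma\). Conversely, Lemma~\ref{lem:good-iff-five-cycle} identifies the good elements attached to \(\Sigma\) with \(\rho^{-1}(\mathcal C_5)\). By Proposition~\ref{prop:five-block-stabilizer} this set has size \(24\cdot|E|=768\). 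The three families are disjoint because \(\Sigma(S)\) is determined by \(S\), giving \(|\mathcal S_{\rm good}|=3\cdot768=2304\).

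\emph{Step 3: the orders.} Since \(\rho(S)\) is a five-cycle, \(S^5\in E\) and \(5\mid\operatorname{ord}(S)\). I will show that \(S^5\in\{\pm I\}\). The element \(S^5\) commutes with \(S\), so its image in \(E_{\rm proj}\) is fixed under conjugation by \(S\). The plan is to identify \(E_{\rm proj}\cong\mathbb F_2^4\), as an \(A_5\)-module, with the sum-zero submodule of the permutation module \(\mathbb F_2^5\) on the five blocks. In \(\mathbb F_2^5\) a five-cycle fixes only the constant vectors, and \((1,\dots,1)\) has coordinate sum \(1\ne0\), so it fixes no nonzero vector of the sum-zero submodule. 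Hence the image of \(S^5\) in \(E_{\rm proj}\) is trivial, and \(S^5=\pm I\). This means \(\operatorname{ord}(S)=5\) when \(S^5=I\) and \(\operatorname{ord}(S)=10\) when \(S^5=-I\).

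For the split, note that \(-I\in E\), so \(S\mapsto -S\) preserves \(\rho^{-1}(\mathcal C_5)\). It also gives \((-S)^5=-S^5\), so it is an involution exchanging the two order classes, and each class has \(1152\) elements.

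The main obstacle is the module identification in Step~3, which needs more than the order-\(16\) statement of the cited lemma. The first thing to try is to read the kernel off explicitly from Ref.~\cite{HoffmanWeintraubSymplecticF3}: the kernel elements act as \(\pm1\) on each of the five planes-pairs with an even-parity constraint, which is exactly the sum-zero module. As a fallback, compute \(S^5\) directly for a single representative, e.g. \(S_{K_{\rm Cl}}\) with \(S_{K_{\rm Cl}}^5=I_4\). Conjugation by \(\operatorname{Stab}(\Sigma)\) and the \(\Sp(4,\mathbb F_3)\)-transitivity between the three systems would then propagate the conclusion, or a finite machine check of all \(2304\) elements would settle it.
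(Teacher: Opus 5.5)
Your proposal is correct and follows the paper's proof essentially step for step. The count is three systems times \(24\cdot 32=768\) lifts of five-cycles, with disjointness because \(S\) determines \(\Sigma(S)\); the orders come from \(S^5\in E\), the five-cycle acting without nonzero fixed vectors on \(E_{\rm proj}\), and the fixed-point-free pairing \(S\mapsto -S\).

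You also supply two details the paper leaves implicit or only asserts. Your Step~1 proves \(S^5B_0=B_0\), so every good \(S\) really lies in \(\operatorname{Stab}(\Sigma(S))\). You also identify \(E_{\rm proj}\) with the even-weight submodule of \(\mathbb F_2^5\). That identification is correct when stated precisely: record whether each kernel element fixes or swaps the two planes of each of the five nonsingular pairs. The swaps always come in even number, and this is more accurate than your ``acts as \(\pm1\)'' phrasing.
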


\begin{proof}
By Proposition~\ref{prop:compatibility-graph}, there are exactly three
five-block systems containing \(B_0\).  Each contributes \(768\) good elements.
A good element determines its five-block system uniquely:
\[
    \Sigma(S)
    =
    \{B_0,SB_0,S^2B_0,S^3B_0,S^4B_0\}.
\]
Hence the three contributions are disjoint, and
\[
    |\mathcal S_{\rm good}|
    =
    3\cdot768
    =
    3\cdot24\cdot32
    =
    2304.
\]

It remains to explain the order split.  Let \(S\in\mathcal S_{\rm good}\),
and let \(\bar S\) be its projective class in
\(\operatorname{PSp}(4,\mathbb F_3)\).  Since \(\rho(S)\) is a five-cycle,
\(S^5\in E\).  Projectively, \(\bar S^5\in E_{\rm proj}\).  The conjugation
action of a five-cycle on \(E_{\rm proj}\cong(\mathbb Z/2)^4\) is the
standard four-dimensional \(A_5\)-module over \(\mathbb F_2\), and a
five-cycle has no nonzero fixed vector in this module.  Since \(\bar S^5\)
commutes with \(\bar S\), it is fixed by this action.  Hence
\[
    \bar S^5=1.
\]
Therefore
\[
    S^5\in\{\pm I\},
\]
so every good element has order \(5\) or \(10\).

Finally, multiplication by \(-I\) preserves \(\mathcal S_{\rm good}\) and
pairs elements without fixed points:
\[
    S\longmapsto -S.
\]
If \(S^5=I\), then \((-S)^5=-I\), so \(-S\) has order \(10\).  Conversely,
if \(S^5=-I\), then \((-S)^5=I\), so \(-S\) has order \(5\).  Hence the
\(2304\) good elements split equally between order \(5\) and order \(10\).
\end{proof}

\subsection{Two-sided local equivalence of all good cores}
\label{app:Sgood-local-double-coset}

We now prove that the \(2304\) good symplectic matrices lie in a single
two-sided local double coset.  Let
\[
    H_{\rm loc}^0
    :=
    \SL(V_A)\times \SL(V_B)
    \subset \Sp(V)
\]
be the strict local symplectic subgroup preserving \(V_A\) and \(V_B\)
separately.  In gate language, this is the symplectic image of the local
single-qutrit Clifford group
\[
    \operatorname{Cliff}_{1,3}\otimes\operatorname{Cliff}_{1,3} .
\]
If one also allows the swap of the two tensor factors, one obtains the larger
normalizer
\[
    H_{\rm loc}:=H_{\rm loc}^0\rtimes C_2,
\]
but the argument below already works with \(H_{\rm loc}^0\).

\begin{lemma}[Local double-coset lemma]
\label{lem:local-double-coset}
Let
\[
    \mathcal D
    :=
    \{S\in\Sp(V):S\Delta_0\cap\Delta_0=\varnothing\}.
\]
Then \(\mathcal D\) is a single double coset under \(H_{\rm loc}^0\).  That is,
for any \(S,T\in\mathcal D\), there exist
\[
    h_L,h_R\in H_{\rm loc}^0
\]
such that
\[
    T=h_LSh_R .
\]
\end{lemma}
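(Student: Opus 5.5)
The plan is to turn the double-coset statement into a transitivity statement about the action of \(H_{\rm loc}^0\) on symplectic planes. Write \(\mathcal N\) for the set of nonsingular two-dimensional planes in \(V\). For \(S\in\Sp(V)\), the images \(SV_A\) and \(SV_B=(SV_A)^\perp\) are nonsingular, so \(S\mapsto W_S:=SV_A\) maps \(\Sp(V)\) to \(\mathcal N\).

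\emph{Step 1 (right cosets are fibres).} We have \(W_S=W_T\) if and only if \(S^{-1}T\) preserves \(V_A\). It then also preserves \(V_A^\perp=V_B\). An element of \(\Sp(V)\) that preserves both planes restricts to a symplectic map on each. Since \(\Sp(2,\mathbb F_3)=\SL(2,\mathbb F_3)\), as already used in Lemma~\ref{lem:blocks-disjoint-from-local}, such an element lies in \(H_{\rm loc}^0\). Hence the right cosets \(SH_{\rm loc}^0\) correspond bijectively to the planes \(W_S\).

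\emph{Step 2 (double cosets are orbits).} From Step 1, \(T\in H_{\rm loc}^0SH_{\rm loc}^0\) if and only if \(W_T=h_LW_S\) for some \(h_L\in H_{\rm loc}^0\). So the lemma is equivalent to the statement that \(H_{\rm loc}^0\) acts transitively on \(\{W_S:S\in\mathcal D\}\).

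\emph{Step 3 (graph planes).} Next I would identify these planes with the graph planes of Lemma~\ref{lem:blocks-disjoint-from-local}. If \(W_S\) meets neither \(V_A\) nor \(V_B\) nontrivially, then the proof of that lemma gives \(W_S=\Gamma(A)\) with \(A\in\SL(2,\mathbb F_3)\). For \(h=(g,k)\in\SL(V_A)\times\SL(V_B)\) we have \(h\,\Gamma(A)=\Gamma(kAg^{-1})\). Taking \(g=I\) and \(k=BA^{-1}\) sends \(\Gamma(A)\) to \(\Gamma(B)\), so the action is transitive on all \(24\) graph planes. Choosing \(h_L\) in this way and then correcting on the right by Step 1 produces \(h_R\).

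\emph{Main obstacle.} The difficulty is in Step 3: guaranteeing that \(W_S\) is transverse to both \(V_A\) and \(V_B\). The condition \(S\Delta_0\cap\Delta_0=\varnothing\), read literally, only says \(W_S\notin\{V_A,V_B\}\). A nonsingular plane can meet \(V_A\) in a line, for example \(\operatorname{span}\{(e,0),(f,e')\}\) with \(\omega_A(e,f)\neq0\) and \(e'\neq0\). Such a plane is not a graph plane, and the orbit count blocks transitivity. There are \(|\Sp(4,\mathbb F_3)|/|\SL(2,\mathbb F_3)|^2=90\) nonsingular planes, of which only \(24\) are graphs. So I would first check whether \(\mathcal D\) is meant in the block sense, \(SB_0\cap B_0=\varnothing\); this is the case for every good \(S\), because \(SB_0=\Lambda_1\) is disjoint from \(\Lambda_0\). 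If so, I would state the lemma with that transversality hypothesis, and the argument above closes. If the literal reading is intended, the non-transverse planes form at least one further \(H_{\rm loc}^0\)-orbit, so \(\mathcal D\) is not a single double coset. In that case the lemma must be restricted to the transverse stratum, which is the only stratum needed for \(\mathcal S_{\rm good}\).
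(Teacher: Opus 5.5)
Your argument is correct and is essentially the paper's proof. You identify \(SV_A\) as a graph plane \(\Gamma(A)\) with \(A\in\SL(2,\mathbb F_3)\), obtain transitivity from \((g,k)\cdot\Gamma(A)=\Gamma(kAg^{-1})\), and correct on the right by an element preserving \(V_A\) and \(V_B\), which lies in \(H_{\rm loc}^0\); working with the ordered plane \(W_S=SV_A\) also spares you the paper's separate sign-matching remark.

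Your caveat is well founded. The paper's proof starts from \(\delta\cap V_B=0\), so it reads \(S\Delta_0\cap\Delta_0=\varnothing\) in the pointwise block sense \(SB_0\cap B_0=\varnothing\), which is exactly how the condition arises in Theorem~\ref{thm:Sgood-one-double-coset}. Under the literal set-of-planes reading, your rank-one example shows that \(\mathcal D\) splits into more than one double coset.
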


\begin{proof}
We identify \(V=V_A\oplus V_B\) and write a vector as \((x,y)\), with
\(x\in V_A\) and \(y\in V_B\).  Let \(\Delta=\{\delta,\delta^\perp\}\) be a nonsingular pair disjoint from \(\Delta_0\).  Since
\(\delta\cap V_B=0\), the projection \(\delta\to V_A\) is an isomorphism.
Hence \(\delta\) is the graph of an invertible map
\[
    A:V_A\to V_B,
    \qquad
    \delta=\Gamma(A).
\]
By the same determinant computation as in
Lemma~\ref{lem:blocks-disjoint-from-local}, nonsingularity forces
\(A\in\SL(2,\mathbb F_3)\), and \(\delta^\perp=\Gamma(-A)\).  Thus every nonsingular pair disjoint from \(\Delta_0\) can be written as
\[
    \Delta_A:=\{\Gamma(A),\Gamma(-A)\},
    \qquad
    A\in\SL(2,\mathbb F_3),
\]
where \(A\) and \(-A\) define the same unordered pair.

The group \(H_{\rm loc}^0=\SL(V_A)\times\SL(V_B)\) acts on graphs by
\[
    (g_A,g_B)\cdot\Gamma(A)=\Gamma(g_BAg_A^{-1}).
\]
Since left and right multiplication act transitively on
\(\SL(2,\mathbb F_3)\), the local group \(H_{\rm loc}^0\) acts transitively
on the nonsingular pairs disjoint from \(\Delta_0\).  Moreover, by choosing
the sign of \(A\), we may match the ordered components \(\Gamma(A),\Gamma(-A)\) with any prescribed ordered pair of complementary components.

Now take \(S,T\in\mathcal D\).  Then \(S\Delta_0\) and \(T\Delta_0\) are both
nonsingular pairs disjoint from \(\Delta_0\).  By the transitivity just proved,
there exists \(h_L\in H_{\rm loc}^0\) such that
\[
    h_LS(V_A)=T(V_A),
    \qquad
    h_LS(V_B)=T(V_B).
\]
Therefore
\[
    h_R:=S^{-1}h_L^{-1}T
\]
preserves \(V_A\) and \(V_B\) separately, so \(h_R\in H_{\rm loc}^0\).  Hence \(T=h_LSh_R\).
\end{proof}

\begin{theorem}[All good symplectic cores are two-sided locally equivalent]
\label{thm:Sgood-one-double-coset}
For any two good cores
\[
    S,T\in\mathcal S_{\rm good},
\]
there exist
\[
    h_L,h_R\in H_{\rm loc}^0
\]
such that
\[
    T=h_LSh_R.
\]
Equivalently,
\[
    \mathcal S_{\rm good}\subset H_{\rm loc}^0 S H_{\rm loc}^0
\]
for any fixed \(S\in\mathcal S_{\rm good}\).
\end{theorem}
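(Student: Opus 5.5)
The theorem will follow from the local double-coset lemma (Lemma~\ref{lem:local-double-coset}) once we show that every good element lies in the set \(\mathcal D\). So the plan has two steps: prove \(\mathcal S_{\rm good}\subseteq\mathcal D\), and then apply the lemma directly.

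For the inclusion, fix \(S\in\mathcal S_{\rm good}\). Goodness gives \(V\setminus\{0\}=\bigsqcup_{t=0}^4 S^tB_0\), so in particular \(SB_0\cap B_0=\varnothing\). Since \(B_0=(V_A\setminus\{0\})\sqcup(V_B\setminus\{0\})\), we have
\[
SB_0=(SV_A\setminus\{0\})\sqcup(SV_B\setminus\{0\}).
\]
Disjointness therefore says that neither \(SV_A\) nor \(SV_B\) meets \(V_A\) or \(V_B\) outside the origin. In particular neither plane equals \(V_A\) or \(V_B\), so the unordered pair \(S\Delta_0=\{SV_A,SV_B\}\) shares no plane with \(\Delta_0=\{V_A,V_B\}\). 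Reading \(S\Delta_0\cap\Delta_0=\varnothing\) as "no common member plane", this is exactly \(S\in\mathcal D\).

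The proof of Lemma~\ref{lem:local-double-coset} actually needs slightly more: it requires \(SV_A\cap V_B=0\), so that \(SV_A\) is a graph \(\Gamma(A)\) over \(V_A\). The disjointness just established supplies this trivial intersection (and likewise \(SV_A\cap V_A=0\)). Since \(S\) is symplectic, \(SV_A\) is a nonsingular plane with \((SV_A)^\perp=SV_B\). So \(S\Delta_0\) is a nonsingular pair disjoint from \(\Delta_0\) in the strong sense the lemma uses. This matches the 16-element blocks classified in Lemma~\ref{lem:blocks-disjoint-from-local}: \(SB_0=B_{[A]}\) for some \([A]\in\PSL(2,\mathbb F_3)\).

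Now take \(S,T\in\mathcal S_{\rm good}\). Both lie in \(\mathcal D\), so Lemma~\ref{lem:local-double-coset} yields \(h_L,h_R\in H_{\rm loc}^0\) with \(T=h_LSh_R\). Equivalently, \(\mathcal S_{\rm good}\subset H_{\rm loc}^0SH_{\rm loc}^0\) for any fixed good \(S\).

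The only delicate point is the ordered matching inside the lemma. We need \(h_L\) with \(h_LSV_A=TV_A\) and \(h_LSV_B=TV_B\), not merely \(h_L\) mapping the unordered pair to the unordered pair. I would verify this explicitly as follows. Write \(SV_A=\Gamma(A)\) and \(TV_A=\Gamma(A')\) with \(A,A'\in\SL(2,\mathbb F_3)\), choosing signs of representatives so that these are the ordered first components. Then choose \(g_A=I\) and \(g_B=A'A^{-1}\in\SL(V_B)\). This maps \(\Gamma(A)\mapsto\Gamma(A')\) and \(\Gamma(-A)\mapsto\Gamma(-A')\), and the second pair of planes are the orthogonal complements, so both ordered components match. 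Then \(h_R=S^{-1}h_L^{-1}T\) fixes \(V_A\) and \(V_B\) and is symplectic, so its restrictions lie in \(\Sp(2)=\SL(2)\), i.e.\ \(h_R\in H_{\rm loc}^0\). Apart from this bookkeeping, the argument is immediate.
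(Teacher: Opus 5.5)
Your proposal is correct and follows the paper's proof: goodness gives \(SB_0\cap B_0=\varnothing\), hence \(S\in\mathcal D\), and Lemma~\ref{lem:local-double-coset} then supplies \(h_L,h_R\in H_{\rm loc}^0\). Two of your additions make explicit what the paper leaves implicit in ``passing from label blocks to nonsingular pairs'' and ``choosing the sign of \(A\)'': you note that \(S\Delta_0\cap\Delta_0=\varnothing\) must be read as trivial intersection of the planes, not merely as sharing no plane, and you make the explicit ordered choice \(g_A=I\), \(g_B=A'A^{-1}\).
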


\begin{proof}
If \(S\in\mathcal S_{\rm good}\), then
\[
    B_0,\ SB_0,\ S^2B_0,\ S^3B_0,\ S^4B_0
\]
are pairwise disjoint.  In particular,
\[
    SB_0\cap B_0=\varnothing.
\]
Passing from label blocks to nonsingular pairs, this says
\[
    S\Delta_0\cap\Delta_0=\varnothing.
\]
Thus \(S\in\mathcal D\).  The same holds for \(T\).  Lemma~\ref{lem:local-double-coset}
therefore gives
\(
    T=h_LSh_R
\)
with \(h_L,h_R\in H_{\rm loc}^0\).
\end{proof}

\subsection{Clifford-level and ansatz-level consequences}

By the Clifford-group convention above, there is a natural projection
\(\pi:\operatorname{Cliff}_{2,3}\longrightarrow \Sp(4,\mathbb F_3)\) 
obtained from the action of Clifford unitaries on Pauli labels, after
quotienting out Pauli displacements and central phases.  For a two-qutrit
Clifford core \(K\), write \(S_K:=\pi(K)\), and define
\[
    \mathcal K_{\rm good}
    :=
    \{K\in\operatorname{Cliff}_{2,3}:S_K\in\mathcal S_{\rm good}\}.
\]

\begin{corollary}[All good Clifford cores are two-sided locally equivalent]
\label{cor:Clifford-local-equivalence}
For any two good Clifford cores
\[
    K,K'\in\mathcal K_{\rm good},
\]
there exist local Clifford unitaries
\[
    A,B\in\operatorname{Cliff}_{1,3}\otimes\operatorname{Cliff}_{1,3}
\]
such that, after choosing compatible determinant-one representatives,
\[
    K'=AKB.
\]
\end{corollary}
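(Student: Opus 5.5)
We lift Theorem~\ref{thm:Sgood-one-double-coset} from the symplectic quotient back to the Clifford group. Given \(K,K'\in\mathcal K_{\rm good}\), set \(S=S_K\) and \(T=S_{K'}\). Both lie in \(\mathcal S_{\rm good}\), so Theorem~\ref{thm:Sgood-one-double-coset} supplies \(h_L,h_R\in H^0_{\rm loc}=\SL(V_A)\times\SL(V_B)\) with \(T=h_LSh_R\).

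\textbf{Step 1: lift the local symplectic factors.} The one-qutrit Clifford group surjects onto \(\Sp(2,\mathbb F_3)=\SL(2,\mathbb F_3)\), since \(F\) and \(P\) generate the quotient. Hence each factor of \(h_L\) and \(h_R\) lifts to a single-qutrit Clifford unitary. Tensoring the lifts gives \(A_0,B_0\in\operatorname{Cliff}_{1,3}\otimes\operatorname{Cliff}_{1,3}\) with \(\pi(A_0)=h_L\) and \(\pi(B_0)=h_R\). Because \(\pi\) is a homomorphism, \(\pi(A_0KB_0)=h_LSh_R=T=\pi(K')\).

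\textbf{Step 2: identify the kernel.} By the Clifford-group conventions, the kernel of \(\pi\) consists of Pauli displacements times central phases. Therefore
\[
K'=\lambda\,D\,A_0KB_0
\]
for some two-qutrit Pauli operator \(D=X_1^{a_1}Z_1^{b_1}\otimes X_2^{a_2}Z_2^{b_2}\) and some phase \(\lambda\). The operator \(D\) is itself a tensor product of single-qutrit Paulis, and each of those is a single-qutrit Clifford. So \(A:=DA_0\) is again a local Clifford unitary, and \(K'=\lambda AKB_0\).

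\textbf{Step 3: remove the phase.} This is the delicate point, and it is where ``compatible determinant-one representatives'' enters. Taking determinants gives \(\det K'=\lambda^9\det A\det K\det B_0\). We rescale each single-qutrit factor of \(A\) and \(B_0\) by a scalar so that it lies in \(SU(3)\); this changes only the central phase. With \(K,K'\) fixed in \(SU(9)\) and \(A,B\in SU(3)\otimes SU(3)\), we get \(\lambda^9=1\), so \(\lambda\) is a ninth root of unity. The residual scalar can be absorbed in either of two ways:
\begin{enumerate}[label=(\roman*)]
\item Choose the determinant-one representative of \(K'\) to be \(\bar\lambda K'\), which is the compatible choice in the statement. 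The main text already flags this freedom in the remark on central ninth roots.
\item Absorb \(\lambda\) into the local factor, when \(\lambda\) is a cube root of unity that \(SU(3)\) scalars can carry.
\end{enumerate}
The claimed identity \(K'=AKB\) then holds, with \(A,B\) local Cliffords and the stated phase convention.

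I expect Step 3's phase bookkeeping to be the only real obstacle. The group-theoretic content is entirely supplied by Theorem~\ref{thm:Sgood-one-double-coset} and the surjectivity of the local Clifford group onto \(H^0_{\rm loc}\).
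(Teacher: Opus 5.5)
Your proof is correct and follows the paper's argument essentially step for step. You lift \(h_L,h_R\) from Theorem~\ref{thm:Sgood-one-double-coset} to local Clifford unitaries, use that two Cliffords with the same symplectic action differ by a Pauli operator and a central phase, and absorb the tensor-product Pauli into the left local factor. Your determinant computation, which shows the leftover scalar is a ninth root of unity removable by re-choosing the \(SU(9)\) representative of \(K'\), simply makes explicit what the paper calls ``the harmless central phase convention.''
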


\begin{proof}
By Theorem~\ref{thm:Sgood-one-double-coset}, there exist
\(h_L,h_R\in H_{\rm loc}^0\) such that \(S_{K'}=h_LS_Kh_R\).
Choose local Clifford lifts \(A_0,B_0\) satisfying \(\pi(A_0)=h_L\) and
\(\pi(B_0)=h_R\).  Then \(\pi(A_0KB_0)=S_{K'}\), so \(K'\) and
\(A_0KB_0\) have the same symplectic action.  Two Clifford
unitaries with the same symplectic action differ by a Pauli operator and a
central phase:
\[
    K'=e^{i\theta}Q\,A_0KB_0 .
\]
Every two-qutrit Pauli operator factors as a tensor product of one-qutrit
Paulis, hence
\[
    Q=Q_A\otimes Q_B
    \in
    \operatorname{Cliff}_{1,3}\otimes\operatorname{Cliff}_{1,3}.
\]
Absorbing \(Q\) into the left local Clifford factor gives the desired
two-sided local equivalence, up to the harmless central phase convention of
the Clifford representative.  With compatible determinant-one representatives,
this is written as \(K'=AKB\).
\end{proof}

\begin{proposition}[Two-sided local factors do not change the reachable set]
\label{prop:two-sided-local-equivalence-same-image}
Let \(A,B\in\Loc\), and set
\[
    K'=AKB.
\]
Then
\[
    \operatorname{Im}\Phi_{K'}=\operatorname{Im}\Phi_K.
\]
\end{proposition}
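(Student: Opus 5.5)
The plan is to prove two inclusions, \(\operatorname{Im}\Phi_{K'}\subseteq\operatorname{Im}\Phi_K\) and the reverse. Both come from explicitly reparameterizing the local layers, using only that \(\Loc\) is a group.

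For the first inclusion, take \((L_1,\ldots,L_5)\in\Loc^5\) and substitute \(K'=AKB\) into the four core positions. Regrouping adjacent factors gives
\[
L_5K'L_4K'L_3K'L_2K'L_1=(L_5A)K(BL_4A)K(BL_3A)K(BL_2A)K(BL_1).
\]
This is the identity already displayed in Sec.~\ref{subsec:short-clifford-word}. Define
\[
\psi(L_1,\ldots,L_5):=(BL_1,\;BL_2A,\;BL_3A,\;BL_4A,\;L_5A).
\]
Since \(A,B\in\Loc\) and \(\Loc=SU(3)\otimes SU(3)\) is closed under products, every entry of \(\psi(\mathbf L)\) lies in \(\Loc\). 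Hence \(\Phi_{K'}=\Phi_K\circ\psi\), and therefore \(\operatorname{Im}\Phi_{K'}\subseteq\operatorname{Im}\Phi_K\).

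For the reverse inclusion, write \(K=A^{-1}K'B^{-1}\). Here \(A^{-1},B^{-1}\in\Loc\) because \(\Loc\) is a group. Applying the same regrouping with \((A^{-1},B^{-1})\) in place of \((A,B)\) gives \(\Phi_K=\Phi_{K'}\circ\psi'\), where \(\psi'\) is the analogous map. Thus \(\operatorname{Im}\Phi_K\subseteq\operatorname{Im}\Phi_{K'}\). Equivalently, \(\psi\) is a bijection of \(\Loc^5\) with inverse \(\psi'\), and indeed a diffeomorphism.

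As a by-product, \(\psi\) preserves differential rank, so \(\Sigma_{K'}=\psi^{-1}(\Sigma_K)\). This means critical sets and regular-preimage properties also transfer between \(K\) and \(K'\).

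There is no real obstacle here. The only care needed is bookkeeping:
\begin{enumerate}
\item placing \(A\) to the right and \(B\) to the left of each interior layer in the correct order;
\item noting that \(A\) and \(B\) must lie in \(\Loc\) itself, not in \(U(3)\otimes U(3)\).
\end{enumerate}
The second point is why the corollary needs compatible determinant-one representatives. If \(K'\) differs from \(AKB\) by a central phase \(\zeta\), then \(\Phi_{K'}=\zeta^4\Phi_K\circ\psi\). Because \(\zeta^4 I\) is a scalar of determinant one, it is absorbed into \(L_5\) only when it lies in \(\Loc\). So I would state the proposition under the hypothesis \(K'=AKB\) exactly, as given.
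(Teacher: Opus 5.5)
Your proposal is correct and is essentially the paper's proof: the same regrouping \(\widetilde L_5=L_5A\), \(\widetilde L_j=BL_jA\) for \(j=2,3,4\), \(\widetilde L_1=BL_1\) gives one inclusion, and \(K=A^{-1}K'B^{-1}\) gives the reverse. Your side remarks on \(\Sigma_{K'}=\psi^{-1}(\Sigma_K)\) and on central phases are also accurate and match the paper's Proposition~\ref{prop:transported-full-rank-points} and its determinant-one representative convention.
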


\begin{proof}
For arbitrary \(L_1,\ldots,L_5\in\Loc\), substitute \(K'=AKB\) into
\(\Phi_{K'}(L_1,\ldots,L_5)\).  Define
\[
    \widetilde L_5:=L_5A,\qquad
    \widetilde L_j:=BL_jA\ (j=2,3,4),\qquad
    \widetilde L_1:=BL_1 .
\]
Then all \(\widetilde L_j\in\Loc\), and the substituted expression becomes
\[
    \Phi_{K'}(L_1,\ldots,L_5)
    =
    \Phi_K(\widetilde L_1,\ldots,\widetilde L_5).
\]
Therefore
\[
    \operatorname{Im}\Phi_{K'}\subseteq\operatorname{Im}\Phi_K.
\]
Applying the same argument to \(K=A^{-1}K'B^{-1}\) gives the reverse
inclusion.  Hence \(\operatorname{Im}\Phi_{K'}=\operatorname{Im}\Phi_K.\)
\end{proof}

Combining Corollary~\ref{cor:Clifford-local-equivalence} with
Proposition~\ref{prop:two-sided-local-equivalence-same-image}, all good
Clifford-word cores have the same synthesis capability in the four-core
ansatz, up to a reparameterization of the five local layers.  Thus the short
word \(K_{\rm Cl}\) used in the main text is a canonical representative of
this two-sided local equivalence class.  The count \(2304\) should therefore
not be interpreted as \(2304\) distinct decomposition models.

\begin{proposition}[Transported full-rank points from local equivalence]
\label{prop:transported-full-rank-points}
Let \(K,K'\in SU(9)\) be related by two-sided local factors
\[
    K'=AKB,
    \qquad
    A,B\in\Loc.
\]
Define the reparameterization
\[
    \mathcal R_{A,B}:\Loc^5\to\Loc^5
\]
by
\[
    \mathcal R_{A,B}(L_1,\ldots,L_5)
    =
    (BL_1,\,BL_2A,\,BL_3A,\,BL_4A,\,L_5A).
\]
Then
\[
    \Phi_{K'}=\Phi_K\circ\mathcal R_{A,B}.
\]
In particular, if the differential at the identity local point has full rank for \(K'\), then
\[
    \mathbf L_{A,B}
    :=
    \mathcal R_{A,B}(I,I,I,I,I)
    =
    (B,\,BA,\,BA,\,BA,\,A)
\]
is a full-rank point for the fixed-core map \(\Phi_K\):
\[
    \operatorname{rank}D\Phi_K(\mathbf L_{A,B})=80.
\]
\end{proposition}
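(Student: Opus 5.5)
The plan has two steps. First, verify the identity \(\Phi_{K'}=\Phi_K\circ\mathcal R_{A,B}\) by substitution. Second, transfer full rank through the chain rule, using that \(\mathcal R_{A,B}\) is a diffeomorphism of \(\Loc^5\).

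For the first step, substitute \(K'=AKB\) into \(\Phi_{K'}(L_1,\ldots,L_5)=L_5K'L_4K'L_3K'L_2K'L_1\). Then regroup the factors exactly as in the proof of Proposition~\ref{prop:two-sided-local-equivalence-same-image}:
\[
L_5AKBL_4AKBL_3AKBL_2AKBL_1=(L_5A)K(BL_4A)K(BL_3A)K(BL_2A)K(BL_1).
\]
The right-hand side is \(\Phi_K(BL_1,BL_2A,BL_3A,BL_4A,L_5A)=\Phi_K(\mathcal R_{A,B}(\mathbf L))\). Since \(\Loc\) is a group, each component lies in \(\Loc\), so \(\mathcal R_{A,B}\) is well defined.

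For the second step, note that \(\mathcal R_{A,B}\) is a product of left and right translations by fixed elements of \(\Loc\), applied componentwise. It is therefore smooth, with smooth inverse \(\mathcal R_{A,B}^{-1}(\widetilde L_1,\ldots,\widetilde L_5)=(B^{-1}\widetilde L_1,B^{-1}\widetilde L_2A^{-1},\ldots,\widetilde L_5A^{-1})\). Consequently \(D\mathcal R_{A,B}(\mathbf L)\) is a linear isomorphism \(T_{\mathbf L}\Loc^5\to T_{\mathcal R_{A,B}(\mathbf L)}\Loc^5\) at every point. The chain rule gives
\[
D\Phi_{K'}(\boldsymbol I)=D\Phi_K(\mathbf L_{A,B})\circ D\mathcal R_{A,B}(\boldsymbol I),
\]
where \(\mathbf L_{A,B}=\mathcal R_{A,B}(\boldsymbol I)=(B,BA,BA,BA,A)\). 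Composing with an isomorphism preserves rank, so \(\operatorname{rank}D\Phi_K(\mathbf L_{A,B})=\operatorname{rank}D\Phi_{K'}(\boldsymbol I)=80\). By Proposition~\ref{prop:right-trivialized-differential}, this is the same as full rank of \(A_K(\mathbf L_{A,B})\), which is the form of the statement used by the certificate of Theorem~\ref{thm:full-rank-local-universality}.

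The only point that needs care is the determinant-one normalization. If \(K'\) and \(AKB\) agree only up to a central phase, for example a ninth root of unity, the identity \(\Phi_{K'}=\Phi_K\circ\mathcal R_{A,B}\) holds only up to the central factor raised to the fourth power. Even so, left multiplication by a fixed central unitary is a diffeomorphism of \(SU(9)\), so the rank conclusion is unaffected. I will remark on this so that the statement applies to the Clifford lifts of Corollary~\ref{cor:Clifford-local-equivalence} regardless of the phase convention. Otherwise the argument is routine.
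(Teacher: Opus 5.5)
Your proposal is correct and follows the paper's proof essentially step for step: you obtain \(\Phi_{K'}=\Phi_K\circ\mathcal R_{A,B}\) from the same regrouping used for Proposition~\ref{prop:two-sided-local-equivalence-same-image}, and you transfer rank by the chain rule, since \(D\mathcal R_{A,B}\) is an isomorphism when \(\mathcal R_{A,B}\) is a diffeomorphism built from fixed left and right translations. Your extra remark is also sound: if \(K'\) and \(AKB\) differ by a central ninth root of unity, the identity only changes by a fixed scalar diffeomorphism of \(SU(9)\). That remark is not needed for the statement as posed, because there \(K'=AKB\) holds exactly.
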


\begin{proof}
The identity
\[
    \Phi_{K'}=\Phi_K\circ\mathcal R_{A,B}
\]
is exactly the reparameterization used in the proof of
Proposition~\ref{prop:two-sided-local-equivalence-same-image}.  The map
\(\mathcal R_{A,B}\) is a diffeomorphism of \(\Loc^5\), since it is built from
left and right multiplication by fixed elements of \(\Loc\).  Therefore
\[
    D\Phi_{K'}(\mathbf L)
    =
    D\Phi_K\bigl(\mathcal R_{A,B}(\mathbf L)\bigr)
    \circ
    D\mathcal R_{A,B}(\mathbf L),
\]
and \(D\mathcal R_{A,B}(\mathbf L)\) is an isomorphism.  Hence
\[
    \operatorname{rank}D\Phi_{K'}(\mathbf L)
    =
    \operatorname{rank}D\Phi_K\bigl(\mathcal R_{A,B}(\mathbf L)\bigr).
\]
Taking \(\mathbf L=(I,I,I,I,I)\) proves the claim.
\end{proof}

Consequently, after choosing compatible determinant-one representatives and
fixing \(K=K_{\rm Cl}\), every good Clifford core
\(K_j=A_jK_{\rm Cl}B_j\) supplies an algebraically certified full-rank point
\[
    \mathbf L_j
    =
    (B_j,\,B_jA_j,\,B_jA_j,\,B_jA_j,\,A_j)
    \in\Loc^5
\]
for the single fixed-core map \(\Phi_{K_{\rm Cl}}\).  Thus the local
equivalence classification does more than identify equal reachable sets: it
also produces a finite family of explicit full-rank points for the same
Clifford-word core.  These points are inherited from the Pauli-label
splitting certificates established at the identity local point for the locally equivalent good cores.

\subsection{Inversion symmetry and the local subgroup}
\label{subsec:inversion-symmetry-reachable-set}

We record one useful consequence of the finite Clifford symmetry: the inverse
of the short Clifford core is locally equivalent to the core itself.  At the
symplectic-label level, a direct finite-field computation gives a local
reversing symmetry
\[
    R_\ast\in \Sp(V_A)\times\Sp(V_B)
\]
such that
\[
    R_\ast S_{K_{\rm Cl}}R_\ast^{-1}=S_{K_{\rm Cl}}^{-1}.
\]
For example, in the label order \((a_1,a_2\mid b_1,b_2)\), one may take
\[
    R_\ast=
    \begin{pmatrix}
        0&0&1&0\\
        0&2&0&2\\
        2&0&0&0\\
        0&2&0&1
    \end{pmatrix}.
\]
This matrix lies in the strict local symplectic subgroup and satisfies the
displayed reversing relation.

Choose a local Clifford lift
\[
    U_R=U_A\otimes U_B\in\Loc
\]
of \(R_\ast\), with the one-qutrit factors normalized to determinant one.
Then \(U_RK_{\rm Cl}U_R^{-1}\) and \(K_{\rm Cl}^{-1}\) have
the same induced symplectic action.  Hence they differ only by a Pauli
operator and a central phase,
\[
    U_RK_{\rm Cl}U_R^{-1}
    =
    \zeta\, Q\, K_{\rm Cl}^{-1},
\]
where \(Q\) is a two-qutrit Pauli operator and, by taking determinants,
\(\zeta^9=1\).
Every two-qutrit Pauli operator is a tensor product of one-qutrit Paulis with
unit determinant, so \(Q\in\Loc\).
The scalar requires one further check: the scalars contained in
\(\Loc=SU(3)\otimes SU(3)\) are exactly the cube roots of unity, so the
central phase can be absorbed into a local factor only if \(\zeta^3=1\),
which is not automatic from the determinant argument alone.
For the explicit lift used here, direct computation gives
\(\zeta=e^{2\pi i/3}\), a cube root of unity.
Absorbing \(\zeta\) and \(Q\) into a single local factor
\(P_\ast:=\zeta\,Q\in\Loc\), we obtain
\[
    U_RK_{\rm Cl}U_R^{-1}
    =
    P_\ast K_{\rm Cl}^{-1}
\]
exactly.
Equivalently,
\[
    K_{\rm Cl}^{-1}=AK_{\rm Cl}B,
    \qquad
    A:=P_\ast^{-1}U_R\in\Loc,
    \qquad
    B:=U_R^{-1}\in\Loc.
\]
Therefore \(K_{\rm Cl}^{-1}\) is two-sided locally equivalent to
\(K_{\rm Cl}\).  By
Proposition~\ref{prop:two-sided-local-equivalence-same-image}, this implies
\(
    \operatorname{Im}\Phi_{K_{\rm Cl}^{-1}}
    =
    \operatorname{Im}\Phi_{K_{\rm Cl}}.
\)
This same inversion relation also implies that the whole local subgroup is
already contained in the reachable set of the four-core ansatz.  Indeed, from
\(K_{\rm Cl}^{-1}=AK_{\rm Cl}B\) we obtain
\[
    K_{\rm Cl}AK_{\rm Cl}B=I.
\]
For any \(L_0\in\Loc\), choose
\[
    (L_1,L_2,L_3,L_4,L_5)=(B,A,B,A,L_0).
\]
Then
\[
\begin{aligned}
    \Phi_{K_{\rm Cl}}(L_1,\ldots,L_5)
    &=
    L_0K_{\rm Cl}AK_{\rm Cl}B
       K_{\rm Cl}AK_{\rm Cl}B  \\
    &=
    L_0.
\end{aligned}
\]
Hence
\(
    \Loc\subseteq \operatorname{Im}\Phi_{K_{\rm Cl}}.
\)
Thus the inversion symmetry shows not only that \(K_{\rm Cl}\) and
\(K_{\rm Cl}^{-1}\) define the same reachable set, but also that all local
two-qutrit gates are exactly reachable by the same fixed-core architecture.

\section{Proofs of differential and local-universality statements}
\label{app:differential-proofs}

This appendix records the proofs of the basic differential statements used in
the main text.  Let \(U=\Phi_K(\mathbf L)\) and write
\(X_i=L_i^{-1}\dot L_i\in\mathfrak l\).  Differentiating the product
\(L_5KL_4KL_3KL_2KL_1\) with respect to one local factor at a time gives a
sum of five terms.  Right multiplication by \(U^{-1}\) cancels all factors to
the right of \(L_i\), leaving
\[
    P_i(\mathbf L)(L_i^{-1}\dot L_i)P_i(\mathbf L)^{-1}.
\]
Summing the five independent variations gives
\[
    \tau_U^R\!\left(
    D\Phi_K(\mathbf L)[\dot L_1,\ldots,\dot L_5]
    \right)
    =
    \sum_{i=1}^{5}\operatorname{Ad}_{P_i(\mathbf L)}(X_i),
\]
which is Eq.~\eqref{eq:AK-definition}.  Since right trivialization is a
linear isomorphism on tangent spaces, it preserves rank.

After choosing orthonormal bases of \(\mathfrak l^5\) and \(\mathfrak g\), the
matrix \(M_K(\mathbf L)\) represents this right-trivialized differential.
Thus \(\sigma_{\min}(M_K(\mathbf L_\ast))>0\) is equivalent to
\(\operatorname{rank}D\Phi_K(\mathbf L_\ast)=80\).  The domain \(L^5\) and the
target \(SU(9)\) both have real dimension \(80\), so the inverse function
theorem gives neighborhoods \(\mathcal U\subset L^5\) and
\(\mathcal V\subset SU(9)\) with
\(\Phi_K(\mathcal U)=\mathcal V\).  This proves the local-surjectivity
certificate used in the main text.

For the global discussion, compactness gives the closed-image statement:
\(L\) is compact, hence \(L^5\) is compact, and
\(\operatorname{Im}\Phi_K=\Phi_K(L^5)\) is compact and therefore closed in
the Hausdorff space \(SU(9)\).  If every reachable value has at least one
regular preimage, the inverse function theorem makes every point of
\(\operatorname{Im}\Phi_K\) an interior point of the image.  The image is then
both open and closed.  Since \(SU(9)\) is connected and the image is nonempty,
this proves global surjectivity.  This is the regular-preimage criterion
referenced in the main text.

\paragraph*{Remark on degree theory.}
A nonzero-degree argument does not directly resolve the global question for
the maps considered here.  Since \(SU(9)\) is connected, any \(K\in SU(9)\)
can be joined to the identity by a smooth path \(K_s\).  The maps
\[
\Phi_{K_s}(L_1,\ldots,L_5)
=
L_5K_sL_4K_sL_3K_sL_2K_sL_1
\]
therefore give a homotopy from \(\Phi_K\) to \(\Phi_I\).  But
\[
\Phi_I(L_1,\ldots,L_5)=L_5L_4L_3L_2L_1
\]
has image contained in the local subgroup \(L\), and hence has degree zero as
a map into \(SU(9)\).  Thus \(\deg \Phi_K=0\) for all \(K\).  A proof of
global surjectivity, if true, must use finer information about fibers,
critical values, or regular preimages.

\paragraph*{Complex-symmetric cores are critical at the identity local point.}
The genericity statement of Sec.~\ref{subsec:right-trivialized-differential}
asserts that the identity local point is regular for Haar-almost every core.
The following proposition shows that the exceptional set contains an entire
structurally important core class.

\begin{proposition}[Symmetric-core identity-point bound]
\label{prop:symmetric-core-identity-critical}
Let \(K\in SU(9)\) be complex symmetric, \(K=K^{T}\).  Then
\[
    \operatorname{rank}A_K(\boldsymbol I)\le 78 .
\]
In particular, the identity local point is a critical point of \(\Phi_K\) for
every complex-symmetric core, and the same bound holds for every core of the
form \(UKU^{\dagger}\) with \(U\in\Loc\).
\end{proposition}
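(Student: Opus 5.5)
The plan is to exhibit an involutive Lie-algebra automorphism of \(\mathfrak g\) that preserves a shifted copy of \(\operatorname{im}A_K(\boldsymbol I)\). I will then count dimensions separately in its two eigenspaces. Let \(\theta(X):=-X^{T}\) on \(\mathfrak{su}(9)\), transpose taken in the computational basis. This is an involutive automorphism, with \(\theta[X,Y]=[\theta X,\theta Y]\). Its \(+1\) eigenspace is \(\mathfrak g^+=\mathfrak{so}(9)\), of dimension \(36\). Its \(-1\) eigenspace \(\mathfrak g^-\) consists of \(i\) times the traceless real symmetric matrices, of dimension \(44\).

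\textbf{Step 1: \(\theta\) preserves \(\mathfrak l\).} Since \((a\otimes I+I\otimes b)^{T}=a^{T}\otimes I+I\otimes b^{T}\), we have \(\theta(\mathfrak l)=\mathfrak l\). The split is \(\mathfrak l^+=\mathfrak{so}(3)\oplus\mathfrak{so}(3)\) (dimension \(6\)) and \(\mathfrak l^-\) (dimension \(10\)).

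\textbf{Step 2: reversal relation.} If \(K^{T}=K\), then \(\theta(\operatorname{Ad}_{K}X)=-(KXK^{-1})^{T}=-K^{-T}X^{T}K^{T}=\operatorname{Ad}_{K^{-1}}\theta(X)\). Iterating gives \(\theta\circ\operatorname{Ad}_{K^j}=\operatorname{Ad}_{K^{-j}}\circ\theta\) for all \(j\in\mathbb Z\).

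\textbf{Step 3: a symmetric recentring.} By Eq.~\eqref{eq:transported-local-sum}, \(W:=\operatorname{im}A_K(\boldsymbol I)=\sum_{j=0}^{4}\operatorname{Ad}_{K^j}(\mathfrak l)\). Since \(\operatorname{Ad}_{K^{-2}}\) is an isometric isomorphism, it suffices to bound the dimension of
\[
W':=\operatorname{Ad}_{K^{-2}}W=\sum_{j=-2}^{2}\operatorname{Ad}_{K^j}(\mathfrak l).
\]
By Steps 1--2, \(\theta(W')=\sum_j\operatorname{Ad}_{K^{-j}}\theta(\mathfrak l)=W'\). Hence \(W'=W'^+\oplus W'^-\) with \(W'^{\pm}=P^{\pm}W'\), where \(P^{\pm}=\tfrac12(1\pm\theta)\).

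\textbf{Step 4: counting the \(-1\) part.} \(W'^-\) is spanned by \(P^-(\mathfrak l)=\mathfrak l^-\) together with \(P^-\operatorname{Ad}_{K^{\pm j}}(\mathfrak l)\) for \(j=1,2\). The key observation is that for \(X\in\mathfrak l\),
\[
P^-\operatorname{Ad}_{K^{j}}(\theta X)=\tfrac12\bigl(\operatorname{Ad}_{K^j}\theta X-\operatorname{Ad}_{K^{-j}}X\bigr)=-P^-\operatorname{Ad}_{K^{-j}}(X).
\]
Because \(\theta X\) ranges over \(\mathfrak l\), this gives \(P^-\operatorname{Ad}_{K^{-j}}(\mathfrak l)=P^-\operatorname{Ad}_{K^{j}}(\mathfrak l)\). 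So each pair \(\pm j\) contributes at most \(16\) dimensions, and
\[
\dim W'^-\le 10+16+16=42<44.
\]
Together with \(\dim W'^+\le 36\), this gives \(\operatorname{rank}A_K(\boldsymbol I)=\dim W'\le 78\).

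The main obstacle is Step 3: finding the right recentring so that the image becomes exactly \(\theta\)-stable. The uncentred sum \(\sum_{j=0}^4\) is not stable, and only the palindromic index range \(-2,\dots,2\) makes the \(\pm j\) pairing of Step 4 work.

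\textbf{Step 5: local conjugates.} For \(K'=UKU^{\dagger}\) with \(U\in\Loc\), we have \(\operatorname{Ad}_{K'^j}=\operatorname{Ad}_U\operatorname{Ad}_{K^j}\operatorname{Ad}_{U^{\dagger}}\) and \(\operatorname{Ad}_{U^{\dagger}}\mathfrak l=\mathfrak l\). Therefore \(\operatorname{im}A_{K'}(\boldsymbol I)=\operatorname{Ad}_U(\operatorname{im}A_K(\boldsymbol I))\), which has the same dimension. Criticality of \(\boldsymbol I\) then follows from Proposition~\ref{prop:right-trivialized-differential}.
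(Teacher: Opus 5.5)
Your proof is correct and follows the paper's argument essentially step for step. You recentre the image by \(\operatorname{Ad}_{K^{-2}}\) onto the palindromic window \(t\in\{-2,\dots,2\}\), use the transpose relation \(\operatorname{Ad}_{K^{t}}\mapsto\operatorname{Ad}_{K^{-t}}\) to obtain an invariant subspace, and bound the symmetric-imaginary part by \(10+16+16=42\) and the real-antisymmetric part by \(36\); the only cosmetic difference is that you work with \(\theta=-\tau\), a Lie-algebra automorphism, where the paper uses \(\tau(Y)=Y^{T}\) itself, which merely swaps the labels of the two eigenspaces.
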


\begin{proof}
Consider the real-linear involution
\(\tau(Y):=Y^{T}=-\overline{Y}\) on \(\mathfrak{su}(9)\).
It decomposes \(\mathfrak{su}(9)=E_+\oplus E_-\), where
\(E_+=\{iR: R\ \text{real symmetric traceless}\}\) has real dimension \(44\)
and \(E_-=\{Q: Q\ \text{real antisymmetric}\}\) has real dimension \(36\).
The local subalgebra is \(\tau\)-invariant, \(\tau(\mathfrak l)=\mathfrak l\),
with \(\dim(\mathfrak l\cap E_+)=10\), coming from the
\(i\,(R_3\otimes I)\) and \(i\,(I\otimes R_3')\) directions with \(R_3,R_3'\)
real symmetric traceless.
If \(K=K^{T}\), then \(K^{\dagger}=\overline K\), and for
\(X\in\mathfrak{su}(9)\)
\[
    \tau\!\left(\operatorname{Ad}_K X\right)
    =(KXK^{-1})^{T}
    =(K^{T})^{-1}X^{T}K^{T}
    =\operatorname{Ad}_{K^{-1}}\!\left(\tau X\right).
\]
Hence \(\tau\bigl(\operatorname{Ad}_{K^{t}}\mathfrak l\bigr)
=\operatorname{Ad}_{K^{-t}}\mathfrak l\) for all \(t\).
Let \(W:=\operatorname{im}A_K(\boldsymbol I)
=\sum_{t=0}^{4}\operatorname{Ad}_{K^{t}}(\mathfrak l)\).
Since \(\operatorname{Ad}_{K^{-2}}\) is an isometry,
\(\dim W=\dim W'\) with
\(W':=\operatorname{Ad}_{K^{-2}}W
=\sum_{t=-2}^{2}\operatorname{Ad}_{K^{t}}(\mathfrak l)\),
and by the displayed relation \(W'\) is \(\tau\)-invariant.
Write \(P_\pm=(\mathrm{id}\pm\tau)/2\).
For any subspace \(U\) one has \(P_+(U+\tau U)=P_+(U)\), because
\(P_+(\tau u)=P_+(u)\).  Applying this to
\(U_t:=\operatorname{Ad}_{K^{t}}(\mathfrak l)\) with \(\tau U_t=U_{-t}\)
gives
\[
    P_+(W')
    =P_+(\mathfrak l)+P_+(U_1)+P_+(U_2),
\]
so
\(\dim P_+(W')\le 10+16+16=42<44=\dim E_+\).
Since \(W'\) is \(\tau\)-invariant,
\(\dim W'=\dim P_+(W')+\dim P_-(W')\le 42+36=78\).
Finally, for \(U\in\Loc\) one has
\(\operatorname{Ad}_{(UKU^{\dagger})^{t}}(\mathfrak l)
=\operatorname{Ad}_U\operatorname{Ad}_{K^{t}}
\operatorname{Ad}_{U^{\dagger}}(\mathfrak l)
=\operatorname{Ad}_U\operatorname{Ad}_{K^{t}}(\mathfrak l)\),
so the image for \(UKU^{\dagger}\) is \(\operatorname{Ad}_U(W)\), of the same
dimension.
\end{proof}

The bound uses only \(K=K^{T}\) and the odd number of transported local blocks: the symmetric five-block window \(t\in\{-2,-1,0,1,2\}\) contains a distinguished central block whose \(E_+\) contribution is \(10\) rather than \(16\).
If a time-independent Hamiltonian is real in the chosen basis, then \(K=\exp(-iHT)\) is complex symmetric and the proposition applies.
The same conclusion holds for a time-dependent real Hamiltonian when its values commute at different times, so that the time ordering reduces to a single exponential of a real symmetric matrix. 
It need not hold for noncommuting time-dependent real Hamiltonians, because an ordered product of complex-symmetric propagators is not generally complex symmetric.
The local-conjugation extension applies whenever a core is locally conjugate to a complex-symmetric propagator, for example in a time-independent model whose constant drive phases can be removed by local diagonal frame changes.
The representative superconducting core of Table~\ref{tab:sc-representative-parameters} uses precisely such a noncommuting time-dependent construction. 
Its measured asymmetry \(\lVert K_{\rm sc}-K_{\rm sc}^{T}\rVert_{F}=0.4699801\) confirms that it is outside the proposition's hypothesis. 
A separate finite-precision calculation gives numerical rank \(80\) at the identity local point, with the three smallest singular values approximately \(4.20\times10^{-3}\), \(1.82\times10^{-3}\), and \(1.41\times10^{-3}\). 
These values are numerical diagnostics for this particular core, whereas the rank-\(78\) result above remains an exact theorem for every complex-symmetric core.

\begin{figure*}[!t]
    \centering
    \begin{minipage}{0.48\textwidth}
        \centering
        \includegraphics[width=\linewidth]{figures/fig_rank_diagnostics_clifford_min_singular_value.png}
        \vspace{-0.5em}
        \centerline{\small (a) Smallest singular value}
    \end{minipage}
    \hfill
    \begin{minipage}{0.48\textwidth}
        \centering
        \includegraphics[width=\linewidth]{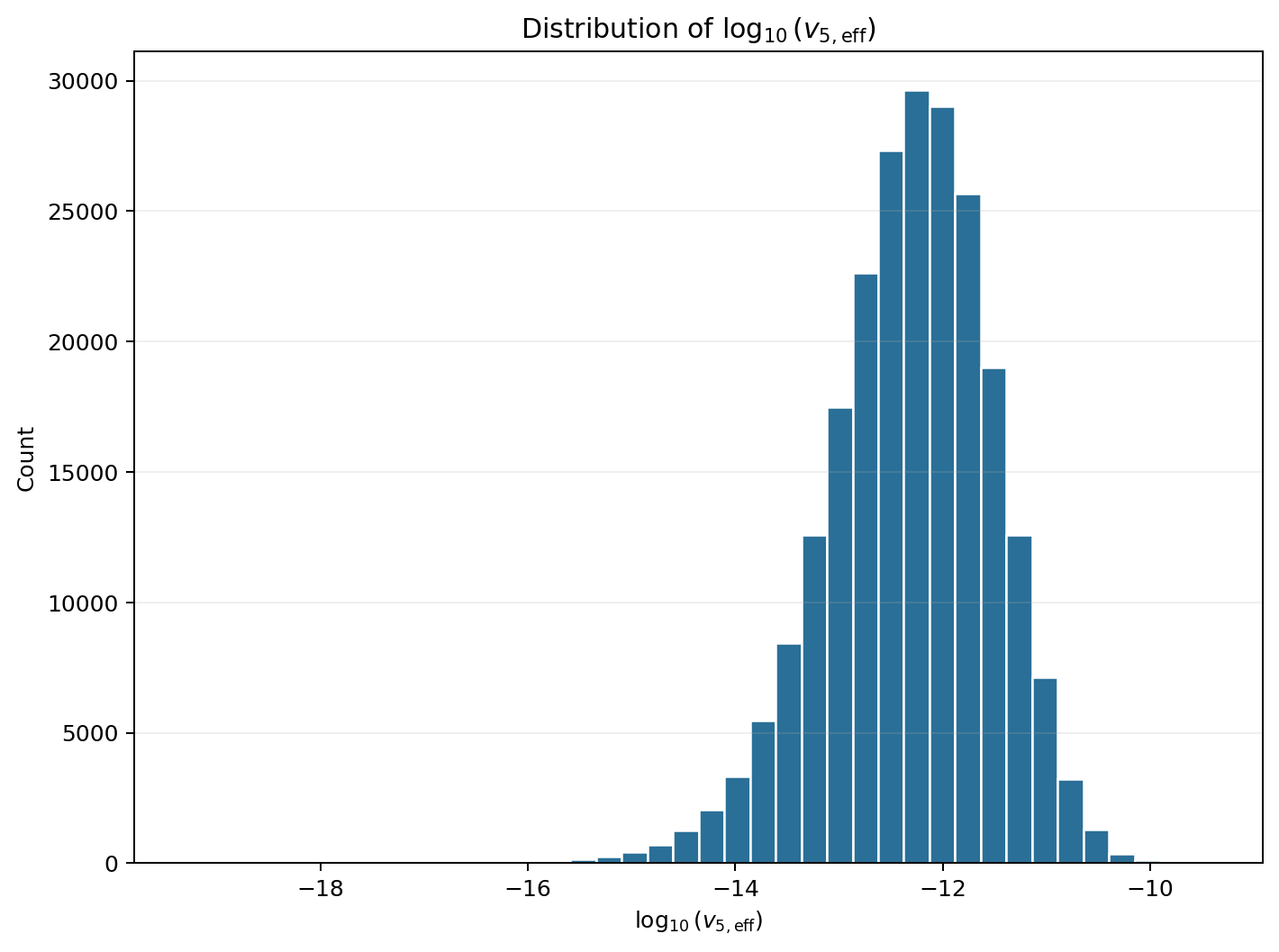}
        \vspace{-0.5em}
        \centerline{\small (b) Effective tangent volume}
    \end{minipage}
    \caption{
    Pointwise differential diagnostics for the Clifford-word cores.
    Panel (a) shows the distribution of \(\sigma_{\min}(A_{\rm pt})\); the smallest recorded value is \(4.08\times10^{-9}\), so no sampled configuration reaches the numerical rank floor.
    Panel (b) shows the distribution of \(\log_{10}(v_{5,\mathrm{eff}})\).
    Both panels are finite-precision sampled diagnostics rather than uniform bounds over \(L^5\).
    }
    \label{fig:app-clifford-rank-diagnostics}
\end{figure*}

\section{Clifford-core pointwise diagnostics}
\label{app:clifford-diagnostics}

This appendix records the pointwise differential diagnostics for the
Clifford-word core family, complementing the summary in
Sec.~\ref{subsec:clifford-pointwise-diagnostics}.
Across the tested data set the smallest recorded value of
\(\sigma_{\min}(A_{\rm pt})\) is \(4.08\times10^{-9}\), several orders of
magnitude above the numerical rank floor, and the effective-volume
distribution gives a complementary measure of the local tangent-volume scale.
The plotted data comprise the \(230{,}400\) core--configuration pairs obtained from the shared pool of Haar-sampled local configurations described in Sec.~\ref{subsec:clifford-pointwise-diagnostics}.
As emphasized in the main text, these are finite-precision sampled
diagnostics, not uniform analytic lower bounds over \(L^5\).

\section{A refined commutator observation for the five Clifford blocks}
\label{app:refined-commutator-geometry}

This appendix records an additional Lie-algebraic pattern observed in the
five-block Pauli-label splitting of \(K_{\rm Cl}\).  This structure is not
needed for the local-universality certificate in the main text; it is included
only as a useful byproduct of the finite Clifford-label analysis.

Recall that
\[
    V\setminus\{0\}
    =
    \Lambda_0\sqcup\Lambda_1\sqcup\Lambda_2\sqcup\Lambda_3\sqcup\Lambda_4,
    \qquad
    \Lambda_t=S_{K_{\rm Cl}}^t(\Lambda_{\rm loc}),
\]
and write
\[
    \mathfrak s_t:=s(\Lambda_t)\subset\mathfrak{su}(9).
\]
For a sign pair \([v]=\{v,-v\}\), let
\[
    \mathfrak p_{[v]}:=s(\{v,-v\})
\]
be the associated two-dimensional real Pauli plane.  Thus
\[
    \mathfrak s_t
    =
    \bigoplus_{[v]\in\Lambda_t/\{\pm1\}}\mathfrak p_{[v]} .
\]

The basic microscopic rule comes from the Weyl--Pauli multiplication law
\[
    W_uW_v=\omega^{[u,v]}W_vW_u,
    \qquad
    \omega=e^{2\pi i/3}.
\]
Consequently,
\[
    [\mathfrak p_{[u]},\mathfrak p_{[v]}]=0
    \quad\Longleftrightarrow\quad
    [u,v]=0,
\]
while for \([u,v]\ne0\),
\[
    [\mathfrak p_{[u]},\mathfrak p_{[v]}]
    =
    \mathfrak p_{[u+v]}\oplus\mathfrak p_{[u-v]} .
\]
This rule is the only input needed for the following finite-block observation.

\begin{proposition}[Commutator fusion of Clifford blocks]
\label{prop:block-level-commutator-fusion}
For every \(i\ne j\) in \(\mathbb Z_5\),
\[
    [\mathfrak s_i,\mathfrak s_j]
    =
    \bigoplus_{r\in\mathbb Z_5\setminus\{i,j\}}\mathfrak s_r .
\]
\end{proposition}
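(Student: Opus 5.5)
The plan is to push everything down to the Pauli-label level and then use the symmetries of \(S:=S_{K_{\rm Cl}}\) to cut the number of cases.

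\emph{Step 1: reduce to \(i=0\).} By the splitting (Lemma~\ref{lem:explicit-five-block-splitting}), \(\operatorname{Ad}_{K_{\rm Cl}}\) maps \(\mathfrak s_t\) onto \(\mathfrak s_{t+1}\), as in Corollary~\ref{cor:identity-point-isometry}. Since \(S^5=I_4\), we have \(\Lambda_5=\Lambda_0\), so indices may be read in \(\mathbb Z_5\). Because \(\operatorname{Ad}_{K_{\rm Cl}^i}\) is a Lie-algebra automorphism,
\[
  [\mathfrak s_i,\mathfrak s_j]=\operatorname{Ad}_{K_{\rm Cl}^i}[\mathfrak s_0,\mathfrak s_{j-i}],
\]
and it shifts every \(\mathfrak s_r\) by \(i\). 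It therefore suffices to treat \(i=0\) and \(d:=j\in\{1,2,3,4\}\).

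\emph{Step 2: reduce to \(d\in\{1,2\}\).} Use the local reversing symmetry \(R_\ast\) of Appendix~\ref{subsec:inversion-symmetry-reachable-set}. Since \(R_\ast\) is local, it preserves \(\Lambda_0\). Together with \(R_\ast S R_\ast^{-1}=S^{-1}\), this gives \(R_\ast\Lambda_t=\Lambda_{-t}\). Its local Clifford lift \(U_R\) acts by a Lie-algebra automorphism \(\operatorname{Ad}_{U_R}\), which maps \(\mathfrak s_t\) to \(\mathfrak s_{-t}\). Hence \(d\) and \(-d\) are equivalent, and only \(d=1\) and \(d=2\) remain.

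\emph{Step 3: the inclusion \(\subseteq\).} By the commutator rule stated above, \([\mathfrak s_0,\mathfrak s_d]\) is the sum of the planes \(\mathfrak p_{[u+v]}\oplus\mathfrak p_{[u-v]}\) over \(u\in\Lambda_0\), \(v\in\Lambda_d\) with \([u,v]\neq0\). So I must show that a noncommuting pair never has \(u\pm v\in\Lambda_0\cup\Lambda_d\). By Lemma~\ref{lem:blocks-disjoint-from-local}, \(\Lambda_d=B_{[A]}\) for some \(A\in\SL(2,\mathbb F_3)\). Take \(u=(x,0)\in V_A\) (the \(V_B\) case is symmetric) and \(v=(y,\pm Ay)\), and note that \([u,v]=\omega_A(x,y)\). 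Then:
\begin{itemize}
  \item \(u+v\in V_A\) would force \(v\in V_A\), which is impossible.
  \item \(u+v\in V_B\) forces \(y=-x\), so \([u,v]=0\).
  \item \(u+v\in\Gamma(\pm A)\) with the same sign as \(v\) forces \(x=0\).
  \item \(u+v\in\Gamma(\mp A)\) forces \(x+y=-y\), i.e. \(x=y\) over \(\mathbb F_3\), so again \([u,v]=0\).
\end{itemize}
The same analysis applies to \(u-v\). Hence every noncommuting pair produces labels only in \(\bigcup_{r\neq0,d}\Lambda_r\), which proves the inclusion \(\subseteq\).

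\emph{Step 4: the equality (main obstacle).} The reverse inclusion requires every sign pair in the three remaining blocks (24 sign pairs) to occur as some \(u\pm v\) with \([u,v]\neq0\). Counting alone is not enough, because many pairs commute. I would first try a structural argument: by Lemma~\ref{lem:local-double-coset}, the local group \(H^0_{\rm loc}\) acts transitively on graph blocks, which might reduce the check to a single representative. However, the stabilizer must also respect the other three blocks, so I expect to fall back on a direct finite enumeration of the \(8\times8\) sign-pair products for \(d=1,2\), using the explicit lists \eqref{eq:explicit-Lambda0}--\eqref{eq:explicit-Lambda2}. Finally, I would confirm that the plane-level commutator \([\mathfrak p_{[u]},\mathfrak p_{[v]}]\) is the full sum \(\mathfrak p_{[u+v]}\oplus\mathfrak p_{[u-v]}\) rather than a proper subspace, so that hitting a label yields its entire plane.
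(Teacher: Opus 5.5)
Your proposal is correct. It shares the paper's skeleton: use the cyclic Clifford action to reduce to representative pairs, then argue on Pauli labels. It differs in two respects.

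\emph{Step~2 is not needed.} Brackets of subspaces are symmetric, so $[\mathfrak s_0,\mathfrak s_{-d}]=[\mathfrak s_{-d},\mathfrak s_0]=\operatorname{Ad}_{K_{\rm Cl}^{-d}}[\mathfrak s_0,\mathfrak s_d]$. The cyclic action on unordered pairs therefore already leaves only $\{0,1\}$ and $\{0,2\}$, which is exactly the paper's reduction. The reversing symmetry $R_\ast$ does satisfy $S R_\ast S=R_\ast$, so your argument is valid, but it is superfluous.

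\emph{Step~3 takes a different route from the paper.} The paper establishes both inclusions for the two representative pairs by a direct finite check of the displayed label lists. Your Step~3 instead proves $\subseteq$ structurally from $\Lambda_d=B_{[A]}$. Your case analysis is correct, including the symmetric $V_B$ case, and it holds for every graph block $B_{[A]}$, independent of the lists.

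\emph{Step~4 can also be done structurally.} For $\supseteq$ you fall back on the same enumeration the paper invokes, so nothing is missing relative to the paper. However, this half also has a short structural proof that completes your argument. Take $w=(p,q)$ in a remaining block $\Lambda_r=B_{[A_r]}$, and write $q=\delta A_r p$ with $p\neq0$ and $\delta=\pm1$. Set $g:=A^{-1}A_r$, $v:=(A^{-1}q,q)\in\Gamma(A)\setminus\{0\}\subseteq\Lambda_d$, and $u:=w-v=(p-\delta gp,0)\in V_A$. Then the symplectic product is $[u,v]=[w,v]=\delta\,\omega_A(p,gp)$. Since $\Lambda_r\cap\Lambda_d=\varnothing$, Lemma~\ref{lem:block-disjointness-criterion} gives $g^2=-I$. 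Because $-1$ is not a square in $\mathbb F_3$, $g$ has no eigenvector, so $p$ and $gp$ are independent. This gives both $u\neq0$ and $[u,v]\neq0$.

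Hence every sign pair in the three complementary blocks arises as $[u+v]$ for a noncommuting pair. Combine this with the plane-level rule, which you can confirm by evaluating $[\alpha W_u-\bar\alpha W_u^\dagger,\beta W_v-\bar\beta W_v^\dagger]$ for $\alpha,\beta\in\{1,i\}$. The proposition then follows for any good symplectic action without consulting the explicit lists, which is what your route gains over the paper's computational check.
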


\begin{proof}
The cyclic group generated by \(S_{K_{\rm Cl}}\) acts on unordered pairs
\(\{i,j\}\subset\mathbb Z_5\) with exactly two orbits: the adjacent pairs
\(\{i,i+1\}\) and the next-nearest pairs \(\{i,i+2\}\) (indices mod \(5\)).
Transporting a single adjacent pair therefore does not reach the
next-nearest pairs, and it suffices to check one representative of each
orbit; we take \((0,1)\) and \((0,2)\).
Using the explicit label lists in
Appendix~\ref{app:explicit-label-splitting}, one verifies directly over
\(\mathbb F_3\), for both representative pairs \((i,j)\), that whenever
\[
    [u]\in\Lambda_i/\{\pm1\},
    \qquad
    [v]\in\Lambda_j/\{\pm1\},
\]
the nonzero output sign pairs \([u+v]\) and \([u-v]\) lie in the three
complementary blocks
\(\Lambda_r/\{\pm1\}\), \(r\in\mathbb Z_5\setminus\{i,j\}\),
and moreover that every sign pair in these complementary blocks occurs as one
of the outputs.
The Pauli-plane commutator rule above then gives
\[
    [\mathfrak s_0,\mathfrak s_1]
    =
    \mathfrak s_2\oplus\mathfrak s_3\oplus\mathfrak s_4,
    \qquad
    [\mathfrak s_0,\mathfrak s_2]
    =
    \mathfrak s_1\oplus\mathfrak s_3\oplus\mathfrak s_4.
\]
Transporting by the cyclic Clifford action within each orbit gives the same
statement for all ordered pairs \(i\ne j\).
(We have also verified the displayed identity directly for all ten unordered
pairs.)
\end{proof}

Thus the Clifford splitting has a stronger internal structure than simple
direct-sum spanning.  Any two distinct tangent blocks generate, by Lie
brackets, exactly the three remaining blocks.  In the explicit finite data
one can further refine this statement at the sign-pair level: the eight sign
pairs in each block can be grouped into four two-element classes, and the
microscopic commutators are organized by a Klein-four colored bridge table.
The local reversing symmetry of \(K_{\rm Cl}\) adds a dihedral reflection of
this bridge table.  These refinements are useful for recognizing hidden
symmetries of the Clifford-word construction, but they are not used in the
main local or numerical universality arguments.

\section{Regular-pair construction and fold diagnostics}
\label{app:regular-pair-mechanism}
\label{app:regular-pairs}

This appendix gives the numerical and differential-geometric details behind
the regular-pair evidence summarized in
Sec.~\ref{subsec:clifford-pointwise-diagnostics}.  The main text summarizes the 
all-\(100\) verification result.  Here we record the underlying
fold-wall logic, the map conventions used in the numerical validation, the
step-gap diagnostics that identify the observed rank loss, the construction
of fold-like critical candidates, the finite-difference fold check, and the strict
recomputation protocol used to accept regular partners.

All computations in this appendix use the determinant-normalized short
Clifford-word core
\[
    K_{\rm Cl}
    =
    e^{5i\pi/6}
    \operatorname{SUM}^{\dagger}P_2\operatorname{SUM}F_1\operatorname{SUM}
    \in SU(9).
\]
The all-\(100\) data set concerns this fixed short-word representative.  The
other good Clifford-word representatives discussed in
Appendix~\ref{app:clifford-family-symmetries} are locally equivalent to
\(K_{\rm Cl}\), but were not separately used in the all-\(100\) regular-pair
search.

\subsection{Fold values and regular partners}
\label{app:regular-pair-fold-wall-logic}

The critical set \(\Sigma_K\) and critical value set \(C_K\) were defined in
Eqs.~\eqref{eq:critical-set} and \eqref{eq:critical-values}.  A useful
sharpening of the boundary inclusion
\[
    \partial(\operatorname{Im}\Phi_K)\subseteq C_K
\]
is that boundary values must be critical along their entire fiber.

\begin{proposition}[Boundary values are fully critical]
\label{prop:app-boundary-values-fully-critical}
Let \(U\in\partial(\operatorname{Im}\Phi_K)\).  Then every point in
\(\Phi_K^{-1}(U)\) is critical.  Equivalently,
\[
    \partial(\operatorname{Im}\Phi_K)
    \subseteq
    \{U\in C_K:\Phi_K^{-1}(U)\subset\Sigma_K\}.
\]
\end{proposition}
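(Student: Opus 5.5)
The plan is to argue by contraposition, using only the inverse function theorem together with the compactness facts already recorded in Appendix~\ref{app:differential-proofs}. Fix \(U\in\partial(\operatorname{Im}\Phi_K)\). The first step is to note that \(U\) actually lies in the image. By Appendix~\ref{app:differential-proofs}, \(\operatorname{Im}\Phi_K=\Phi_K(L^5)\) is compact and hence closed in \(SU(9)\). The boundary of a closed set is contained in that set, so \(U\in\operatorname{Im}\Phi_K\) and the fiber \(\Phi_K^{-1}(U)\) is nonempty. This matters because it makes the inclusion into \(C_K\) nonvacuous: once every fiber point is shown to be critical, \(U=\Phi_K(\mathbf L)\) for some \(\mathbf L\in\Sigma_K\), and therefore \(U\in C_K\).

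The second step carries the content. Suppose, for contradiction, that some \(\mathbf L_\ast\in\Phi_K^{-1}(U)\) is regular, so that \(\operatorname{rank}D\Phi_K(\mathbf L_\ast)=80\). Equivalently, by Proposition~\ref{prop:right-trivialized-differential}, \(\sigma_{\min}(M_K(\mathbf L_\ast))>0\). Both the domain \(L^5\) and the target \(SU(9)\) have dimension \(80\), so Theorem~\ref{thm:full-rank-local-universality}, which is the inverse function theorem, supplies an open neighborhood \(\mathcal V\) of \(\Phi_K(\mathbf L_\ast)=U\) with \(\mathcal V\subseteq\operatorname{Im}\Phi_K\). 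Then \(U\) is an interior point of the image. This contradicts \(U\in\partial(\operatorname{Im}\Phi_K)\), because the boundary of a closed set consists precisely of its non-interior points. Hence every point of \(\Phi_K^{-1}(U)\) lies in \(\Sigma_K\).

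Combining the two steps gives \(\Phi_K^{-1}(U)\subset\Sigma_K\) together with \(U\in C_K\), which is exactly the displayed inclusion.

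I do not expect a genuine obstacle. The one point requiring care is the first step: the boundary point must be shown to lie in the image via closedness. Without this, the fiber could be empty and the condition \(\Phi_K^{-1}(U)\subset\Sigma_K\) would hold vacuously, without placing \(U\) in \(C_K\).
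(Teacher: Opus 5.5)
Your proposal is correct and follows the paper's proof: a regular preimage of \(U\) would, by the inverse function theorem in equal dimension \(80\), make \(U\) an interior point of \(\operatorname{Im}\Phi_K\), contradicting \(U\in\partial(\operatorname{Im}\Phi_K)\). Your extra first step uses closedness of the compact image to guarantee \(\Phi_K^{-1}(U)\neq\varnothing\), and hence \(U\in C_K\); the paper's short proof leaves this implicit, and it is exactly what the displayed inclusion requires.
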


\begin{proof}
If \(U=\Phi_K(\mathbf L)\) for a regular point
\(\mathbf L\in L^5\), then \(D\Phi_K(\mathbf L)\) is an isomorphism because
\(\dim L^5=\dim SU(9)=80\).  By the inverse function theorem~\cite{GuilleminPollack1974},
\(\Phi_K\) maps a neighborhood of \(\mathbf L\) diffeomorphically onto an
open neighborhood of \(U\).  Hence \(U\) is an interior point of
\(\operatorname{Im}\Phi_K\), and cannot lie on its boundary.
\end{proof}

For smooth maps between manifolds of the same dimension, the stable
codimension-one singularity relevant to a local hypersurface boundary is the
simple fold~\cite{Whitney1955,GolubitskyGuillemin1973}.  In local coordinates, the
normal form is
\[
    (x_1,\ldots,x_{79},t)
    \longmapsto
    (x_1,\ldots,x_{79},t^2).
\]
Thus the fold branch is locally one-sided, and its critical value hypersurface
is a natural candidate for a local image wall.  This is why our numerical
search focuses on corank-one points with a clear singular-value gap
\[
    \sigma_1\ll\sigma_2
\]
and with nonzero quadratic variation in the null direction.

Even if a point is a simple fold for one branch, it need not be a global
boundary point of the image.  The same target value may be covered by another
regular branch.

\begin{definition}[Regular partner]
\label{def:app-regular-partner}
Let \(\mathbf L_{\rm bad}\in\Sigma_K\), and set
\[
    U_{\rm bad}:=\Phi_K(\mathbf L_{\rm bad}).
\]
A point \(\mathbf L_{\rm reg}\in L^5\) is called a regular partner of
\(\mathbf L_{\rm bad}\) if
\[
    \Phi_K(\mathbf L_{\rm reg})=U_{\rm bad},
    \qquad
    \rank D\Phi_K(\mathbf L_{\rm reg})=80.
\]
\end{definition}

\begin{proposition}[Regular partners remove boundary walls]
\label{prop:app-regular-partners-remove-walls}
If \(\mathbf L_{\rm bad}\in\Sigma_K\) has a regular partner, then
\(\Phi_K(\mathbf L_{\rm bad})\) is an interior point of
\(\operatorname{Im}\Phi_K\).  In particular, it is not a boundary value of
the reachable set.
\end{proposition}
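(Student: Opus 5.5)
The plan is to reduce the statement directly to the inverse function theorem, following the same argument as Proposition~\ref{prop:app-boundary-values-fully-critical}. Let \(\mathbf L_{\rm reg}\) be a regular partner of \(\mathbf L_{\rm bad}\), so by Definition~\ref{def:app-regular-partner} we have \(\Phi_K(\mathbf L_{\rm reg})=U_{\rm bad}:=\Phi_K(\mathbf L_{\rm bad})\) and \(\rank D\Phi_K(\mathbf L_{\rm reg})=80\).

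The first step uses the equality of dimensions \(\dim L^5=80=\dim SU(9)\). From this, the full-rank differential \(D\Phi_K(\mathbf L_{\rm reg})\colon T_{\mathbf L_{\rm reg}}L^5\to T_{U_{\rm bad}}SU(9)\) is a linear isomorphism. If one wants this phrased through the paper's certificate, Proposition~\ref{prop:right-trivialized-differential} gives the equivalent statement that \(A_K(\mathbf L_{\rm reg})\) has full rank, and so \(\sigma_{\min}(M_K(\mathbf L_{\rm reg}))>0\).

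The second step applies Theorem~\ref{thm:full-rank-local-universality}, which is the inverse function theorem at \(\mathbf L_{\rm reg}\). It produces open neighborhoods \(\mathcal U\ni\mathbf L_{\rm reg}\) and \(\mathcal V\ni U_{\rm bad}\) with \(\Phi_K(\mathcal U)=\mathcal V\). Then \(\mathcal V\subseteq\operatorname{Im}\Phi_K\) is an open set containing \(U_{\rm bad}\), so \(U_{\rm bad}\) is an interior point of the image. Because \(\operatorname{Im}\Phi_K\) is closed (it is the image of the compact space \(L^5\)), its boundary equals the image minus its interior. Hence \(U_{\rm bad}\notin\partial(\operatorname{Im}\Phi_K)\).

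The key point is that the criticality of \(\mathbf L_{\rm bad}\) plays no role, since only the existence of one regular preimage of the value matters. No step is a real obstacle. The only care needed is to say that interiority is a property of the value \(U_{\rm bad}\) and not of the particular preimage.
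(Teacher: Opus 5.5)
Your proof is correct and matches the paper's: you apply the inverse function theorem at the regular partner to obtain an open neighborhood of \(\Phi_K(\mathbf L_{\rm bad})\) contained in \(\operatorname{Im}\Phi_K\). The closedness remark is harmless but unnecessary, since an interior point of any set never lies on its boundary.
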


\begin{proof}
Let \(\mathbf L_{\rm reg}\) be a regular partner and set
\[
    U=\Phi_K(\mathbf L_{\rm reg})=\Phi_K(\mathbf L_{\rm bad}).
\]
Since \(\mathbf L_{\rm reg}\) is regular, the inverse function theorem~\cite{GuilleminPollack1974} gives
an open neighborhood of \(U\) contained in \(\operatorname{Im}\Phi_K\).
Therefore \(U\) is an interior point of the image.
\end{proof}

The regular-pair computation is therefore a test of whether observed
fold-like critical values are genuine image-wall candidates or only
branch-level folds.

\subsection{Map conventions and validation order}
\label{app:regular-pair-map-conventions}

The manuscript uses the fixed-core map
\[
    \Phi_{K_{\rm Cl}}(L_1,\ldots,L_5)
    =
    L_5K_{\rm Cl}L_4K_{\rm Cl}L_3K_{\rm Cl}L_2K_{\rm Cl}L_1 .
\]
The numerical validation was implemented in the opposite local-layer ordering
\[
    h=(\widehat L_0,\widehat L_1,\widehat L_2,\widehat L_3,\widehat L_4)
    \in L^5,
\]
with the no-appended validation map
\[
    \Phi_{K_{\rm Cl}}^{\rm val}(h)
    =
    \widehat L_0K_{\rm Cl}\widehat L_1K_{\rm Cl}
    \widehat L_2K_{\rm Cl}\widehat L_3K_{\rm Cl}\widehat L_4 .
\]
This is the same four-core ansatz as the manuscript map, under the relabeling
\[
    (L_1,L_2,L_3,L_4,L_5)
    =
    (\widehat L_4,\widehat L_3,\widehat L_2,\widehat L_1,\widehat L_0).
\]

The search also used an appended-\(K_{\rm Cl}\) pointwise convention
\[
    F_{\rm pt}(h):=\Phi_{K_{\rm Cl}}^{\rm val}(h)K_{\rm Cl}.
\]
Since right multiplication by \(K_{\rm Cl}\) is a diffeomorphism of \(SU(9)\),
equality in the appended convention is equivalent to equality in the
no-appended manuscript convention:
\[
    F_{\rm pt}(h_{\rm reg})=F_{\rm pt}(h_{\rm bad})
    \quad\Longleftrightarrow\quad
    \Phi_{K_{\rm Cl}}^{\rm val}(h_{\rm reg})
    =
    \Phi_{K_{\rm Cl}}^{\rm val}(h_{\rm bad}).
\]
The Frobenius residuals are also equal up to floating-point roundoff, because
the Frobenius norm is unitarily invariant.  Thus the appended convention is
only a validation bookkeeping convention; it cannot create a false
regular-pair equality in the manuscript map.

The same applies to differential ranks.  Since
\[
    F_{\rm pt}=R_{K_{\rm Cl}}\circ \Phi_{K_{\rm Cl}}^{\rm val},
    \qquad
    R_{K_{\rm Cl}}(U)=UK_{\rm Cl},
\]
and \(R_{K_{\rm Cl}}\) is a diffeomorphism, we have
\[
    \rank DF_{\rm pt}(h)
    =
    \rank D\Phi_{K_{\rm Cl}}^{\rm val}(h).
\]
Therefore regularity is unchanged by the appended-\(K_{\rm Cl}\) convention.

\subsection{Layer-block decomposition and step gaps}
\label{app:regular-pair-step-gaps}

Let
\[
    M_{K_{\rm Cl}}(h)\in\mathbb R^{80\times80}
\]
denote the right-trivialized differential matrix of
\(\Phi_{K_{\rm Cl}}^{\rm val}\) in orthonormal bases of
\(\mathfrak l^5\) and \(\mathfrak{su}(9)\).
One convention detail deserves an explicit statement.
The main text parameterized layer velocities by left trivialization,
\(X_i=L_i^{-1}\dot L_i\); this appendix, matching the numerical
implementation and the scan convention below, uses right-trivialized layer
velocities \(r_j=\dot{\widehat L}_j\widehat L_j^{-1}\).
The two column conventions differ by the block-diagonal map
\(\bigoplus_j\operatorname{Ad}_{\widehat L_j^{-1}}\), which is orthogonal with
respect to the Hilbert--Schmidt metric, so ranks, singular values, and step
gaps are identical in the two conventions.  We decompose it into five
\(80\times16\) local-layer blocks:
\[
    M_{K_{\rm Cl}}(h)
    =
    \bigl[
        C_0(h)\;C_1(h)\;C_2(h)\;C_3(h)\;C_4(h)
    \bigr],
\]
where
\[
    C_j(h):\mathfrak l\to\mathfrak{su}(9)
\]
is the contribution of the \(j\)-th local layer in validation order.  To
avoid confusing the linear maps with their images, set
\[
    \mathcal C_j(h):=\operatorname{im}C_j(h),
    \qquad
    W_j(h):=\mathcal C_0(h)+\cdots+\mathcal C_j(h).
\]
Thus \(W_j(h)\) is the tangent subspace opened by the first \(j+1\) layer
blocks.  For \(j=1,\ldots,4\), define the residual block
\[
    D_j(h)
    :=
    P_{W_{j-1}(h)^\perp}\circ C_j(h)
    :
    \mathfrak l\to W_{j-1}(h)^\perp,
\]
and the step gap
\[
    d_j(h):=\sigma_{\min}(D_j(h)).
\]
The quantity \(d_j\) measures how transversely the \(j\)-th layer block augments 
the span of the previous blocks.

If \(d_j(h)>0\), then the \(j\)-th block opens \(16\) new tangent directions.
In particular, if the first four layer blocks have opened a \(64\)-dimensional
subspace and \(D_4(h)\) has rank \(15\), then
\[
    \rank M_{K_{\rm Cl}}(h)
    =
    64+15
    =
    79.
\]
The observed critical candidates exhibit precisely this final-step closing pattern:
\(d_4\) collapses to the floating-point scale, with
\(\operatorname{median}_m d_4=1.28\times10^{-15}\), while the earlier step
gaps remain macroscopic:
\[
    \min_m d_2=1.06,
    \qquad
    \min_m d_3=0.509.
\]
We refer to this as the \(d_4\)-degeneration model.

\subsection{Search for fold-like critical candidates}
\label{app:regular-pair-bad-search}

The all-\(100\) fold-like critical candidates were generated by a structured middle-layer
search rather than by a blind \(80\)-dimensional all-layer random restart.
Each local layer is parameterized by a real \(16\)-vector
\[
    p_i=(a_i,b_i)\in\mathbb R^8\oplus\mathbb R^8
\]
using the Gell--Mann basis:
\[
    L_i(p_i)
    =
    \exp\!\left(i\sum_{\alpha=1}^{8}a_{i,\alpha}\lambda_\alpha\right)
    \otimes
    \exp\!\left(i\sum_{\alpha=1}^{8}b_{i,\alpha}\lambda_\alpha\right).
\]
A five-layer point is stored as a \(5\times16\) real array
\[
    p=(p_0,p_1,p_2,p_3,p_4).
\]

The critical-candidate search uses random normalized directions
\[
    A,B\in\mathbb R^{3\times16}
\]
on the middle three layers and sets the endpoint layers to zero:
\[
    p_0=p_4=0.
\]
For loop parameters \(r\ge0\) and \(t\in[0,1]\), the middle-layer parameters
are
\[
    (p_1,p_2,p_3)
    =
    r\left[
        \bigl(\cos(2\pi t)-1\bigr)A+\sin(2\pi t)B
    \right].
\]
This gives a two-parameter family
\[
    h(r,t;A,B)\in L^5
\]
which returns to the identity at \(t=0\) and \(t=1\), while exploring
nontrivial middle-layer deformations at intermediate \(t\).

For each random seed, the scripts first perform a coarse scan over
\((r,t)\).  At each point they compute \(M_{K_{\rm Cl}}(h)\) and its two
smallest singular values
\[
    \sigma_1(h)\le\sigma_2(h).
\]
Promising starts are then refined only in the two loop coordinates \((r,t)\),
using a Nelder--Mead minimization of
\[
    \log\!\left(\sigma_1(M_{K_{\rm Cl}}(h))^2+10^{-300}\right),
\]
with a penalty when \(\sigma_2\) falls below the prescribed gap threshold.

A refined point is accepted as a fold-like critical candidate only if it passes
the filters
\[
    \sigma_1<10^{-12},
    \qquad
    \sigma_2>10^{-3},
\]
and
\[
    d_4<10^{-9},
    \qquad
    d_2>5\times10^{-2},
    \qquad
    d_3>5\times10^{-2}.
\]
A separation filter is applied before accepting a new candidate: during
candidate generation, two candidates whose stored \(5\times16\)
real local-coordinate arrays were closer than \(0.85\) in Euclidean norm were
treated as the same rediscovered loop minimum.  This filter is only a
deduplication device; all geometric claims use the subsequent strict
recomputation of singular spectra, step gaps, and regular-pair residuals.

The final archived set contains \(100\) candidates generated by this
middle-three-layer loop pipeline, including a resumed search stage used to
complete the data set.  All accepted candidates are verified a posteriori
from the saved local parameters by strict recomputation of the rank-\(79\)
singular spectrum, the gap to \(\sigma_2\), and the final-step
\(d_4\)-closing diagnostic; the archive contains source-compatible generation
code together with an audit record of the original pipeline, so the
verification is independent of the original generation run.

\subsection{Finite-difference fold diagnostic}
\label{app:regular-pair-fold-scan}

The singular-value and step-gap diagnostics identify corank-one candidates,
but a simple fold also requires a second-order nondegeneracy condition in the
null direction~\cite{Whitney1955,GolubitskyGuillemin1973}.  We test this condition with a one-dimensional finite-difference
scan in validation-order local coordinates.

Let
\[
    h_{\rm bad}
    =
    (\widehat L_0,\widehat L_1,\widehat L_2,\widehat L_3,\widehat L_4)
\]
be a retained critical candidate.  Let
\[
    r=(r_0,r_1,r_2,r_3,r_4)\in\mathfrak l^5,
    \qquad
    \ell\in\mathfrak{su}(9)
\]
be approximate right and left null vectors:
\[
    M_{K_{\rm Cl}}(h_{\rm bad})r\approx0,
    \qquad
    \ell^T M_{K_{\rm Cl}}(h_{\rm bad})\approx0.
\]
The scan curve is
\[
\begin{aligned}
    h_{\rm bad}(\epsilon)
    :=
    (&e^{\epsilon r_0}\widehat L_0,\,
      e^{\epsilon r_1}\widehat L_1,\,
      e^{\epsilon r_2}\widehat L_2,\\
     &e^{\epsilon r_3}\widehat L_3,\,
      e^{\epsilon r_4}\widehat L_4).
\end{aligned}
\]
This convention satisfies
\[
    \dot{\widehat L}_j\widehat L_j^{-1}=r_j
    \qquad
    \text{at }\epsilon=0.
\]

Set
\[
    U_{\rm bad}:=\Phi_{K_{\rm Cl}}^{\rm val}(h_{\rm bad}),
\]
and use the right-trivialized target residual
\[
    \Theta_{\rm bad}(h)
    :=
    \log\!\left(
        \Phi_{K_{\rm Cl}}^{\rm val}(h)U_{\rm bad}^{-1}
    \right)
    \in\mathfrak{su}(9),
\]
where the principal logarithm is used on the small scan interval.  The
projected fold coordinate is
\[
    y(\epsilon)
    :=
    \bigl\langle
        \ell,\,
        \Theta_{\rm bad}(h_{\rm bad}(\epsilon))
    \bigr\rangle .
\]
Since \(r\) is a numerical right-null direction, the linear term in
\(y(\epsilon)\) should vanish to numerical precision.  A simple fold has a
nonzero quadratic coefficient:
\[
    y(\epsilon)
    =
    \frac12
    \bigl\langle
        \ell,\,
        D^2\Theta_{\rm bad}(h_{\rm bad})[r,r]
    \bigr\rangle
    \epsilon^2
    +
    O(\epsilon^3).
\]
The diagnostic therefore fits the computed values to
\[
    y(\epsilon)\approx a\epsilon^2+b\epsilon+c.
\]
A high-quality quadratic fit with \(a\ne0\) is numerical evidence for behavior 
consistent with a simple fold in the null direction.

\begin{figure*}[!tbp]
\centering
\includegraphics[width=0.92\textwidth]{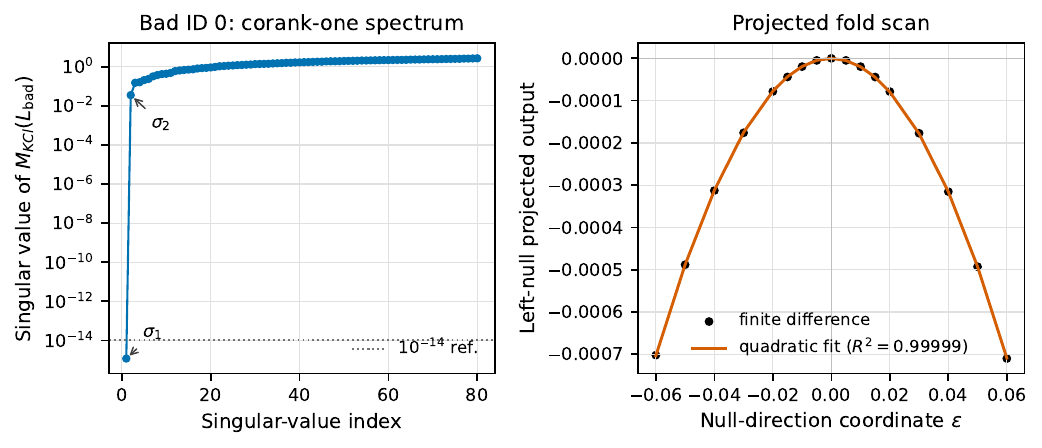}
\caption{
Fold-like diagnostics for the Clifford-word core.  Left, a representative
singular-value spectrum of \(M_{K_{\rm Cl}}(h_{\rm bad})\) shows one
near-zero singular value at the numerical floor and a visible gap to
\(\sigma_2\).  Right, a finite-difference scan along the approximate
right-null direction, projected onto the left-null direction, is well fit by
a quadratic curve.  These finite-precision diagnostics are numerically
consistent with simple-fold behavior; they are not an interval-certified fold
proof.
}
\label{fig:app-clifford-foldlike-diagnostics}
\end{figure*}

For representative candidates \(m=0,\ldots,4\), the quadratic fits have
\(R^2\) values between \(0.996\) and \(0.99999\).  These scans support
simple-fold normal behavior for representative members of the data set, but
they are finite-precision diagnostics rather than exact singularity
certificates.

\subsection{Regular-partner search and strict recomputation}
\label{app:regular-pair-strict-recompute}

For each fold-like critical candidate \(h_{\rm bad}^{(m)}\), the regular-partner
search fixes the no-appended target
\[
    U_{\rm bad}^{(m)}
    :=
    \Phi_{K_{\rm Cl}}^{\rm val}(h_{\rm bad}^{(m)}).
\]
It then searches for a second point \(h_{\rm reg}^{(m)}\in L^5\) satisfying
\[
    \Phi_{K_{\rm Cl}}^{\rm val}(h_{\rm reg}^{(m)})
    \approx
    U_{\rm bad}^{(m)}
\]
and having full-rank differential.

This mechanism was first identified using a fold-normal continuation strategy:
one perturbs the bad target along the numerical cokernel direction, solves on
the one-sided regular branch, and then refines the solution back toward the
original bad value.  The all-\(100\) search uses this idea as one proposal
mode, together with middle-layer perturbations, all-layer perturbations,
end-layer perturbations, soft-null-direction perturbations, and additional
fold-normal-branch seeds.

Each proposal is refined by Newton-type correction in local exponential
coordinates.  The optimizer only proposes a candidate.  Acceptance is decided
after reloading the saved local-layer parameters and recomputing the map
values and differential singular spectra from scratch.

For each pair ID \(m\), the strict verification script loads
\[
    h_{\rm bad}^{(m)},
    \qquad
    h_{\rm reg}^{(m)}
\]
and computes both residuals
\[
    r_m^{\rm pt}
    :=
    \left\|
    F_{\rm pt}(h_{\rm reg}^{(m)})
    -
    F_{\rm pt}(h_{\rm bad}^{(m)})
    \right\|_{\rm F},
\]
and
\[
    r_m^{\rm main}
    :=
    \left\|
    \Phi_{K_{\rm Cl}}^{\rm val}(h_{\rm reg}^{(m)})
    -
    \Phi_{K_{\rm Cl}}^{\rm val}(h_{\rm bad}^{(m)})
    \right\|_{\rm F}.
\]
It also recomputes the full singular-value spectrum of
\[
    M_{K_{\rm Cl}}(h_{\rm reg}^{(m)}).
\]

The product local-layer distance used in the verification is
\[
    d_{L^5}(h,h')
    =
    \left(
    \sum_{j=0}^{4}
    \left\|
    \log\!\left(\widehat L_j^\dagger \widehat L'_j\right)
    \right\|_{\rm F}^{2}
    \right)^{1/2},
\]
where \(h=(\widehat L_0,\ldots,\widehat L_4)\),
\(h'=(\widehat L'_0,\ldots,\widehat L'_4)\), and the principal matrix
logarithm is used.

A proposed pair is accepted only if
\[
    r_m^{\rm main}<10^{-10},
    \qquad
    r_m^{\rm pt}<10^{-10},
\]
\[
    \rank M_{K_{\rm Cl}}(h_{\rm reg}^{(m)})=80,
    \qquad
    \sigma_{\min}\bigl(M_{K_{\rm Cl}}(h_{\rm reg}^{(m)})\bigr)>10^{-4},
\]
and
\[
    d_{L^5}\bigl(h_{\rm bad}^{(m)},h_{\rm reg}^{(m)}\bigr)>10^{-1}.
\]
The rank tolerance used in the all-\(100\) verification is \(10^{-8}\).  Thus
accepted partners are not merely alternative nearly singular points in the
same critical fiber; they are numerically regular, finite-distance preimages
of the same target values.

\subsection{All-100 verification and figures}
\label{app:regular-pair-all100-results}

All \(100\) observed fold-like critical candidates admit a numerically regular
partner under strict recomputation:
\[
    \Phi_{K_{\rm Cl}}^{\rm val}(h_{\rm reg}^{(m)})
    \approx
    \Phi_{K_{\rm Cl}}^{\rm val}(h_{\rm bad}^{(m)}),
    \qquad
    \rank M_{K_{\rm Cl}}(h_{\rm reg}^{(m)})=80.
\]
The aggregate statistics are shown in
Table~\ref{tab:app-regular-pair-all100-summary}.  The residuals are
Frobenius norms in the no-appended validation convention, which is equivalent
to the manuscript map after relabeling of local layers.

\begin{table*}[!tbp]
\caption{
Full numerical summary for the Clifford-word regular-pair experiment.
Here \(\sigma_1\) and \(\sigma_2\) are evaluated at the fold-like critical
candidates, \(d_4\) is the final step gap, \(r_m^{\rm main}\) is the
no-appended target residual, and \(d_{L^5}\) is the product local-layer
distance used in strict verification.
}
\label{tab:app-regular-pair-all100-summary}
\begin{ruledtabular}
\begin{tabular}{lccc}
Quantity & Minimum & Median & Maximum \\
\hline
\multicolumn{4}{c}{Fold-like critical candidates} \\
\(\sigma_1(M_{K_{\rm Cl}}(h_{\rm bad}))\) &
\(4.33{\times}10^{-17}\) &
\(3.98{\times}10^{-16}\) &
\(1.20{\times}10^{-15}\) \\
\(\sigma_2(M_{K_{\rm Cl}}(h_{\rm bad}))\) &
\(9.87{\times}10^{-3}\) &
\(6.26{\times}10^{-2}\) &
\(1.54{\times}10^{-1}\) \\
\(d_4(h_{\rm bad})\) &
\(9.33{\times}10^{-16}\) &
\(1.28{\times}10^{-15}\) &
\(2.16{\times}10^{-15}\) \\
\hline
\multicolumn{4}{c}{Regular partners} \\
\(r_m^{\rm main}\) &
\(8.76{\times}10^{-16}\) &
\(3.16{\times}10^{-15}\) &
\(8.31{\times}10^{-13}\) \\
\(\sigma_{\min}(M_{K_{\rm Cl}}(h_{\rm reg}))\) &
\(3.71{\times}10^{-4}\) &
\(2.46{\times}10^{-2}\) &
\(9.75{\times}10^{-2}\) \\
\(d_{L^5}(h_{\rm reg},h_{\rm bad})\) &
\(1.26{\times}10^{-1}\) &
\(1.59\) &
\(11.74\)
\end{tabular}
\end{ruledtabular}
\end{table*}

\begin{figure*}[!tbp]
\centering
\includegraphics[width=0.92\textwidth]{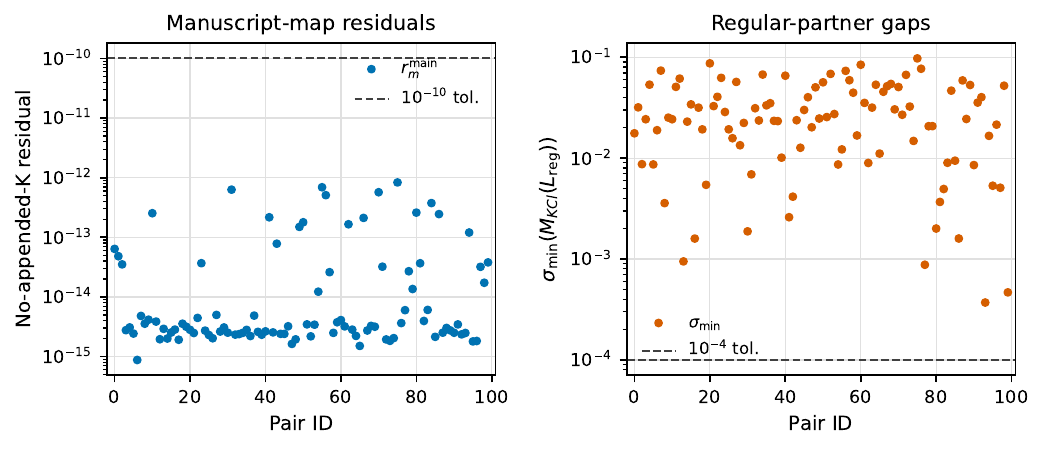}
\caption{
All-\(100\) regular-pair verification.  For all \(100/100\) observed fold-like
critical candidates, strict recomputation from the saved local parameters finds
a distinct numerically regular partner with no-appended manuscript-map residual below
\(10^{-10}\), Jacobian rank \(80\) at tolerance \(10^{-8}\), and
\(\sigma_{\min}(M_{K_{\rm Cl}}(h_{\rm reg}))>10^{-4}\).  The finite
critical-to-regular separations are summarized in
Table~\ref{tab:app-regular-pair-all100-summary}.
}
\label{fig:app-regular-pair-all100}
\end{figure*}

The appended-\(K_{\rm Cl}\) residuals are comparable, with median
\(3.19\times10^{-15}\) and maximum \(8.31\times10^{-13}\).  The last row of
Table~\ref{tab:app-regular-pair-all100-summary} shows that the accepted
partners are finite-distance preimages, not infinitesimal perturbations of
the same fold point.

At the stated numerical tolerances, every accepted pair is consistent with the candidate output being covered by a distinct regular branch. 
Consequently, none of the 100 observed fold-like candidates remains a numerical image-wall candidate under this verification protocol.

\subsection{Archived outputs and reproducibility}
\label{app:regular-pair-archive}

The regular-pair computation is archived as a machine-readable numerical
package.  The archive separates raw local-layer parameters, critical-candidate
generation records, strict verification outputs, aggregate summary files, and
figure-generation inputs.  Exact file names, directory paths, and command-line
instructions are recorded in the package README and manifest rather than
reproduced in the manuscript.

The strict verification outputs record, for each pair ID, the
appended-\(K_{\rm Cl}\) residual, the no-appended manuscript-map residual, the
numerical rank at the proposed regular partner, the smallest singular value
at the partner, the critical-to-regular local-layer distance, and the final
acceptance status.  The aggregate quantities reported in
Table~\ref{tab:app-regular-pair-all100-summary} are computed from these
strict verification outputs.

The figures in this appendix are generated from post-processed CSV/JSON files,
not by rerunning the search.  The fold-like diagnostic figure uses the saved
representative singular spectrum and finite-difference fold-scan data.  The
all-\(100\) verification figure uses the saved pair-by-pair manuscript-map
residuals and partner singular-value gaps.  The archive also contains the
plotting script, strict batch-verification script, and README instructions
for regenerating the figures and verification tables from the saved data.

As with all numerical certificates in this paper, the regular-pair archive is
a finite-precision verification package.  It supports the claim that all
\(100\) observed fold-like candidates admit distinct regular partners under
the stated tolerances.  It is not an interval-arithmetic proof that every
critical value of \(\Phi_{K_{\rm Cl}}\) has a regular preimage.

\subsection{Entangling-power diagnostics for \texorpdfstring{\(K_{\rm Cl}\)}{KCl}}
\label{app:kcl-entangling-power}

For a two-qutrit gate \(U\), we use the standard linear-entropy entangling
power
\[
e_p(U)=\int d\psi\,d\phi\,
\left[1-\operatorname{Tr}(\rho_A^2)\right],
\]
where
\(\rho_A=\operatorname{Tr}_B[U(\ket{\psi}\ket{\phi})(\bra{\psi}\bra{\phi})U^\dagger]\)
and the average is over product inputs.  Equivalently, for \(d\times d\)
systems one may use the Zanardi formula~\cite{ZanardiZalkaFaoro2000,WangSandersBerry2003}
\[
    e_p(U)
    =
    \left(\frac{d}{d+1}\right)^2
    \left[
        E_{\rm op}(U)+E_{\rm op}(U\mathsf S)-E_{\rm op}(\mathsf S)
    \right],
\]
where \(\mathsf S\) is the bipartite SWAP gate and \(E_{\rm op}\) is the
operator entanglement.  If
\(U=\sum_\alpha s_\alpha A_\alpha\otimes B_\alpha\) is an
operator-Schmidt decomposition with Hilbert--Schmidt orthonormal local
operators, then \(\sum_\alpha s_\alpha^2=d^2\) and
\[
    E_{\rm op}(U)=1-\frac{1}{d^4}\sum_\alpha s_\alpha^4 .
\]
For the short Clifford-word core,
\[
    s_\alpha(K_{\rm Cl})
    =
    s_\alpha(K_{\rm Cl}\mathsf S)
    =
    s_\alpha(\mathsf S)
    =
    1,
    \qquad \alpha=1,\ldots,9.
\]
Thus
\[
E_{\rm op}(K_{\rm Cl})=E_{\rm op}(K_{\rm Cl}\mathsf S)
=E_{\rm op}(\mathsf S)=\frac{8}{9},
\]
and for \(d=3\),
\[
    e_p(K_{\rm Cl})
    =
    \left(\frac{3}{4}\right)^2
    \left(\frac{8}{9}+\frac{8}{9}-\frac{8}{9}\right)
    =
    \frac{1}{2}.
\]
This is the maximal possible two-qutrit entangling power in the unnormalized
linear-entropy convention, where the general upper bound is
\((d-1)/(d+1)\)~\cite{ZanardiZalkaFaoro2000,WangSandersBerry2003}.

\section{Superconducting Hamiltonian generators and \texorpdfstring{\(K_{\rm sc}\)}{Ksc} matrix}
\label{app:sc-core-matrix}
\label{app:sc-generators}
\label{app:sc-diagnostics}

This appendix records the Hamiltonian generators, matrix representation, and
basic consistency checks for the superconducting transmon core used in the main text.
In transmon qutrits, the dominant single-photon drive matrix elements connect adjacent levels; accordingly, our effective model retains adjacent-level drives and exchange-type flip-flop couplings between the two truncated qutrits.
\begin{figure*}[!t]
    \centering
    \includegraphics[width=0.92\textwidth]{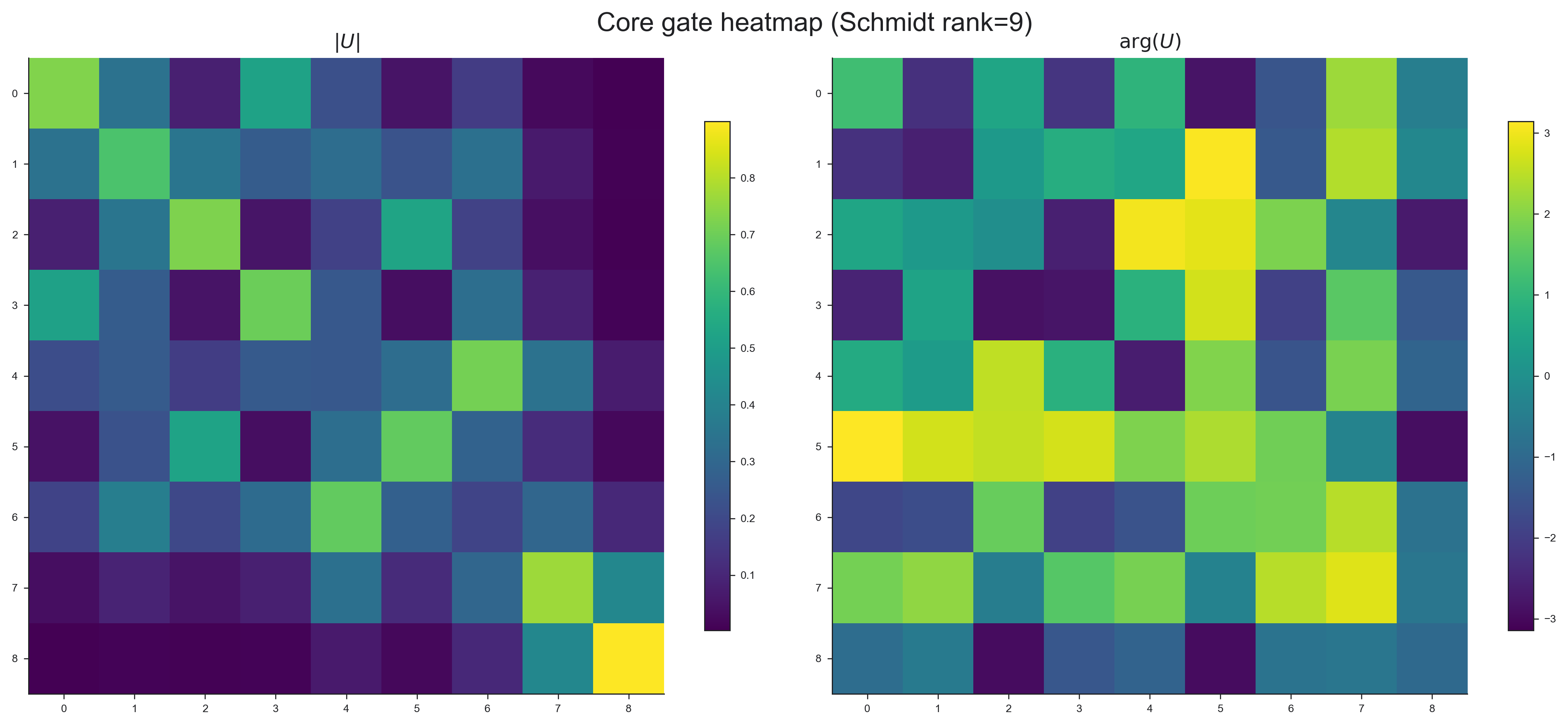}
    \caption{
    Matrix representation of the representative asymmetric-drive superconducting core \(\Ksc\) in the standard two-qutrit computational basis.
    The left panel shows the entrywise magnitude \(|\Ksc|\), and the right panel shows the entrywise phase \(\arg(\Ksc)\) in radians.
    The title reports the operator-Schmidt rank of the core, which is \(9\) for the two-qutrit bipartition.
    }
    \label{fig:app-ksc-heatmap}
\end{figure*}
Let
\[
    P_{ij,kl}:=|ij\rangle\langle kl|
\]
denote the matrix unit in the truncated two-qutrit basis.  The diagonal drift
generators are
\begin{align}
    H^{\rm drift}_1
    &=
    \sum_{i=1}^{2}\sum_{j=0}^{2}P_{ij,ij},
    &
    H^{\rm drift}_2
    &=
    \sum_{i=0}^{2}\sum_{j=1}^{2}P_{ij,ij},
    \nonumber\\
    H^{\rm drift}_3
    &=
    \sum_{j=0}^{2}P_{2j,2j},
    &
    H^{\rm drift}_4
    &=
    \sum_{i=0}^{2}P_{i2,i2}.
    \label{eq:app-sc-drift-terms}
\end{align}
The single-qutrit drive generators are
\begin{align}
    H^{\rm drive}_1
    &=
    \sum_{j=0}^{2}
    \left(P_{0j,1j}+P_{1j,0j}\right),
    \nonumber\\
    H^{\rm drive}_2
    &=
    \sum_{j=0}^{2}
    \left(P_{1j,2j}+P_{2j,1j}\right),
    \nonumber\\
    H^{\rm drive}_3
    &=
    \sum_{i=0}^{2}
    \left(P_{i0,i1}+P_{i1,i0}\right),
    \nonumber\\
    H^{\rm drive}_4
    &=
    \sum_{i=0}^{2}
    \left(P_{i1,i2}+P_{i2,i1}\right).
    \label{eq:app-sc-drive-terms}
\end{align}
The exchange-type coupling generators are
\begin{align}
    H^{\rm coup}_1
    &=
    P_{10,01}+P_{01,10},
    &
    H^{\rm coup}_2
    &=
    P_{20,11}+P_{11,20},
    \nonumber\\
    H^{\rm coup}_3
    &=
    P_{11,02}+P_{02,11},
    &
    H^{\rm coup}_4
    &=
    P_{21,12}+P_{12,21}.
    \label{eq:app-sc-coupling-terms}
\end{align}
The coefficient ordering is
\begin{equation}
    \left(
    \Delta_1,\ldots,\Delta_4,
    \Omega_1,\ldots,\Omega_4,
    g_1,\ldots,g_4
    \right).
\label{eq:sc-coefficient-ordering}
\end{equation}

The matrix is represented in the standard two-qutrit computational basis
\begin{equation}
    \{|00\rangle,|01\rangle,|02\rangle,|10\rangle,\ldots,|22\rangle\}.
    \label{eq:app-ksc-computational-basis}
\end{equation}
In this basis, \(\Ksc\) is a \(9\times9\) complex unitary matrix.
The full list of matrix entries is not displayed in the main text because the matrix is a numerical object rather than a structural formula.
For inspection, Fig.~\ref{fig:app-ksc-heatmap} visualizes the magnitude and phase of the generated core.\par
The magnitude panel shows that the generated core is not close to a sparse permutation-type gate.
The phase panel shows that the entries also carry nontrivial phases.
Together, the two panels provide a compact visualization of the numerical core without listing all \(81\) complex entries.\par
\begin{figure*}[!t]
    \centering
    \begin{minipage}{0.48\textwidth}
        \centering
        \includegraphics[width=\linewidth]{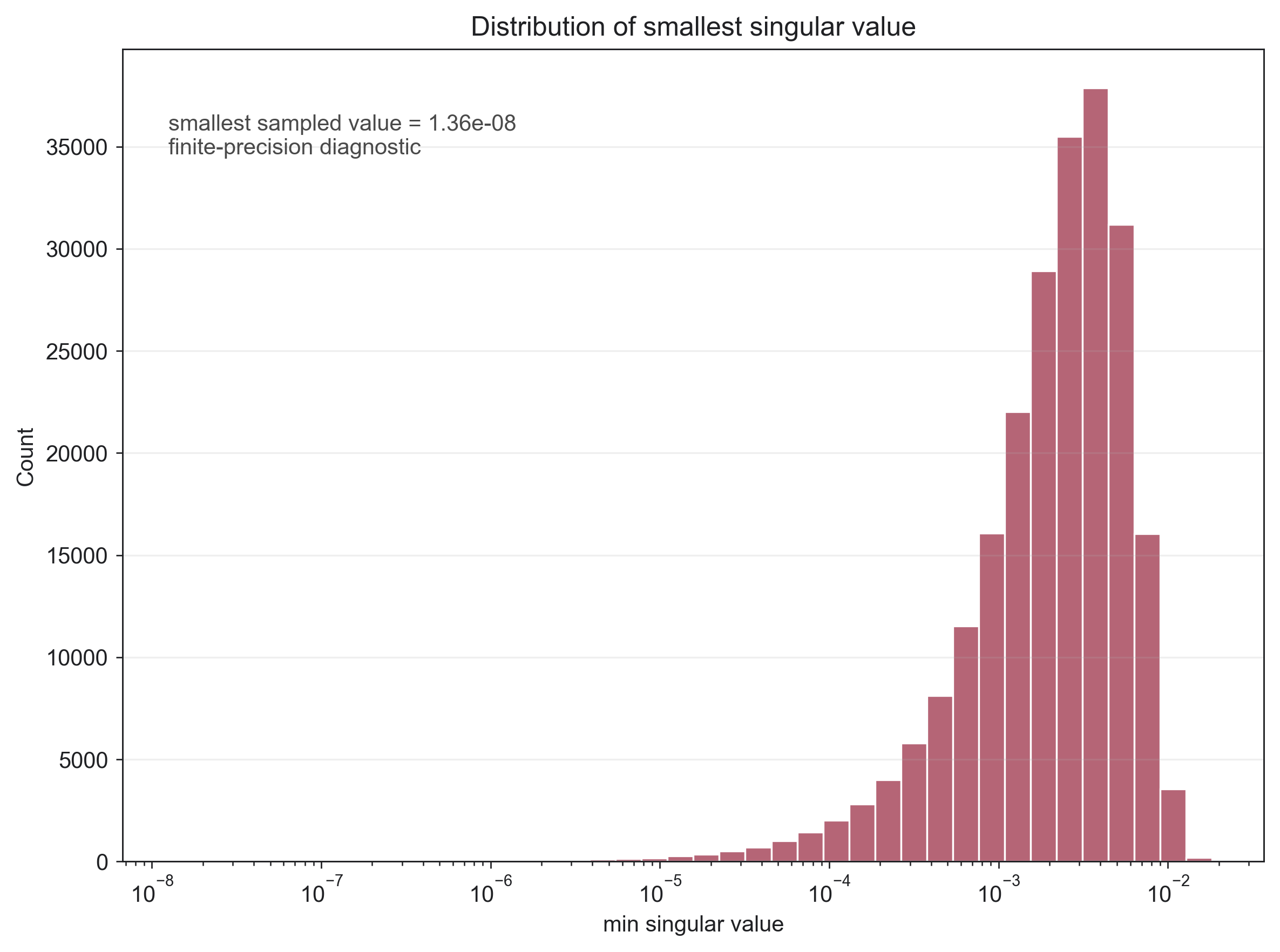}
        \vspace{-0.5em}
        \centerline{\small (a) Smallest singular value}
    \end{minipage}
    \hfill
    \begin{minipage}{0.48\textwidth}
        \centering
        \includegraphics[width=\linewidth]{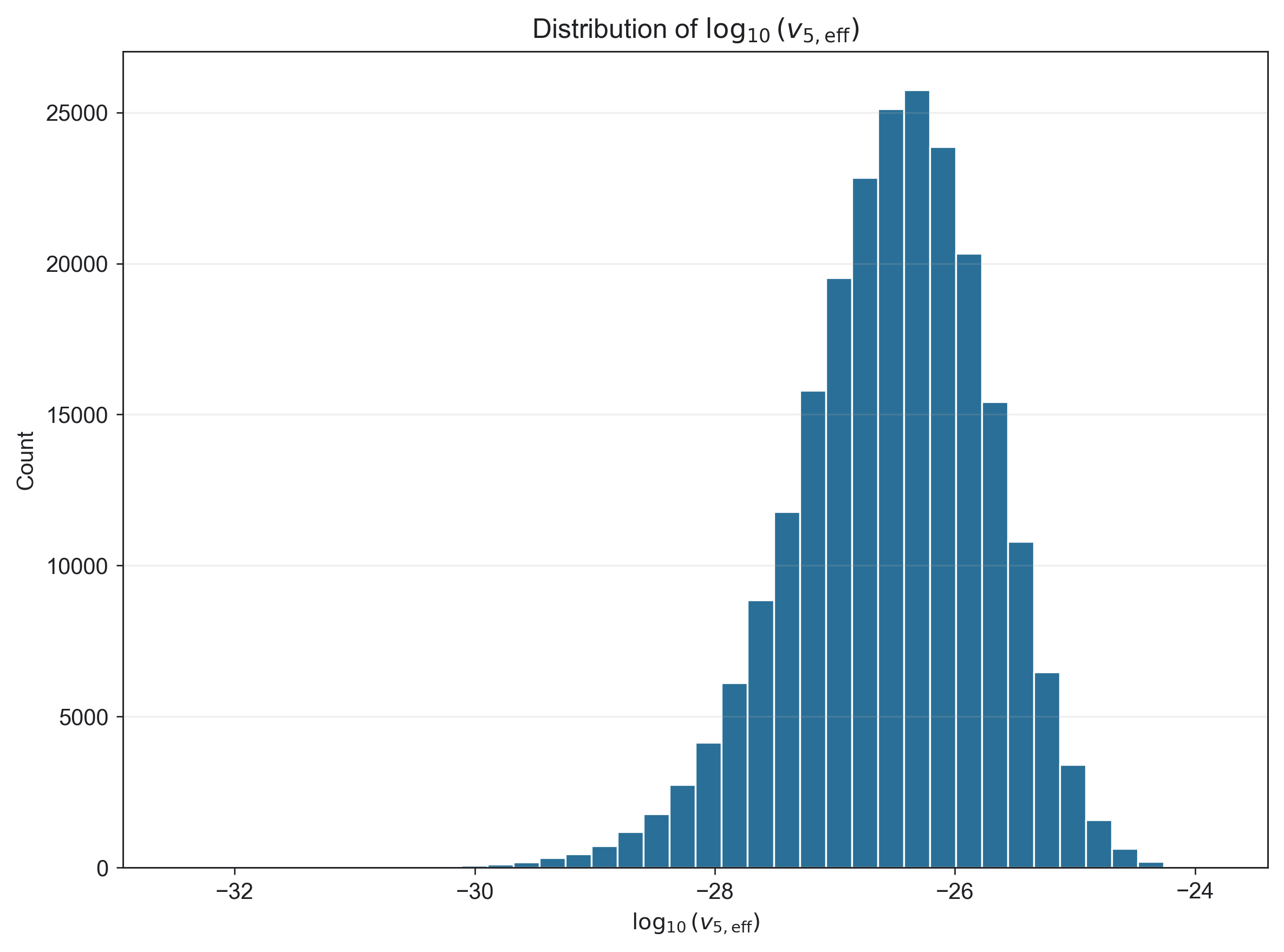}
        \vspace{-0.5em}
        \centerline{\small (b) Effective tangent volume}
    \end{minipage}
    \caption{
    Pointwise Jacobian diagnostics for the representative asymmetric-drive superconducting core \(K_{\rm sc}\), evaluated at \(230{,}400\) sampled local configurations.
    Panel (a) shows the distribution of \(\sigma_{\min}(A_{\rm pt})\); no sampled rank collapse is observed, and the smallest recorded value is \(1.36\times10^{-8}\).
    Panel (b) shows the distribution of \(\log_{10}(v_{5,\mathrm{eff}})\), where \(v_{5,\mathrm{eff}}\) is the product of all \(80\) singular values of \(A_{\rm pt}\).
    These finite-precision pointwise quantities are sampled conditioning and tangent-volume diagnostics, not uniform analytic bounds over the local-layer manifold.
    }
    \label{fig:app-ksc-rank-diagnostics}
\end{figure*}
We next record the numerical checks confirming that the generated matrix is a valid element of \(SU(9)\) up to floating-point precision.
The unitarity error is measured by
\begin{equation}
    \epsilon_{\rm unitary}
    :=
    \left\|
    \Ksc^\dagger\Ksc-I_9
    \right\|_{\rm F}.
    \label{eq:app-ksc-unitarity-error}
\end{equation}
Here \(\|\cdot\|_{\rm F}\) denotes the Frobenius norm.
The determinant-normalization error is measured by
\begin{equation}
    \epsilon_{\rm det}
    :=
    \left|
    \det(\Ksc)-1
    \right|.
    \label{eq:app-ksc-det-error}
\end{equation}
These errors are used only to confirm the numerical consistency of the generated matrix.
They are not, by themselves, evidence of expressivity or synthesis performance.\par
Table~\ref{tab:app-ksc-consistency} summarizes the consistency checks.
\begin{table}[!t]
    \centering
    \caption{
    Numerical consistency checks for the representative asymmetric-drive core \(\Ksc\).
    }
    \label{tab:app-ksc-consistency}
    \begin{tabular}{c c}
        \hline\hline
        Quantity & Value \\
        \hline
        Matrix dimension & \(9\times9\) \\
        Ambient group after normalization & \(SU(9)\) \\
        \(\epsilon_{\rm unitary}=\|\Ksc^\dagger\Ksc-I_9\|_{\rm F}\)
        & \(1.85\times10^{-14}\) \\
        \(\epsilon_{\rm det}=|\det(\Ksc)-1|\)
        & \(1.36\times10^{-15}\) \\
        Operator-Schmidt rank & \(9\) \\
        \hline\hline
    \end{tabular}
\end{table}
Finally, we record the operator-Schmidt rank of \(\Ksc\).
For a two-qutrit operator \(U\), we reshape its entries as
\begin{equation}
    U_{(i_A i_B),(j_A j_B)}
    \longmapsto
    R_U{}_{(i_A j_A),(i_B j_B)} ,
    \label{eq:app-operator-schmidt-reshaping}
\end{equation}
where \(i_A,i_B,j_A,j_B\in\{0,1,2\}\).
The singular values of \(R_U\) are the operator-Schmidt coefficients.
The operator-Schmidt rank is the number of coefficients above the chosen numerical tolerance.
For the superconducting core used in this work, this rank is
\begin{equation}
    \mathrm{SchmidtRank}_{\rm op}(\Ksc)=9.
    \label{eq:app-ksc-schmidt-rank}
\end{equation}
Thus the generated core has full operator-Schmidt rank for the two-qutrit bipartition.
This full operator-Schmidt rank should not be confused with the full-rank differential certificate used in the main text.
The operator-Schmidt rank detects nonlocal operator structure, whereas the differential certificate tests whether the four-core map opens all tangent directions of \(SU(9)\).\par
These checks establish that the generated \(\Ksc\) is a numerically valid \(SU(9)\) fixed core.
Its matrix visualization, unitarity check, determinant normalization check, and operator-Schmidt rank provide the reproducibility data needed to verify the core used in the main text.
The pointwise Jacobian-rank diagnostics and synthesis benchmarks in the main text then evaluate its expressive power.

\section{Restart protocol and target-wise convergence for the Haar-random \texorpdfstring{\(K_{\rm sc}\)}{Ksc} benchmark}
\label{app:sc-haar-restarts}
\label{app:haar-restarts}

This appendix records the restart protocol used in the Haar-random synthesis benchmark for the representative superconducting core \(\Ksc\).
Throughout this benchmark, the core \(\Ksc\) is held fixed.
Only the five local layers in the four-core architecture are optimized.
No Hamiltonian parameters are re-optimized for individual Haar-random targets.
Thus, a restart is a new initialization of the local-layer optimization problem, not a change of the physical core.\par

For each Haar-random target \(V_i\in SU(9)\), we run a multi-restart optimization over the local-layer parameters.
Each restart uses an independent initialization of the local coordinates.
Each single-qutrit factor is parameterized as \(\exp[i\sum_{a=1}^{8}\theta_a\lambda_a]\), where \(\lambda_a\) are the Gell-Mann generators, and every restart initializes the \(80\) real coefficients independently from \(\mathcal N(0,0.02^2)\).
We use Adam with learning rate \(0.05\), gradient clipping at \(2.0\), and a ReduceLROnPlateau scheduler with reduction factor \(0.5\), patience \(150\), and minimum learning rate \(10^{-6}\).
Each restart is limited to \(5000\) epochs, with at most \(100\) restarts per target; stagnation stopping uses patience \(1000\) and minimum fidelity improvement \(10^{-7}\).
The Haar-random targets are generated by complex-Gaussian QR decomposition with phase correction, followed by determinant normalization to \(SU(9)\), using seed \(7000+i\) for target index \(i\).
The optimization for target \(V_i\) stops as soon as one restart reaches the success threshold
\begin{equation}
    F_{\rm avg}\ge 0.999 .
    \label{eq:app-ksc-restart-success-threshold}
\end{equation}
If no restart reaches the threshold within the allowed restart budget, the target is recorded as unsuccessful at that budget.
In the reported \(1000\)-target benchmark, every target reaches the threshold within the allowed restart budget.\par

Let \(R_i\) denote the first successful restart index for target \(V_i\).
Equivalently,
\begin{equation}
    R_i
    :=
    \min\left\{
    r:
    F_{\rm avg}^{(i,r)}\ge 0.999
    \right\},
    \label{eq:app-ksc-first-success-restart}
\end{equation}
where \(F_{\rm avg}^{(i,r)}\) is the best average gate fidelity reached by restart \(r\) for target \(V_i\).
The value \(R_i\) measures the optimization difficulty of target \(V_i\) under the chosen initialization and optimizer.
It should not be interpreted as an additional circuit depth or as a modification of the fixed core.\par

To summarize the restart dependence over the full target ensemble, we define the empirical cumulative success curve
\begin{equation}
    C(R)
    :=
    \frac{1}{N_{\rm tar}}
    \#\left\{
    i:
    R_i\le R
    \right\},
    \qquad
    N_{\rm tar}=1000 .
    \label{eq:app-ksc-restart-cdf}
\end{equation}
Thus \(C(R)\) is the fraction of Haar-random targets that are successfully synthesized within a restart budget of \(R\).
This curve separates empirical synthesis success within a prescribed restart budget from the number of random initializations required by the chosen optimizer to reach the fidelity threshold.\par

Figure~\ref{fig:app-ksc-restart-budget} shows the empirical success rate as a function of restart budget.
The curve rises rapidly at small restart budgets.
More than half of the targets succeed on the first restart, and the success rate increases substantially by \(R=2\) and \(R=5\).
At \(R=20\), \(999\) of the \(1000\) targets reach the threshold, and the remaining target first succeeds by \(R=50\), where the curve reaches \(100\%\).
Therefore, the reported \(1000/1000\) Haar-random success result does not require changing the superconducting core or redesigning the Hamiltonian pulse. 
It is obtained by allowing multiple initializations of the same five local layers.\par

\begin{figure}[!t]
    \centering
    \includegraphics[width=0.98\columnwidth]{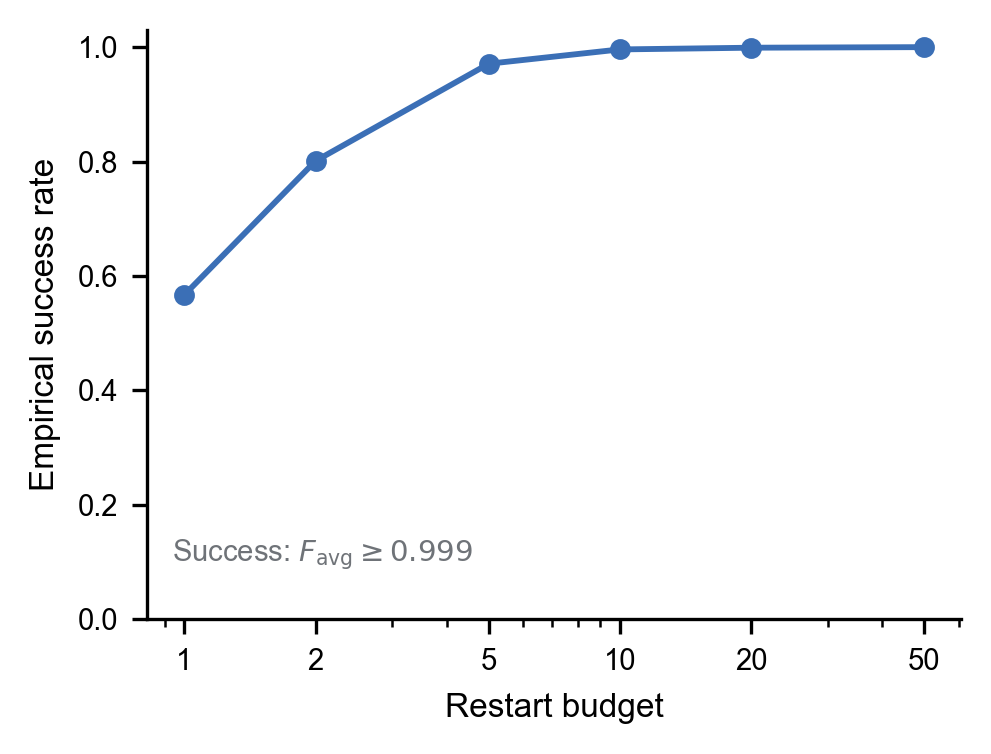}
    \caption{
    Empirical success rate as a function of restart budget for Haar-random synthesis with the representative asymmetric-drive superconducting core \(\Ksc\).
    At each displayed budget \(R\), the curve gives the fraction of the same \(1000\) Haar-random \(SU(9)\) targets for which at least one of the first \(R\) independently initialized restarts reaches \(F_{\rm avg}\ge0.999\).
    The rate is \(99.9\%\) at \(R=20\) and reaches \(100\%\) at \(R=50\), the largest displayed budget.
    }
    \label{fig:app-ksc-restart-budget}
\end{figure}

The restart-budget curve provides information complementary to the final average-gate-fidelity histogram in the main text.
Because the optimizer stops once \(F_{\rm avg}\ge0.999\) is reached, the final fidelity distribution is a threshold-reaching distribution rather than a distribution of maximum attainable fidelities.
By contrast, the curve \(C(R)\) measures the optimization difficulty of the target ensemble under the fixed-core protocol.
Targets that require more restarts are harder for the chosen optimizer and initialization distribution, even though they remain successfully synthesized within the allowed budget.\par

The need for restarts is expected because the optimization over five local layers is nonconvex.
Different initial local layers can converge to different basins of attraction.
A failed restart does not imply that the target lies outside the image of the fixed-core map.
It only means that the optimizer did not find a successful local-layer configuration from that particular initialization.
For this reason, the restart budget is part of the numerical synthesis protocol.\par

The target-wise restart counts and convergence histories are stored in the machine-readable benchmark outputs.
These files allow the benchmark to be reproduced target by target.
They also make it possible to identify outlier targets that require unusually many restarts.
Such outliers are useful for diagnosing the geometry of the optimization landscape associated with the fixed superconducting core.
In the present benchmark, all \(1000\) Haar-random targets are solved within the displayed restart range, with the final target succeeding by \(R=50\).
This supports the conclusion in the main text that the representative \(\Ksc\) has strong average-case synthesis performance under the four-core fixed-core architecture.

\section{Hamiltonian-parameter robustness details for \texorpdfstring{\(K_{\rm sc}\)}{Ksc}}
\label{app:sc-parameter-robustness}
\label{app:robustness}
This appendix gives the detailed protocol behind the two Hamiltonian-parameter robustness tests reported in Sec.~\ref{subsec:sc-structured-targets}.
Both tests use the same nominal core \(\Ksc\), the same ordered set of fixed Haar-random \(SU(9)\) targets, and the same shared-offset convention.
The perturbations are applied to the Hamiltonian coefficients before core generation, not directly to entries of the final \(9\times9\) core matrix.
Each parameter offset is fixed over the full core interval, while the nominal time dependence of the first-drive envelope is retained, and the same perturbed core is used in all four core positions.\par
\begin{table*}[!t]
    \centering
    \caption{
    Shared static Hamiltonian-coefficient-offset results for the representative superconducting core.
    Nonzero coefficients are varied multiplicatively along fixed perturbation directions, while nominally zero coefficients are held at zero.
    Protocol A reports core-wise recompilation success over \(20\) cores and \(100\) fixed targets per core.
    Protocol B reports pooled frozen-compilation target fidelities over \(100\) offset directions and \(1000\) fixed targets per direction.
    The displayed values are direct sample statistics, not analytic bounds.
    }
    \label{tab:app-robustness-summary}
    \begin{ruledtabular}
    \begin{tabular}{c c c c c c c}
        & \multicolumn{3}{c}{Protocol A: core-wise success} &
        \multicolumn{3}{c}{Protocol B: target \(F_{\rm avg}\)} \\
        \(\epsilon\) & Median & Mean & IQR & Median & 5th percentile & 95th percentile \\
        \hline
        \(0\)    & \(1.000\) & \(1.000\) & \(1.000\)--\(1.000\) & \(0.999003\) & \(0.999000\) & \(0.999031\) \\
        \(0.05\) & \(0.990\) & \(0.980\) & \(0.975\)--\(1.000\) & \(0.973560\) & \(0.927277\) & \(0.990821\) \\
        \(0.10\) & \(0.975\) & \(0.933\) & \(0.888\)--\(1.000\) & \(0.901700\) & \(0.741956\) & \(0.966330\) \\
        \(0.15\) & \(0.965\) & \(0.855\) & \(0.778\)--\(1.000\) & \(0.794632\) & \(0.515963\) & \(0.926841\) \\
        \(0.20\) & \(0.885\) & \(0.731\) & \(0.475\)--\(0.993\) & \(0.667396\) & \(0.321210\) & \(0.874557\) \\
    \end{tabular}
    \end{ruledtabular}
\end{table*}

The canonical parameter vector contains the twelve transition-resolved coefficients
\begin{equation}
    \mathbf p_0=\left(
    \Delta_1,\ldots,\Delta_4,
    \Omega_1,\ldots,\Omega_4,
    g_1,\ldots,g_4
    \right),
    \label{eq:app-robustness-parameter-vector}
\end{equation}
following the ordering in Eq.~\eqref{eq:sc-coefficient-ordering}.
The four \(\Delta_\mu\) parameters correspond to drift or detuning terms.
The four \(\Omega_\mu\) parameters correspond to local drive coefficients; the first sets the peak coefficient of the normalized asymmetric envelope, whereas the other three remain constant during the pulse.
The four \(g_\mu\) parameters correspond to exchange-type coupling terms.
The representative unperturbed values are listed in Table~\ref{tab:sc-representative-parameters}.
For each perturbation direction \(c\), a vector \(\mathbf z_c\in[-1,1]^{12}\) is sampled once with independent uniform components and reused at every value of \(\epsilon\).
For a nonzero nominal coefficient, the paired-radial rule is
\begin{equation}
    p_\mu(\epsilon,c)
    =p_{0,\mu}\left(1+\epsilon z_{c,\mu}\right),
    \qquad p_{0,\mu}\ne0,
    \label{eq:app-robustness-nonzero-rule}
\end{equation}
where \(\epsilon\) is the relative offset amplitude.
In the common cross-protocol comparison, a nominally zero coefficient is held fixed:
\begin{equation}
    p_\mu(\epsilon,c)=0,
    \qquad p_{0,\mu}=0.
    \label{eq:app-robustness-zero-rule}
\end{equation}
Thus each \(\mathbf z_c\) specifies one fixed trajectory through parameter space rather than a newly resampled perturbation at every \(\epsilon\).
A separate frozen-only scan allows nominally zero coefficients to acquire additive perturbations scaled by the nonzero coefficients in the same parameter group.
Those results and the separate group-ablation scans are not pooled with the common cross-protocol comparison in Fig.~\ref{fig:ksc-robustness-protocols} or Table~\ref{tab:app-robustness-summary}.\par

For each parameter vector, the time-dependent Hamiltonian is constructed with the unchanged transition-resolved projector model and the same normalized asymmetric first-drive envelope used for the nominal core.
Writing this Hamiltonian as \(H(t;\mathbf p)\), the core used in the robustness study is
\begin{equation}
    \begin{aligned}
    H_{\rm su}(t;\mathbf p)
    &=H(t;\mathbf p)-\frac{\Tr H(t;\mathbf p)}{9}I_9,\\
    K_{\rm raw}(\mathbf p)
    &=\mathcal T\exp\!\left[-i\int_0^T H_{\rm su}(t;\mathbf p)\,dt\right],\\
    K(\mathbf p)
    &=K_{\rm raw}(\mathbf p)\det\!\left(K_{\rm raw}(\mathbf p)\right)^{-1/9}.
    \end{aligned}
    \label{eq:app-robustness-time-dependent-core}
\end{equation}
The time-ordered propagator is evaluated with the same QuTiP \texttt{sesolve} and \texttt{vern9} settings as the nominal publication path; no polar or QR correction is applied.
At \(\epsilon=0\), both the parameter vector and the generated core reproduce the nominal publication path within the configured numerical tolerances.\par

The target data set contains \(1000\) Haar-random \(SU(9)\) matrices generated once, stored with hashes, and reused without regeneration.
All \(1000\) nominal compilations reach
\begin{equation}
    F_{\rm avg}\ge 0.999 .
    \label{eq:app-robustness-success-threshold}
\end{equation}
The nominal minimum, mean, and maximum fidelities are \(0.999000002\), \(0.999007837\), and \(0.999173682\), respectively.
The best-performing numerical local-parameter tensors, local matrices, tensor-product local layers, and nominal predicted circuits found by the optimizer are retained for both protocols.\par

In Protocol A, each perturbed core is recompiled against a fixed \(100\)-target subset of the full data set.
The dense scan uses \(20\) perturbation directions at
\begin{multline*}
\epsilon\in\{0,0.005,0.01,0.02,0.03,0.04,0.05,\\
0.06,0.07,0.08,0.10,0.125,0.15,0.20\}.
\end{multline*}
For every target--core pair, optimization begins from the corresponding nominal local-parameter tensor.
If the warm start does not reach Eq.~\eqref{eq:app-robustness-success-threshold}, up to five independently seeded random rescue restarts are used under the same optimizer and epoch budget.
For core direction \(c\), the primary statistic is the core-wise success fraction
\begin{equation}
    s_c(\epsilon)=\frac{1}{100}
    \sum_{i=1}^{100}
    \mathbf 1\!\left[F_{\rm avg}^{(i,c,\epsilon)}\ge0.999\right].
    \label{eq:app-robustness-core-success}
\end{equation}
This keeps the perturbed core, rather than a pooled target--core pair, as the statistical unit.\par

In Protocol B, the five local layers obtained from the nominal compilation are frozen exactly.
For each of \(100\) perturbation directions and every one of the \(1000\) targets, the evaluated circuit is
\begin{equation}
    U_{\rm frozen}^{(i,c,\epsilon)}
    =L_{5,i}^{0}K_{c,\epsilon}L_{4,i}^{0}K_{c,\epsilon}
     L_{3,i}^{0}K_{c,\epsilon}L_{2,i}^{0}K_{c,\epsilon}L_{1,i}^{0}.
    \label{eq:app-robustness-frozen-circuit}
\end{equation}
No optimization, gradient update, scheduler step, or random restart is invoked in this protocol.
We report the absolute target fidelity
\(F_{\rm target}=F_{\rm avg}(U_{\rm frozen},U_{\rm target})\)
and the degradation relative to the actual nominal synthesized circuit,
\(\Delta\mathcal I=F_0-F_{\rm target}\), rather than assuming \(F_0=1\).
At \(\epsilon=0\), the stored nominal circuits and fidelities are reconstructed to within \(10^{-12}\).\par

\begin{figure*}[!t]
    \centering
    \begin{minipage}{0.45\textwidth}
        \centering
        \includegraphics[width=\linewidth]{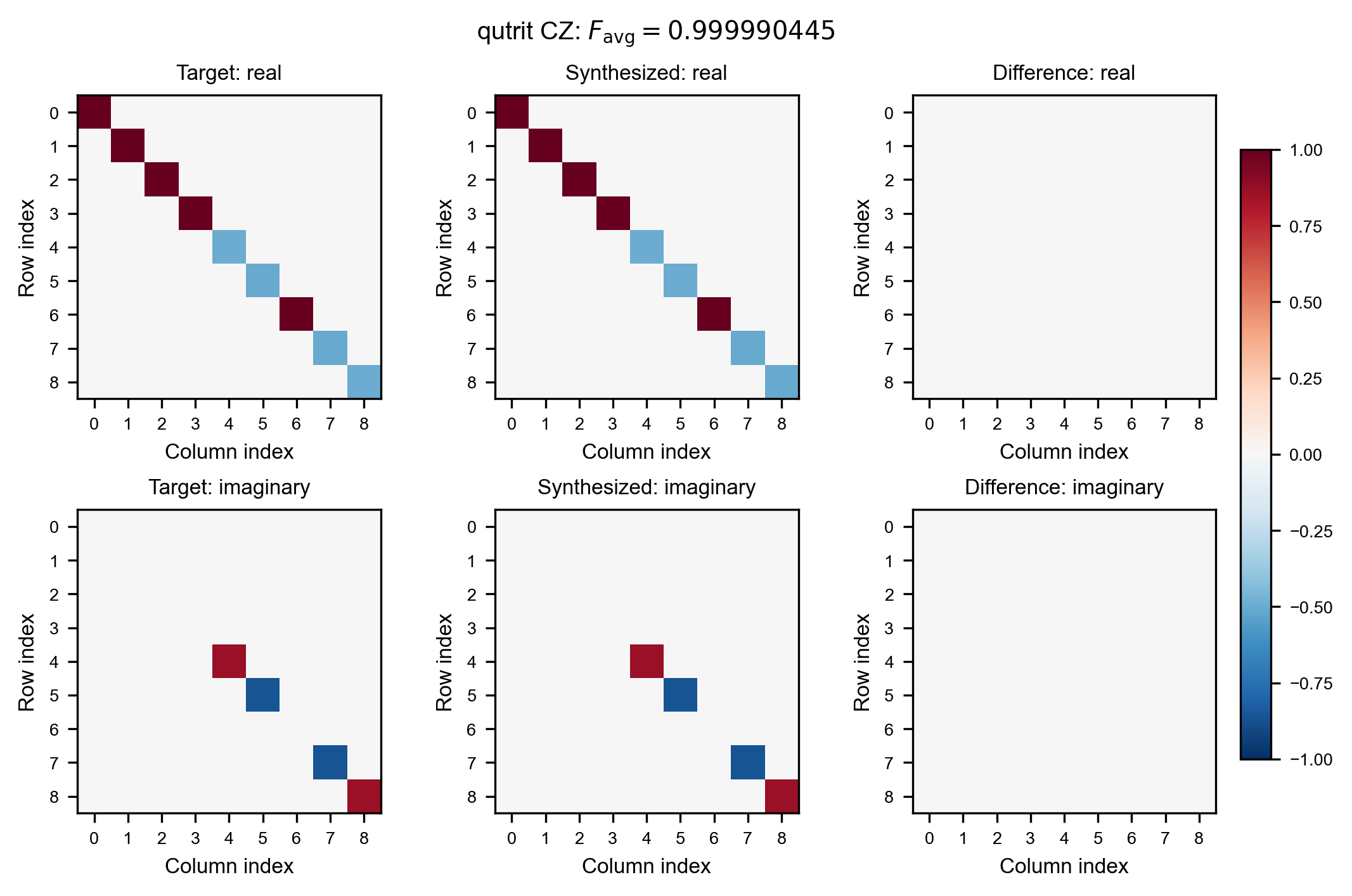}
        \vspace{-0.5em}
        \centerline{\small (a) qutrit CZ}
    \end{minipage}
    \hfill
    \begin{minipage}{0.45\textwidth}
        \centering
        \includegraphics[width=\linewidth]{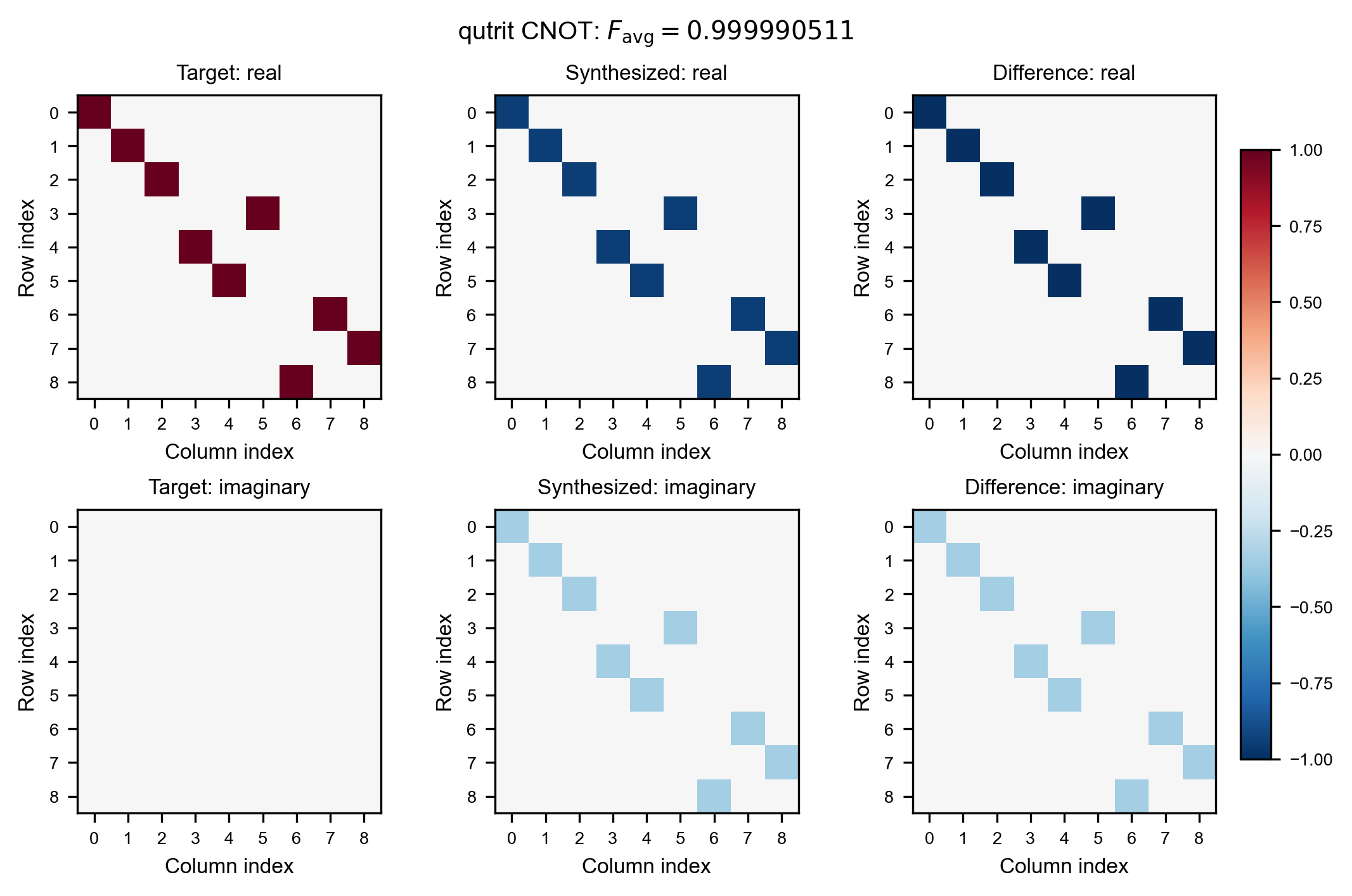}
        \vspace{-0.5em}
        \centerline{\small (b) qutrit CNOT}
    \end{minipage}
    \vspace{0.4em}

    \begin{minipage}{0.45\textwidth}
        \centering
        \includegraphics[width=\linewidth]{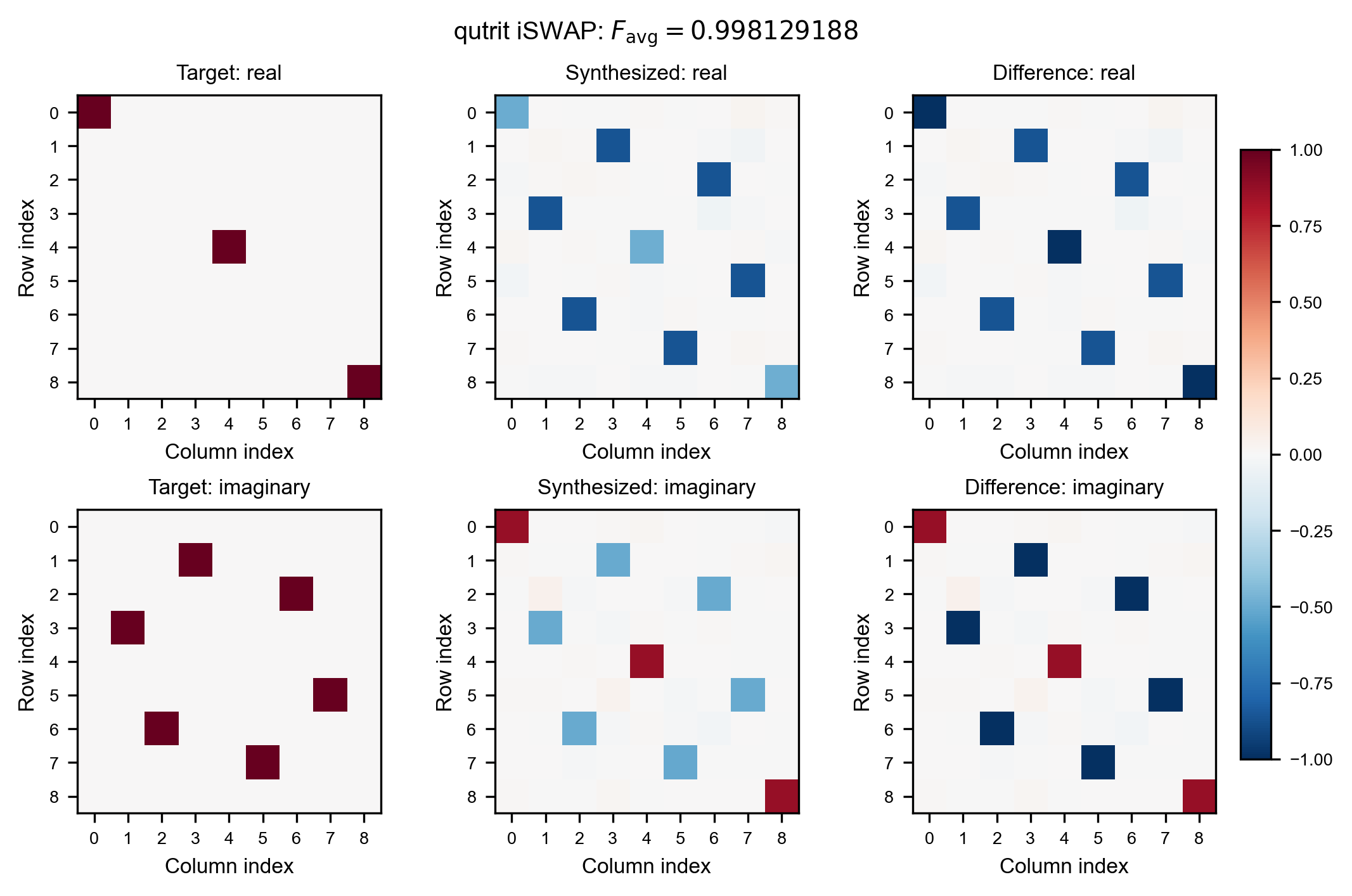}
        \vspace{-0.5em}
        \centerline{\small (c) qutrit iSWAP}
    \end{minipage}
    \hfill
    \begin{minipage}{0.45\textwidth}
        \centering
        \includegraphics[width=\linewidth]{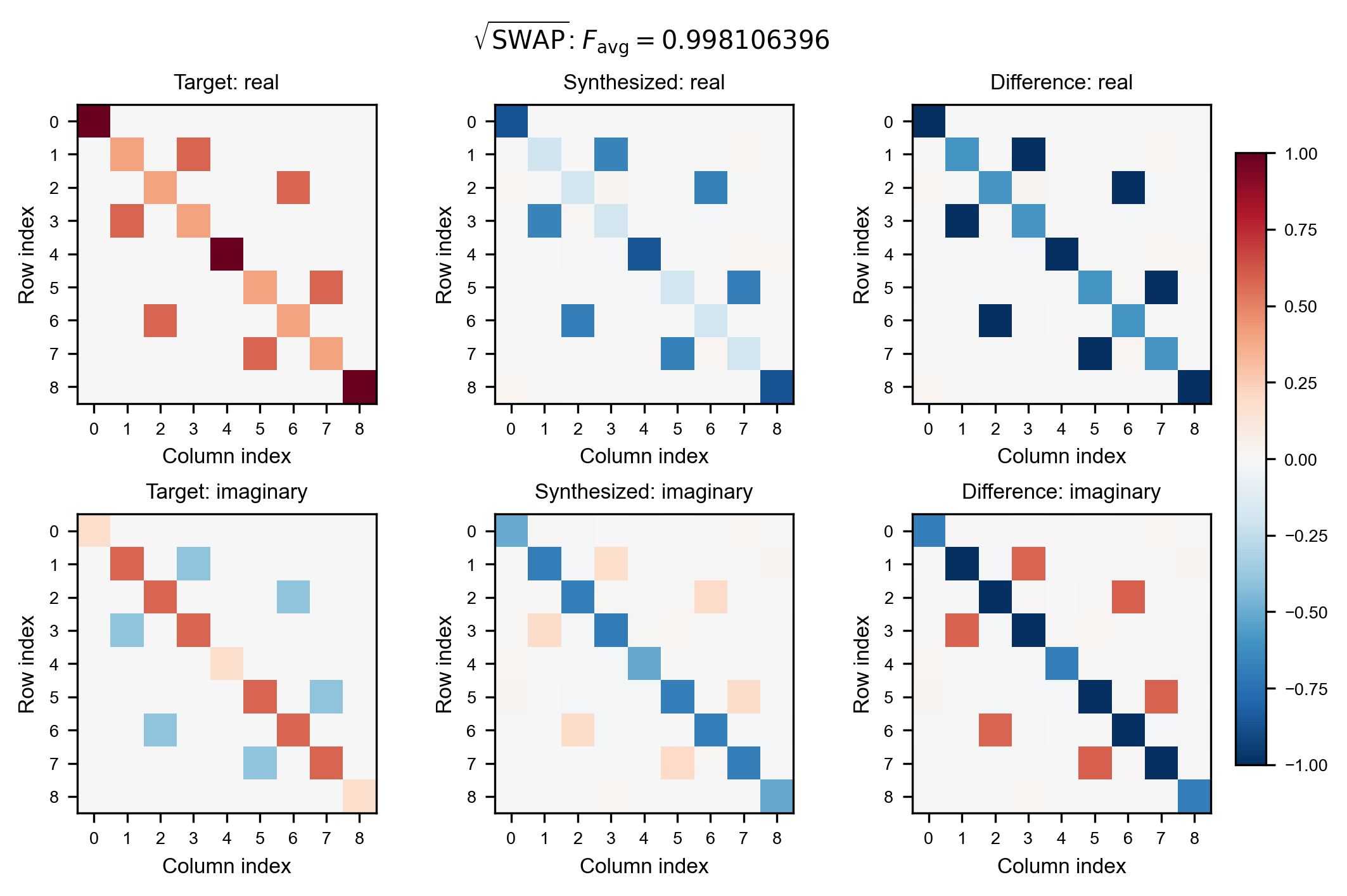}
        \vspace{-0.5em}
        \centerline{\small (d) \(\sqrt{\mathrm{SWAP}}\)}
    \end{minipage}
    \caption{
    Fixed-budget structured-target benchmarks for the representative superconducting core \(\Ksc\).
    Panel (a) compares the qutrit CZ target with its synthesized gate and reports \(F_{\rm avg}=0.999990445\).
    Panel (b) compares the qutrit CNOT target with its synthesized gate and reports \(F_{\rm avg}=0.999990511\).
    Panel (c) compares the qutrit iSWAP target with its best synthesized approximation and reports \(F_{\rm avg}=0.998129188\).
    Panel (d) compares the \(\sqrt{\mathrm{SWAP}}\) target with its best synthesized approximation and reports \(F_{\rm avg}=0.998106396\).
    Each panel shows real and imaginary parts of the target gate, synthesized gate, and elementwise difference.
    Before computing the elementwise differences, each synthesized gate is aligned to its target by the global phase that maximizes \(\operatorname{Re}\operatorname{Tr}(U_{\rm target}^{\dagger}U_{\rm synth})\).
    }
    \label{fig:app-ksc-structured-comparisons}
\end{figure*}

Selected results from the common range are summarized in Table~\ref{tab:app-robustness-summary}.
At \(\epsilon=0.20\), the Protocol-A core-wise success fractions span \(0.11\)--\(1.00\), confirming substantial direction dependence behind the median value of \(0.885\).
For Protocol B at the same amplitude, the mean target fidelity is \(0.640330\), and the minimum is \(0.108684\).
The fractions satisfying \(F_{\rm avg}\ge0.999\), \(0.99\), and \(0.95\) are \(0\), \(0\), and \(3\times10^{-5}\), respectively.
The recompilation scan was intentionally stopped after all \(20\) directions at \(\epsilon=0.20\) were complete; no recompilation results beyond this amplitude are reported.\par

The two protocols therefore support different, complementary statements.
Protocol A samples whether optimization can adapt the local layers to a perturbed but still fixed core, whereas Protocol B measures the execution sensitivity of circuits compiled for the nominal core.
The former retains a substantial population of high-performing sampled cores even at the largest reported amplitude, but the broad spread rules out any uniform numerical claim across directions.
The latter shows much stronger degradation when recompilation is forbidden.
Their logical status is the finite-precision one summarized at the end of Sec.~\ref{sec:introduction}; neither protocol yields a worst-case tolerance theorem or an analytic neighborhood guarantee.\par

The study is restricted to static offsets in the twelve-coefficient closed-system \(9\times9\) Hamiltonian model.
It does not include time-dependent noise, leakage outside the qutrit space, decoherence, finite-bandwidth distortion, readout errors, or closed-loop recalibration.
The reported quantile bands are empirical sample summaries, and individual coherent trajectories need not be monotone in \(\epsilon\).
Accordingly, Fig.~\ref{fig:ksc-robustness-protocols} should be read as a reproducible numerical comparison of recompilability and frozen-circuit sensitivity under the stated perturbation distribution and budgets.

\section{Structured-target benchmark details for \texorpdfstring{\(K_{\rm sc}\)}{Ksc}}
\label{app:sc-structured-targets}
\label{app:structured-targets}
This appendix records additional structured-target benchmark results for the representative superconducting core \(\Ksc\).
The main text reports the fixed-budget results for qutrit CZ, qutrit CNOT, qutrit iSWAP, and \(\sqrt{\mathrm{SWAP}}\).
Here we give the corresponding matrix-level comparisons.
These examples help illustrate the target dependence of the fixed-core synthesis protocol.\par
Throughout this benchmark, the superconducting core \(\Ksc\) is held fixed to the representative core defined by Table~\ref{tab:sc-representative-parameters}.
Only the five local layers in the four-core architecture are optimized.
All targets use \(100\) random restarts and \(5000\) epochs per restart, and the reported value is the best \(F_{\rm avg}\) found within this fixed budget.
The reference success threshold is \(F_{\rm avg}\ge0.999\), but threshold crossing is used only for reporting and not for early stopping.\par
The four matrix-level comparisons are collected in
Fig.~\ref{fig:app-ksc-structured-comparisons}.
The fixed-budget best values are \(0.999990445\), \(0.999990511\), \(0.998129188\), and \(0.998106396\) for qutrit CZ, qutrit CNOT, qutrit iSWAP, and \(\sqrt{\mathrm{SWAP}}\), respectively.
For the \(\sqrt{\mathrm{SWAP}}\) target, the best synthesized gate reaches
\begin{equation}
    F_{\rm avg}\!\left(\sqrt{\mathrm{SWAP}}\right)=0.998106396 .
    \label{eq:app-ksc-sqrtswap-agf}
\end{equation}
This value remains below \(F_{\rm avg}=0.999\) under the stated fixed budget.
Figure~\ref{fig:app-ksc-structured-comparisons}(d) shows the corresponding matrix comparison.
The remaining discrepancy is enough to prevent the gate from crossing the chosen success threshold.\par
Taken together, these examples show that not all structured targets are equally easy for the same fixed superconducting core.
In this run, the controlled-phase and controlled-addition targets cross the reporting threshold, whereas the two exchange-like targets do not.
This fixed-budget target dependence is a numerical observation rather than an obstruction theorem.\par
This target dependence is one concrete manifestation of the fact that the superconducting core is a hardware-motivated compromise rather than an algebraically ideal core.
The same fixed core performs strongly on the sampled Haar-random targets and on some structured gates, while the present optimizer and budget give lower fidelities for other structured targets.\par
These additional benchmarks therefore complement the main-text discussion in two ways.
First, they show that the superconducting core has nonuniform target-dependent performance even within the class of highly structured two-qutrit gates.
Second, they clarify that the limitation is not a failure on the sampled Haar-random targets, since the same core succeeds broadly on that tested ensemble, but rather target-dependent behavior under the stated fixed-core optimization protocol.
This distinction is important for interpreting the superconducting results as a hardware-motivated compromise between expressivity and implementability.

\paragraph*{Data availability.}
The numerical data and verification packages described in the appendices---the
regular-pair archive, the Haar-random benchmark outputs, the robustness scan
records, and the figure-generation inputs---are available from the authors
upon reasonable request.

\FloatBarrier
\bibliography{references}

\end{document}